\tikzset{snake it/.style={decorate, decoration=snake}}
\edef\restoreparindent{\parindent=\the\parindent\relax}
\newtheoremstyle{break}
  {\topsep}{\topsep}%
  {\upshape}{}%
  {\bfseries}{}%
  {\newline}{}%
\theoremstyle{break}
\newtheorem{theorem}{Theorem}[section]
\newtheorem{proposition}{Proposition}[section]
\newtheorem*{definition}{Definition}
\newtheorem{conjecture}{Conjecture}[section]
\def\Tr{{\rm Tr}}
\def\d{{\rm d}}
\def\i{{\rm i}}
\def\CH{{\cal H}}
\def\CN{{\cal N}}
\def\CO{{\cal O}}
\def\CP{{\cal P}}
\def\CX{{\cal X}}
\def\BC{\mathbb{C}}
\def\BF{\mathbb{F}}
\def\BR{\mathbb{R}}
\def\BZ{\mathbb{Z}}
\def\EU{\mathscr{U}}
\def\d{\mathrm{d}}
\def\SO{\mathrm{SO}}
\title{
Elliptic genera from classical error-correcting codes}
\author[a,b]{Kohki Kawabata}
\author[a]{and Shinichiro Yahagi}
\affiliation[a]{Department of Physics, Faculty of Science,
The University of Tokyo,\\
Bunkyo-Ku, Tokyo 113-0033, Japan}
\affiliation[b]{Department of Physics, Osaka University,\\
Machikaneyama-Cho 1-1, Toyonaka 560-0043, Japan}
\preprint{OU-HET-1200}
\abstract{
We consider chiral fermionic conformal field theories constructed from classical error-correcting codes and provide a systematic way of computing their elliptic genera.
We exploit the $\mathrm{U}(1)$ current of the $\mathcal{N}=2$ superconformal algebra to obtain the $\mathrm{U}(1)$-graded partition function that is invariant under the modular transformation and the spectral flow.
We demonstrate our method by constructing extremal $\mathcal{N}=2$ elliptic genera from classical codes for relatively small central charges. 
Also, we give near-extremal elliptic genera and decompose them into $\mathcal{N}=2$ superconformal characters.

}
\begin{document}
\maketitle
\flushbottom

\newpage

\section{Introduction}

Recently, the construction of conformal field theories (CFTs) from error-correcting codes has been revisited and extended to fermionic CFTs with supersymmetry.
It has been widely known that error-correcting codes can be exploited to construct CFTs for a long time.
The program initially began with chiral bosonic CFTs constructed from classical error-correcting codes~\cite{frenkel1984natural,frenkel1989vertex,Dolan:1994st}, driven by insights from the understanding of the Monstrous moonshine~\cite{conway1979monstrous}.
Afterward, via fermionization, some of their fermionic counterparts have been shown to have supersymmetry~\cite{Dixon:1988qd,Benjamin:2015ria,Harrison:2016hbq,Moore:2023zmv}.
These pioneering developments have inspired the direct construction of supersymmetric chiral CFTs from classical ternary codes~\cite{Gaiotto:2018ypj} and its generalization to classical $p$-ary codes~\cite{Kawabata:2023nlt}.\footnote{Although this paper focuses on chiral CFTs constructed using classical error-correcting codes, there is another approach that builds nonchiral CFTs from quantum error-correcting codes~\cite{Dymarsky:2020qom,Kawabata:2022jxt,Alam:2023qac}. Recently, the direction has taken significant advancement in various aspects, including the modular bootstrap program~\cite{Dymarsky:2020bps,Henriksson:2022dnu,Dymarsky:2022kwb} and the relation to holographic duality~\cite{Dymarsky:2020pzc,Henriksson:2021qkt,Angelinos:2022umf}. See~\cite{Dymarsky:2021xfc,Buican:2021uyp,Furuta:2022ykh,Yahagi:2022idq,Henriksson:2022dml,Furuta:2023xwl,Kawabata:2023usr,Kawabata:2023iss} for other related progress.}
The chiral fermionic CFTs, termed fermionic code CFTs, have demonstrated their utilities in verifying conjectured constraints on superconformal indexes~\cite{Gaiotto:2018ypj} and searching for theories with large spectral gaps~\cite{Kawabata:2023nlt}.

The construction of fermionic code CFTs exploits the relationship between classical error-correcting codes and Euclidean lattices.
Concretely, a code can specify a lattice by the method called ``Construction A"~\cite{leech1971sphere}.
Then the lattice can be regarded as a momentum lattice in a CFT and provides a set of vertex operators in the Neveu-Schwarz (NS) sector.
The Ramond (R) sector can be obtained from the shadow of the lattice, which is a set of points shifted by half of a certain vector, called the characteristic vector \cite{conway2013sphere}.

This paper aims to compute the elliptic genera of fermionic code CFTs.
In supersymmetric quantum field theories, the Witten index~\cite{Witten:1982df,Witten:1982im} plays an essential role in characterizing a theory through a topological invariant under continuous deformations preserving supersymmetry.
For $\CN=2$ superconformal theories, one can refine the Witten index to the elliptic genus~\cite{Witten:1986bf} by inserting the U(1) charge.
The elliptic genera are invariant under the modular transformation and the spectral flow and are described as weak Jacobi forms in number theory.
For a fermionic code CFT with central charge $n$, the Witten index can be computed by the partition function twisted by the fermion parity $(-1)^F$ in the Ramond sector $\CH_\mathrm{R}$~\cite{Gaiotto:2018ypj,Kawabata:2023nlt}
\begin{align}
    \Tr_{\CH_\mathrm{R}} \left[(-1)^F \,q^{L_0-\frac{n}{24}} \right]\,,
\end{align}
where $q=e^{2\pi\i\tau}$, $\tau$ is the torus modulus, and $L_0$ is the zero mode of the Virasoro generators.
To compute the elliptic genus, we extend the Witten index to the partition function graded by U(1) charge $J_0$
\begin{align}
\label{eq:graded_intro}
    \Tr_{\CH_\mathrm{R}} \left[(-1)^F \,q^{L_0-\frac{n}{24}}\,y^{J_0} \right]\,,
\end{align}
where $y=e^{2\pi\i z}$ and $z\in\BC$.

Then, we obtain the following main theorem (Theorem~\ref{theorem:weak}):
If a classical code $C$ satisfies
\begin{itemize}
    \item The Construction A lattice $\Lambda(C)$ is an odd self-dual lattice of rank $n\in12\BZ$.
    \item The minimum norm of characteristic vectors in $\Lambda(C)$ is $n/3$.
\end{itemize}
the graded torus partition function~\eqref{eq:graded_intro} of a fermionic code CFT becomes a weak Jacobi form by taking an appropriate U(1) charge $J_0$.

To extend torus partition functions to elliptic genera, we identify the U(1) current of the $\CN=2$ superconformal algebra in fermionic code CFTs.
Since elliptic genera are torus partition functions with the insertion of U(1) current describing the R-symmetry, we need to identify the R-symmetry in fermionic code CFTs.
Fermionic code CFTs consist of free chiral bosons and have U(1) currents generated by the free bosons.
We take a linear combination of them and impose some constraints to describe the R-symmetry.
We see that the constraints are satisfied if one determines a linear combination by a lattice vector called a characteristic vector~\cite{conway2013sphere,serre2012course,milnor1973symmetric}.
The corresponding U(1)-graded partition functions show the modular invariance and spectral flow invariance for the central charge being a multiple of $12$ (Proposition~\ref{prop:ell_trans}).
Furthermore, when satisfying an additional condition, the U(1)-graded partition functions turn out to be weak Jacobi forms (Theorem~\ref{theorem:weak}).
With these observations in hand, we investigate the $\CN=2$ structure of fermionic code CFTs.
One characteristic of $\CN=2$ superconformal algebra is the spectral flow connecting a periodicity condition to a different one (e.g., the anti-periodic one to the periodic one)~\cite{Schwimmer:1986mf}.
In the Hilbert space of a fermionic code CFT, we show that the spectral flow can be regarded as a shift of a momentum lattice by a characteristic vector, and an operator acting as the spectral flow arises as a vertex operator whose momentum is a characteristic vector.
Also, we consistently identify the BPS states known as chiral primary states and show that they obey a strong constraint between their conformal weights and U(1) charges~\cite{Lerche:1989uy} in fermionic code CFTs.

We demonstrate our computation of elliptic genera through the construction of extremal $\CN=2$ elliptic genera from fermionic code CFTs.
The notion of extremal CFTs was introduced to explore potential holographic duals to pure AdS$_3$ gravity~\cite{Witten:2007kt} (see also~\cite{hohn2008conformal,Jankiewicz:2006mv}).
It was generalized to theories with $\CN=2$ supersymmetry by defining extremal elliptic genera~\cite{Gaberdiel:2008xb}, with some of them realized using actual CFTs with relatively small central charges~\cite{Cheng:2014owa,Benjamin:2015ria}.
However, to our best knowledge, these extremal elliptic genera have been found independently, and there has not been a unified construction yet.
In this paper, we uniformly reproduce the extremal $\CN=2$ elliptic genera with the central charges $12,24$ in~\cite{Cheng:2014owa,Benjamin:2015ria} by employing our computation with classical codes.
We also construct other elliptic genera from classical codes and interpret them as near-extremal elliptic genera, which obey a weaker condition on extremality~\cite{Gaberdiel:2008xb}.
Moreover, we decompose these elliptic genera into $\CN=2$ superconformal characters and discuss their implications.

The organization of this paper is as follows.
In section~\ref{sec:elliptic_review}, we review $\CN=2$ superconformal field theories and their elliptic genera.
Taking the spectral flow of importance, we explain the Ramond ground states, chiral primary states, and their bounds on conformal weights and U(1) charges.
Also, we introduce elliptic genera as weak Jacobi forms and define their polar regions and extremal elliptic genera, which we will use in later sections.
Section~\ref{sec:elliptic_code} is the central part of the present paper.
Starting with a brief review of fermionic code CFTs, we develop the computation of their elliptic genera.
We obtain partition functions that return the elliptic genera for $\CN=2$ theories.
To compute elliptic genera concretely, we explain how to choose an appropriate U(1) current in fermionic CFTs constructed from $p$-ary codes.
Also, we identify the spectral flow operator and chiral primary states in fermionic code CFTs, and we see that, especially in the ternary case $(p=3)$, those expressions become much simple.
In section~\ref{sec:extremal}, we exemplify the calculation of elliptic genera by reproducing extremal elliptic genera.
We deduce classical $p$-ary codes that yield extremal elliptic genera for small prime $p$ and obtain fermionic code CFTs with extremal elliptic genera for central charges $12,24$.
Section~\ref{sec:near-extremal} deals with other examples of elliptic genera.
Considering some classical codes, we give their elliptic genera that admit an $\CN=2$ character decomposition and discuss their near-extremality.
In section~\ref{sec:discussion}, we summarize the results and their implications.
Appendix~\ref{app:N2_character} aligns representations of the $\CN=2$ algebra and their characters.

\section{Elliptic genera and weak Jacobi forms}
\label{sec:elliptic_review}

In this section, we review $\CN=2$ supersymmetric CFTs with special attention to their elliptic genera following \cite{Lerche:1987ca,Lerche:1989uy,Kawai:1993jk,Dijkgraaf:2000fq,Moore:2004fg,Manschot:2007ha,Gaberdiel:2008xb,Manschot:2008zb}. 
Section~\ref{ss:N=2} is devoted to introducing the $\CN=2$ superconformal algebra and its spectral flow.
We also review $\CN=2$ structures such as the Ramond ground states and chiral primary fields.
Noting that the elliptic genera are invariant under the modular transformation and spectral flow, we describe them with characterization as weak Jacobi forms in section~\ref{ss:weak_Jacobi}.
In section~\ref{ss:extremal_intro}, we introduce extremal and near-extremal elliptic genera defined in~\cite{Gaberdiel:2008xb}, which we will reproduce from classical error-correcting codes in section~\ref{sec:extremal} and section~\ref{sec:near-extremal}, respectively.

\subsection{$\CN=2$ superconformal symmetry}
\label{ss:N=2}

An $\CN=2$ superconformal theory has stress tensor $T(z)$, two supercurrents $G^1(z)$, $G^2(z)$, and $\SO(2)\cong \mathrm{U}(1)$ current $J(z)$ describing the R-symmetry.
It is convenient to define
\begin{align}
    G^\pm(z) = \left(G^1(z) \pm \i \,G^2(z)\right)/\sqrt{2}\,.
\end{align}
Then, the supercurrents $G^\pm(z)$ have charge $Q = \pm 1$ under the U(1) current $J(z)$. 
For an $\CN=2$ theory with the central charge $n$, these operators have the following operator product expansions (OPEs):
\begin{align}
    \begin{aligned}
    \label{eq:N=2SCA}
    T(z)T(w) &\sim \frac{\frac{n}{2}}{(z-w)^4} + \frac{2}{(z-w)^2}\,T(w) + \frac{1}{z-w}\,\partial T(w)\,,\\
    T(z)G^\pm(w) &\sim \frac{\frac{3}{2}}{(z-w)^2}\,G^\pm(w) + \frac{1}{z-w}\,\partial G^\pm(w)\,,\\
    T(z)J(w) &\sim \frac{J(w)}{(z-w)^2} + \frac{\partial J(w)}{z-w}\,,\qquad
    G^\pm(z)G^\pm(w) \sim 0\,,\\
    J(z)J(w) &\sim \frac{\frac{n}{3}}{(z-w)^2}\,,\qquad
    J(z)G^\pm(w) \sim \pm \frac{G^\pm(w)}{z-w}\,,\\
    G^+(z)G^-(w) &\sim \frac{\frac{2}{3}n}{(z-w)^3} +\frac{2J(w)}{(z-w)^2} + \frac{2T(w) + \partial J(w)}{z-w}\,.
    \end{aligned}
\end{align}
The mode expansions of the operators are
\begin{align}
    T(z) = \sum_{m\,\in\,\BZ}\frac{L_m}{z^{m+2}}\,,\qquad
    G^\pm(z) = \sum_r\frac{G^\pm_r}{z^{r+\frac{3}{2}}}\,,\qquad
    J(z) = \sum_{m\,\in\,\BZ} \frac{J_m}{z^{m+1}}\,,
\end{align}
where $r\in \BZ+1/2$ in the Neveu-Schwarz (NS) sector and $r\in\BZ$ in the Ramond (R) sector.
Accordingly, the commutation relations of the $\CN=2$ algebra are
\begin{align}
\begin{aligned}
    \left[L_m,L_k\right] &= (m-k)\,L_{m+k} + \frac{n}{12}(m^3-m)\,\delta_{m+k,0}\,,\\
    \left[L_m,G^\pm_r\right] &= \left(m/2-r\right)G^\pm_{m+r}\,,\qquad
    \left[L_m,J_k\right] = -k J_{m+k}\,,\\
    \left[J_m,J_k\right] &= (n/3)\,m\,\delta_{m+k,0}\,,\qquad\;
    \left[J_m,G^\pm_r\right] = \pm G_{m+r}^\pm\,,\\
    \left\{G^+_r,G^-_s\right\} &= 2\,L_{r+s} + (r-s)J_{r+s}+ \frac{n}{3}\left(r^2-\frac{1}{4}\right)\delta_{r+s,0}\,, 
\end{aligned}
\end{align}
and $\{G_r^+,G_s^+\} = \{G_r^-,G_s^-\} = 0$.

The $\CN=2$ superconformal algebra has the spectral flow isomorphism~\cite{Schwimmer:1986mf}. This connects different sectors (e.g., the NS sector and the R sector) by twisting the boundary condition of the operators by the U(1) current $J(z)$.
Using a twist parameter $\theta$, the spectral flow operator acting on the Hilbert space $\CH_0$ is denoted by
\begin{align}
    \EU_\theta : \CH_0 \to \CH_\theta\,,
\end{align}
where $\CH_\theta$ is the Hilbert space quantized on the twisted boundary condition with the parameter $\theta$.
Under the spectral flow, the Virasoro generators and the modes of  the U(1) current and the supercurrents transform by (\cite{Lerche:1989uy})
\begin{align} 
\begin{aligned}
\label{eq:sf_LJ}
    \EU_\theta\, L_m \,\EU_\theta^{-1} &= L_m + \theta J_m + \frac{n}{6}\theta^2\delta_{m,0}\,,&\quad 
    \EU_\theta \,J_m\, \EU_\theta^{-1} &= J_m + \frac{n}{3} \theta \delta_{m,0}\,,\\
    \EU_\theta \,G_r^+ \,\EU_\theta^{-1} &= G_{r+\theta}^+\,,& 
    \EU_\theta\, G_r^- \,\EU_\theta^{-1} &= G_{r-\theta}^-\,.
\end{aligned}
\end{align}
Accordingly, a state $\ket{h,Q}\in\CH_0$ with conformal weight $h$ and U(1) charge $Q$ transforms into the twisted state
\begin{align}
\label{eq:spct_qn}
    \EU_\theta \ket{h,Q} = \Ket{h-\theta Q +\frac{n}{6}\theta^2\,,\, Q-\frac{n}{3}\theta}\in\CH_\theta\,.
\end{align}
For $\theta \in \BZ+1/2$, the spectral flow connects the NS and R sectors. Then, the NS sector and the R sector are isomorphic in $\CN=2$ superconformal theories \cite{Schwimmer:1986mf}. For $\theta\in\BZ$, the spectral flow takes NS to NS and R to R.

A supersymmetric theory has a unitarity bound on the conformal weight in the R sector. From the $\CN=2$ algebra, we have the relation
\begin{align}
    \left\{G_0^-,G_0^+\right\} = 2 L_0 - n/12\,.
\end{align}
Since the left-hand side is a positive operator, any state in the R sector must satisfy the bound $h \geq n/24$.
One can show that the bound is saturated $(h = n/24)$ if and only if a state satisfies the condition
\begin{align}
\label{eq:ramond_gr}
    G_m^\pm \ket{\tilde{\phi}} = 0\,,\qquad (m\geq0)\,.
\end{align}
To see this, suppose $\ket{\tilde{\phi}}$ satisfies $h = n/24$, which implies that
\begin{align}
    \bra{\tilde{\phi}} \left\{G_0^-,G_0^+\right\}\ket{\tilde{\phi}} = 0 = \left|G_0^+\ket{\tilde{\phi}}\right|^2 + \left|G_0^-\ket{\tilde{\phi}}\right|^2\,.
\end{align}
Hence we have $G_0^+\ket{\tilde{\phi}} = G_0^-\ket{\tilde{\phi}}=0$.
Note that the operators $J_m$ $(m>0)$, which lower the $L_0$ eigenvalue, must annihilate a state $\ket{\tilde{\phi}}$ saturating $h=n/24$.
Otherwise, the unitarity bound does not hold.
Therefore, using the $\CN=2$ commutation relation
\begin{align}
    \left[J_m, G_0^\pm\right] = \pm G_m^\pm\,,
\end{align}
we obtain $G_m^\pm \ket{\tilde{\phi}}=0$ for $m\geq0$.
These states have the lowest energy in the R sector, so they are called the Ramond ground states.

Since the NS and R sectors are isomorphic, there exist states in the NS sector, which are connected to the Ramond ground states by the spectral flow.
Let us consider the spectral flow with $\theta = -1/2$. Then, the supercurrents transform into
\begin{align}
    \EU_{-1/2}\, G_r^+ \,\EU_{-1/2}^{-1} = G_{r-1/2}^+\,,\qquad \EU_{-1/2}\, G_r^-\, \EU_{-1/2}^{-1} = G_{r+1/2}^-\,.
\end{align}
Using these relations, the condition \eqref{eq:ramond_gr} for the Ramond ground states becomes
\begin{align}
    \EU_{-1/2} \,G_m^\pm\, \EU_{-1/2}^{-1} \,\left(\EU_{-1/2}\, \ket{\tilde{\phi}}\right) = G^\pm_{m\mp1/2}\ket{\phi} = 0 \qquad (m\geq0)\,,
\end{align}
where $\ket{\phi} = \EU_{-1/2} \ket{\tilde{\phi}}$ is in the NS sector.
Hence, the resulting state $\ket{\phi}$ satisfies the conditions
\begin{align}
\label{eq:chiral_primary}
    G^+_{-1/2}\ket{\phi} = 0\,,\qquad G^-_{m+1/2}\ket{\phi} = G^+_{m+1/2}\ket{\phi} = 0\qquad (m\geq0)\,.
\end{align}
These states are called chiral primary states \cite{Lerche:1989uy}, which can be obtained from the Ramond ground states by the spectral flow with $\theta =-1/2$.
In the opposite direction, the spectral flow with $\theta = 1/2$ transforms chiral primary states into the Ramond ground states.
Similarly, anti-chiral primary states are defined by only replacing $G^+$ with $G^-$ in \eqref{eq:chiral_primary} and they are mapped to the Ramond ground states by the spectral flow $\theta =-1/2$.

Chiral primary states are subject to a strong constraint on the conformal weight and the U(1) charge.
From the $\CN=2$ algebra, we have the relation
\begin{align}
    \left\{G^-_{1/2}, G^+_{-1/2}\right\}\ket{\phi} = (2L_0 - J_0)\ket{\phi}\,.
\end{align}
As the left-hand side is a positive operator, any state in the NS sector satisfies the bound
\begin{align}
\label{eq:uni_bound_NS}
    h\geq \frac{Q}{2}\,.
\end{align}
Chiral primary states saturate the bound and have the conformal weight $h = Q/2$.
Similarly, an anti-chiral primary state with charge $Q$ has the conformal weight $h = -Q/2$.
Additionally, we can show that any state with $h = Q/2$ is both chiral and primary.
We can also obtain other bounds on chiral primary states from the $\CN =2$ algebra
\begin{align}
    \left\{G^-_{3/2}, G^+_{-3/2}\right\} = 2L_0 - 3J_0 + 2n/3\,.
\end{align}
Since the left-hand side is a positive operator and any chiral primary state satisfies $h = Q/2$, we have the inequality 
\begin{align}
\label{eq:bound_chiralpr}
    h\leq n/6\,.
\end{align}

Via the state-operator isomorphism, the operators corresponding to chiral primary states are called chiral primary fields.
Using the inequalities~\eqref{eq:uni_bound_NS} and \eqref{eq:bound_chiralpr}, it is straightforward to see that the OPEs between chiral primary fields do not have singular terms and are well-defined without regularizations~\cite{Lerche:1989uy}.
Such OPEs are closed in chiral primary fields. Hence, the space of chiral primary fields is called a chiral ring with the product structure by the operator product.
In section~\ref{ss:spectral_flow}, we identify the spectral flow operator, the Ramond grounds states, and the chiral ring in fermionic CFTs constructed from classical error-correcting codes.

\subsection{Elliptic genera as weak Jacobi forms}
\label{ss:weak_Jacobi}

Let us consider a supersymmetric CFT with the central charge $n=6m$ $(m\in\BZ)$ on the torus with the modulus $\tau=\tau_1+\i\tau_2$.
The Witten index is defined by the Ramond sector partition function with the insertion of the fermion parity $(-1)^F$
\begin{align}
    \Tr_{\CH_\mathrm{R}} \left[(-1)^F \,q^{L_0-\frac{n}{24}}\,\bar{q}^{\bar{L}_0-\frac{n}{24}} \right]\,,
\end{align}
where  $q=e^{2\pi\i\tau}$.
For an $\CN=2$ superconformal theory, the Witten index can be refined to an elliptic genus using the U(1) current $J(z)$ of the $\CN=2$ algebra.\footnote{For another refinement called a mod-2 elliptic genus, see e.g.~\cite{Tachikawa:2023nne}.}
Following~\cite{Witten:1993jg}, we define the elliptic genus by
\begin{align}
    Z_{\mathrm{EG}}(\tau,z) = \Tr_{\CH_\mathrm{R}} \left[(-1)^F \,q^{L_0-\frac{n}{24}}\,\bar{q}^{\bar{L}_0-\frac{n}{24}} \,y^{J_0} \right]\,,
\end{align}
where $J_0$ is the zero mode of $J(z)$ and $y=e^{2\pi\i z}$ with $z\in\BC$.

If the left- and right-moving R charges $(Q, \bar{Q})$ satisfy $Q-\bar{Q}\in\BZ$ for any state, we can define the fermion parity $F = F_L + F_R$ in terms of the U(1) current \cite{Lerche:1989uy,Witten:1993jg,Hori:2003ic} 
\begin{align}
\label{eq:fermion_parity_current}
    (-1)^F = \exp{[\i\pi( J_0-\bar{J}_0)]}
\end{align}
In this paper, we restrict our attention to theories with integral U(1) charges.
Then, the modularity properties together with the spectral flow invariance and the unitarity bound in the R sector can be summarized that the elliptic genus $Z_{\mathrm{EG}}(\tau,z)$ is a weak Jacobi form of weight 0 and index $m$~\cite{Kawai:1993jk}.

\begin{definition}[\cite{Eichler:1985}]
A weak Jacobi form $\phi(\tau,z)$ of weight $w$ and index $m\in\BZ$ satisfies the transformation laws
\begin{align}
\label{eq:weak_Jacobi_md}
    \phi\left(\frac{a\tau+b}{c\tau+d},\frac{z}{c\tau+d}\right) &= (c\tau+d)^w\,e^{2\pi\i m \frac{cz^2}{c\tau+d}}\,\phi(\tau,z)\,,\qquad 
    \begin{pmatrix}
        a & b \\
        c & d 
    \end{pmatrix}\in\mathrm{SL}(2,\BZ)\,,\\ 
    \phi(\tau,z+l\tau+l') &= e^{-2\pi\i m(l^2 \tau+ 2lz)}\,\phi(\tau,z)\,,\qquad l,l'\in\BZ\,.
\label{eq:weak_Jacobi_sf}
\end{align}
where $q=e^{2\pi\i\tau},\ y=e^{2\pi\i z}$ and has a Fourier expansion in terms of the integer coefficients $c(n,l)$
\begin{align}
\label{eq:exp_weakjacob}
    \phi(\tau,z) = \sum_{n\,\geq\,0\,,\,l\,\in\,\BZ} c(n,l)\,q^n y^l\,,
\end{align}
with $c(n,l) = (-1)^w\,c(n,-l)$. 
\end{definition}

The modular transformation \eqref{eq:weak_Jacobi_md} is generated by the $T$ and $S$ transformations:
\begin{alignat}{3}
    T: (\tau,z) &\rightarrow (\tau+1,z) \,,\quad & \phi(\tau+1,z) &= \phi(\tau,z) \,,\quad && \begin{pmatrix} 1 & 1 \\ 0 & 1 \end{pmatrix} \,, \label{eq:weakJ_T}\\
    S: (\tau,z) &\rightarrow \left(-\frac{1}{\tau},\frac{z}{\tau}\right) \,,\quad & \phi\left(-\frac{1}{\tau},\frac{z}{\tau}\right) &= \tau^w e^{2\pi\i m\frac{z^2}{\tau}}\phi(\tau,z) \,,\quad && \begin{pmatrix} 0 & -1 \\ 1 & 0 \end{pmatrix} \,, \label{eq:weakJ_S}
\end{alignat}
where the matrices are in $\mathrm{SL}(2,\BZ)$. The shift of $z$ \eqref{eq:weak_Jacobi_sf} is generated by
\begin{alignat}{2}
    (\tau,z) &\rightarrow (\tau,z+1) \,,\quad && \phi(\tau,z+1)=\phi(\tau,z) \,, \label{eq:weakJ_shift_integer} \\
    (\tau,z) &\rightarrow (\tau,z+\tau) \,,\quad && \phi(\tau,z+\tau)=e^{-2\pi\i m(\tau+2z)} \phi(\tau,z) \,. \label{eq:weakJ_shift_tau}
\end{alignat}
These generators are not independent since the shift by $\tau$ \eqref{eq:weakJ_shift_tau} can be derived by combining the $S$ transformation \eqref{eq:weakJ_S} and the shift by integer \eqref{eq:weakJ_shift_integer}. In addition, if $\phi(\tau,z)$ can be written as \eqref{eq:exp_weakjacob}, then the $T$ transformation \eqref{eq:weakJ_T} and \eqref{eq:weakJ_shift_integer} are automatically satisfied. The coefficients condition $c(n,l)=(-1)^w c(n,-l)$ can be derived from $S^2$, i.e., $(\tau,z)\rightarrow (\tau,-z)$. Thus, to prove that $\phi(\tau,z)$ is a weak Jacobi form, it is enough to check the $S$ transformation \eqref{eq:weakJ_S} and the Fourier expansion \eqref{eq:exp_weakjacob}.

A weak Jacobi form $\phi(\tau,z)$ can be decomposed into a set of vector-valued modular forms $h_\mu(\tau)$ and theta functions $\theta_{m,\mu}(\tau,z)$ with $\mu\in\BZ/2m\BZ$:
\begin{align}
    \phi(\tau,z) = \sum_{\mu\,\in\,\BZ/2m\BZ} \,h_\mu(\tau)\, \theta_{m,\mu}(\tau,z)\,.
\end{align}
where
\begin{align}
    h_\mu(\tau) = \sum_{n\,=\,-\mu^2\;\mathrm{mod}\;4m} c_\mu(n)\,q^{n/4m}\,,\qquad 
    \theta_{m,\mu}(\tau,z) = \sum_{\substack{l\,\in\,\BZ\\ l\,=\,\mu\;\mathrm{mod}\; 2m}} q^{l^2/4m}\, y^l\,,
\end{align}
with $c_\mu(n) = (-1)^{2ml}\,c(\frac{n+l^2}{4m},l)$ and $l=\mu$ mod $2m$.
There has been an attempt to construct a Jacobi form from a prescribed set of coefficients \cite{Manschot:2007ha} (one can refer to \cite{niebur1974construction} as math literature).

For our purposes, it is more convenient to use the following expression. A weak Jacobi form of even weight can also be expressed as a polynomial consisting of four generators~(\cite{Eichler:1985,Gaberdiel:2008xb})
\begin{equation}   
    E_4(\tau) \,,\ E_6(\tau) \,,\ \phi_{0,1}(\tau,z) \,,\ \phi_{-2,1}(\tau,z)
\end{equation}
of weights and indices
\begin{equation}
    (w,m) = (4,0) \,,\ (6,0) \,,\ (0,1) \,,\ (-2,1)\,.
\end{equation}
Here, $E_4$ and $E_6$ are the Eisenstein series
\begin{align}
\begin{aligned}
    E_4(\tau) &= 1 + 240 \sum_{n=1}^\infty \frac{n^3 q^n}{1-q^n} = 1 + 240q + 2160q^2 + \CO(q^3) \,, \\
    E_6(\tau) &= 1 - 504 \sum_{n=1}^\infty \frac{n^5 q^n}{1-q^n} = 1 - 504q - 16632q^2 + \CO(q^3)\,,
\end{aligned}
\end{align}
and $\phi_{0,1}$ and $\phi_{-2,1}$ are given by
\begin{align}
\begin{aligned}
    \phi_{0,1}(\tau,z) &= 4 \left( \frac{\theta_2(\tau,z)^2}{\theta_2(\tau,0)^2} + \frac{\theta_3(\tau,z)^2}{\theta_3(\tau,0)^2} + \frac{\theta_4(\tau,z)^2}{\theta_4(\tau,0)^2} \right) \\
    &= y + 10 + y^{-1} + (10y^2 - 64y + 108 - 64y^{-1} + 10y^{-2})q + \CO(q^2) \,, \\[0.2cm]
    \phi_{-2,1}(\tau,z) &= \frac{\theta_1(\tau,z)^2}{\eta(\tau)^6} \\
    &= y - 2 + y^{-1} + (-2y^2 + 8y - 12 + 8y^{-1} - 2y^{-2})q + \CO(q^2) \,.
\end{aligned}
\end{align}
More explicitly, a weak Jacobi form $\phi(\tau,z)$ of weight $w\in2\BZ$ and index $m\in\BZ$ can be written as
\begin{equation} \label{eq:generator_decomposition}
    \phi(\tau,z) = \sum\, c_{q,r,s,t} \,(E_4)^q \,(E_6)^r\, (\phi_{0,1})^s\, (\phi_{-2,1})^t \,,\quad c_{q,r,s,t}\in\BR
\end{equation}
where the sum is taken for $q,r,s,t\in\BZ_{\geq 0}$ such that $4q+6r-2t=w$ and $s+t=m$. When $w=0$, the number of terms is $\mathrm{round}((m+3)^2/12)$ where $\mathrm{round}(x)=\lfloor x+\frac{1}{2} \rfloor$.

According to \cite{Gaberdiel:2008xb}, the polar region of index $m$ is defined by
\begin{equation}
    \CP^{(m)} = \left\{ (n,l) \in \BZ^2 \mid 1 \leq l \leq m \,,\ 0 \leq n \,,\ p:=4mn-l^2 < 0 \right\} \,.
\end{equation}
The term $q^ny^l$ such that $(n,l)\in\CP^{(m)}$ is called the polar term. For example, the polar terms of small even $m$ are
\begin{alignat}{2}
    m=2 &: y^2, y \quad &&\cdots 2 \text{ terms,} \\
    m=4 &: y^4, y^3, y^2, y \quad &&\cdots 4 \text{ terms,} \\
    m=6 &: y^6, y^5, y^4, y^3, y^2, y, qy^6, qy^5 \quad &&\cdots 8 \text{ terms.}
\end{alignat}
A weak Jacobi form $\phi(\tau,z)$ of weight $0$ and index $m$ is uniquely determined by the coefficients of the polar terms since the number of the polar terms of index $m$ is greater than or equal to the number of the independent terms in \eqref{eq:generator_decomposition} \cite{Gaberdiel:2008xb,Keller:2020rwi}.

\subsection{Extremal elliptic genera}
\label{ss:extremal_intro}
The NS vacuum irreducible character of the $\CN=2$ algebra at the central charge $n=6m$ is
\begin{align}
\begin{aligned}
    \mathrm{ch}_{\mathrm{vac}}^{(m)}(\tau,z) &= \Tr_{V_{0}} \left[ q^{L_0-\frac{n}{24}}\, e^{2\pi\i z J_0}\right] \\
    &= q^{-\frac{m}{4}} (1-q) \prod_{k=1}^\infty \frac{(1+yq^{k+\frac{1}{2}})(1+y^{-1}q^{k+\frac{1}{2}})}{(1-q^k)^2} \,,
\end{aligned}
\end{align}
where $V_0$ denotes the vacuum Hilbert space. Using spectral flow, we get the corresponding terms for the R sector as
\begin{align} \label{eq:R_vacuum}
\begin{aligned}
    &\sum_{\theta\in\frac{1}{2}+\BZ} \mathrm{SF}_{\theta} \, \mathrm{ch}_{\mathrm{vac}}^{(m)}(\tau,z+\tfrac{1}{2}) \\
    &:= \sum_{\theta\in\frac{1}{2}+\BZ} \Tr_{V_{0}} \left[\EU_{\theta} \, q^{L_0-\frac{n}{24}} \,e^{2\pi\i (z+\frac{1}{2})J_0} \, \EU_{\theta}^{-1} \right]\\
    &\;= (-1)^m (1-q)\, y^m \prod_{k=1}^\infty \frac{(1-yq^{k+1})(1-y^{-1}q^k)}{(1-q^k)^2} + (\text{Non-polar terms}) \,.
\end{aligned}
\end{align}
Note that only $\mathrm{SF}_{1/2}\, \mathrm{ch}_{\mathrm{vac}}^{(m)}$ is explicitly written since all terms from $\theta\neq 1/2$ are non-polar.
According to \cite{Gaberdiel:2008xb}, we define the extremal elliptic genus as the ``closest" weak Jacobi form to $\mathrm{SF}_{1/2}\, \mathrm{ch}_{\mathrm{vac}}^{(m)}$. More precisely, we call a weak Jacobi form $\phi(\tau,z)$ of weight $0$ and index $m$ as the extremal elliptic genus when the coefficients of the polar terms are equal to $\mathrm{SF}_{1/2}\, \mathrm{ch}_{\mathrm{vac}}^{(m)}$.

The polar terms of $\mathrm{SF}_{1/2}\, \mathrm{ch}_{\mathrm{vac}}^{(m)}$ at small even $m$ are
\begin{align}
    m=2 &: y^2 + 0 y \,, \label{eq:polar_m2} \\ 
    m=4 &: y^4 + 0 (y^3 + y^2 + y) \,, \label{eq:polar_m4} \\
    m=6 &: y^6 + 0 (y^5 + y^4 + y^3 + y^2 + y) + q (y^6 - y^5) \label{eq:polar_m6}
\end{align}
where the polar terms with coefficients $0$ are also shown. As already mentioned, we can uniquely determine the weak Jacobi form from these polar terms. For example, at $m=2$, a weak Jacobi form of weight 0 can be decomposed into $\phi_{0,1}^2$ and $\phi_{-2,1}^2 E_4$ as \eqref{eq:generator_decomposition}, and \eqref{eq:polar_m2} is satisfied when
\begin{equation} \label{eq:extremal_m2}
\begin{aligned}
    Z_{\mathrm{ext}}^{(m=2)}(\tau,z) &= \frac{1}{6} \phi_{0,1}^2 + \frac{5}{6} \phi_{-2,1}^2 E_4 \\
    &= y^2 + 22 + y^{-2} + \CO(q) \,.
\end{aligned}
\end{equation}
Similarly, at $m=4$, \eqref{eq:polar_m4} is satisfied when
\begin{equation} \label{eq:extremal_m4}
\begin{aligned}
    Z_{\mathrm{ext}}^{(m=4)}(\tau,z) &= \frac{1}{432} \phi_{0,1}^4 + \frac{1}{8} \phi_{0,1}^2 \phi_{-2,1}^2 E_4 + \frac{11}{27} \phi_{0,1} \phi_{-2,1}^3 E_6 + \frac{67}{144} \phi_{-2,1}^4 E_4^2 \\
    &=  y^4 + 46 + y^{-4} + \CO(q) \,.
\end{aligned}
\end{equation}
Note that such a weak Jacobi form may not exist for general $m$, and even if it does, there may be no actual CFT that realizes the extremal elliptic genus. In fact, at $m=6$, we cannot get \eqref{eq:polar_m6} from \eqref{eq:generator_decomposition} with any coefficients, which means the non-existence of the extremal elliptic genus.

Therefore, a relaxed definition of near-extremal elliptic genus is also useful. The $\beta$-truncated polar region for $\beta>0$ is defined by
\begin{equation}
\label{eq:near_extremal}
    \CP^{(m)} = \left\{ (n,l) \in \BZ^2 \mid 1 \leq l \leq m \,,\ 0 \leq n \,,\ 4mn-l^2 \leq -\beta \right\} \,.
\end{equation}
Following~\cite{Gaberdiel:2008xb}, we call a weak Jacobi form $\phi(\tau,z)$ of weight $0$ and index $m$ as a $\beta$-extremal elliptic genus when the coefficients of the terms in the $\beta$-truncated polar region are equal to $\mathrm{SF}_{1/2}\, \mathrm{ch}_{\mathrm{vac}}^{(m)}$. Note that a $\beta$-extremal elliptic genus is not unique for general $\beta$.

\section{Elliptic genera from classical codes}
\label{sec:elliptic_code}

In this section, we provide a way of computing the elliptic genera of fermionic code CFTs.
Section~\ref{ss:fermionic_code} briefly reviews the construction of fermionic CFTs from classical $p$-ary codes.
Choosing the U(1) current consistent with the $\CN=2$ algebra from fermionic code CFTs in section~\ref{ss:constraint}, we obtain the U(1)-graded partition function that is invariant under the modular transformation and the spectral flow in section~\ref{ss:ellipticgenera_code}.
Also, we identify the spectral flow operator and chiral ring of fermionic code CFTs in section~\ref{ss:spectral_flow}.

\subsection{Fermionic code CFTs}
\label{ss:fermionic_code}

In this subsection, we review the construction of fermionic CFTs from classical codes over finite fields $\BF_p=\{0,1,\cdots,p-1\}$ of prime order $p$ \cite{Kawabata:2023nlt} (see~\cite{Gaiotto:2018ypj} for the ternary case $(p=3$)).
A $p$-ary linear code of length $n$ can be defined as a subspace $C$ of $\BF_p^n$ and the dimension of $C$ is denoted by $k\leq n$ (see e.g.~\cite{macwilliams1977theory,thompson1983error,welsh1988codes,conway2013sphere,elkies2000lattices1,elkies2000lattices,justesen2004course,nebe2006self} for more details).
It is convenient to introduce a generator matrix $G$ to define a linear code.
A generator matrix $G$ is a $k\times n$ matrix of rank $k$ and generates a linear code
\begin{align}
    C = \left\{ \,c \in \BF_p^n\;\middle|\; c = x \,G \,,\;\;x\in\BF_p^k\,\right\}\,.
\end{align}
The inner product between elements $v=(v_1,\cdots,v_n)$, $w=(w_1,\cdots,w_n)\in\BF_p^n$ is given by the standard Euclidean product $v\cdot w = \sum_{i=1}^n v_i w_i\in\BR$. 
Associated with the inner product, we define the dual code of $C$ by
\begin{align}
    C^\perp = \left\{\,v\in\BF_p^n\;\middle|\; c\cdot v = 0 \mod p\,,\;\;c\in C\,\right\}\,.
\end{align}
We call $C$ self-orthogonal if $C\subset C^\perp$. Then, the dimension of $C$ must be $k\leq n/2$. Also, we call $C$ self-dual if $C=C^\perp$. In this case, $C$ must satisfy $k=n/2$, then $n\in2\BZ$. 
For the binary case $(p=2)$, we can further classify self-orthogonal codes into two classes.
A binary linear code $C$ is doubly-even if any $c\in C$ satisfies $c\cdot c\in4\BZ$. On the other hand, $C$ is called singly-even if there exists an element $c\in C$ such that $c\cdot c\in4\BZ+2$.

A linear code $C\subset\BF_p^n$ can construct a Euclidean lattice via Construction A \cite{leech1971sphere,conway2013sphere,nebe2006self}
\begin{align}
    \label{eq:def_constA_lattice}
    \Lambda(C) = \left\{\,\frac{c+p\,m}{\sqrt{p}}\in\BR^n\;\middle|\;c\in C,\;\,m\in\BZ^n\,\right\}\,,
\end{align}
which has the standard Euclidean inner product denoted by $\cdot$.
As in classical codes, the dual lattice of $\Lambda\subset \BR^n$ is 
\begin{align}
    \label{eq:dual_lattice}
    \Lambda^* = \left\{\lambda'\in\BR^n\mid \lambda\cdot \lambda'\in\BZ\, ,\,\;\lambda\in\Lambda\right\}\,.
\end{align}
A lattice is called integral if $\Lambda\subset\Lambda^*$ and self-dual if $\Lambda = \Lambda^*$.
Additionally, for an integral lattice $\Lambda$, we call $\Lambda$ even if any element $\lambda\in\Lambda$ has an even norm $\lambda\cdot\lambda\in2\BZ$, and odd if there exists an element $\lambda\in\Lambda$ such that $\lambda\cdot\lambda\in2\BZ+1$.

The Construction A lattice $\Lambda(C)$ inherits the property of a linear code $C$.
The following propositions are essential to constructing fermionic code CFTs.

\begin{proposition}[\cite{Kawabata:2023nlt}]
For an odd prime $p$, the Construction A lattice $\Lambda(C)$ is odd self-dual if and only if a linear code $C\subset\BF_p^n$ is self-dual.
\end{proposition}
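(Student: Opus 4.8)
The plan is to establish the biconditional by a direct translation between the defining condition of a self-dual code and the defining condition of a self-dual lattice, using Construction~A~\eqref{eq:def_constA_lattice} as the dictionary. The starting observation is that for $\lambda = (c+pm)/\sqrt p$ and $\lambda' = (c'+pm')/\sqrt p$ in $\mathbb{R}^n$ (with $c,c'\in\BF_p^n$ viewed as integer vectors in $\{0,\dots,p-1\}^n$ and $m,m'\in\BZ^n$), one has $\lambda\cdot\lambda' = \tfrac1p(c\cdot c' + p(c\cdot m' + c'\cdot m) + p^2 m\cdot m')$, so that $\lambda\cdot\lambda'\in\BZ$ if and only if $c\cdot c'\equiv 0 \bmod p$. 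This single congruence is the engine of the whole argument.

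First I would show that $\Lambda(C)\subseteq\Lambda(C)^*$ when $C$ is self-dual: given $\lambda,\lambda'\in\Lambda(C)$ with code parts $c,c'\in C=C^\perp$, the pairing $c\cdot c'\equiv 0\bmod p$ holds by definition of $C^\perp$, hence $\lambda\cdot\lambda'\in\BZ$, so $\Lambda(C)$ is integral. For the reverse inclusion $\Lambda(C)^*\subseteq\Lambda(C)$, I would first identify the dual lattice explicitly. A general element of $\Lambda(C)^*$ can be written as $v/\sqrt p$ with $v\in\BR^n$, and the pairing condition against all of $\Lambda(C)$ forces (by testing against $pe_i/\sqrt p = \sqrt p\, e_i$) that $v\in\BZ^n$, and then (by testing against $c/\sqrt p$ for $c\in C$) that $v\cdot c\equiv 0\bmod p$ for all $c\in C$, i.e. $v\bmod p\in C^\perp$. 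Thus $\Lambda(C)^* = \{\,v/\sqrt p \mid v\in\BZ^n,\ v\bmod p\in C^\perp\,\} = \Lambda(C^\perp)$. Consequently $\Lambda(C)^* = \Lambda(C)$ precisely when $\Lambda(C^\perp) = \Lambda(C)$; since distinct linear codes over $\BF_p$ give distinct Construction~A lattices (the code is recovered as $\sqrt p\,\Lambda(C) \bmod p$), this is equivalent to $C^\perp = C$. This proves the self-duality half in both directions at once: $\Lambda(C)$ self-dual $\iff C$ self-dual.

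It remains to check the \emph{oddness}: that $\Lambda(C)$ is always an odd lattice for $p$ an odd prime, so the conclusion is genuinely ``odd self-dual.'' Here I would exhibit an explicit vector of odd norm. The vector $\sqrt p\, e_1 = p e_1/\sqrt p$ (taking code part $0$, integer part $m=e_1$) lies in $\Lambda(C)$ and has norm $\sqrt p\,e_1\cdot\sqrt p\,e_1 = p$, which is odd since $p$ is an odd prime. Hence $\Lambda(C)$ is never even, so integrality of $\Lambda(C)$ already means odd integrality, and $\Lambda(C)$ self-dual means odd self-dual. Combining with the previous paragraph finishes the proof.

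I do not expect a serious obstacle here; the only point requiring care is the clean identification $\Lambda(C)^* = \Lambda(C^\perp)$, where one must be careful about the $\sqrt p$ normalization and make sure the test vectors used to pin down $\Lambda(C)^*$ actually lie in $\Lambda(C)$ (they do: $\sqrt p\,e_i\in\Lambda(C)$ for every code, and $c/\sqrt p\in\Lambda(C)$ for $c\in C$). A secondary subtlety is the injectivity of $C\mapsto\Lambda(C)$, which is needed to pass from $\Lambda(C^\perp)=\Lambda(C)$ to $C^\perp = C$; this follows because reducing $\sqrt p\,\Lambda(C)$ modulo $p$ recovers exactly $C$. Everything else is the elementary arithmetic of the congruence $c\cdot c'\equiv 0\bmod p$.
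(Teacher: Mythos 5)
Your proof is correct. The paper itself does not prove this proposition --- it is imported from \cite{Kawabata:2023nlt} as a known result --- and your argument is the standard one used there: the congruence $\lambda\cdot\lambda'\in\BZ \Leftrightarrow c\cdot c'\equiv 0 \bmod p$ yields the identification $\Lambda(C)^{*}=\Lambda(C^{\perp})$, injectivity of $C\mapsto\Lambda(C)$ converts lattice self-duality into code self-duality, and the vector $\sqrt{p}\,e_{1}$ of norm $p$ (odd precisely because $p$ is an odd prime) guarantees the lattice is odd rather than even, which is exactly where the hypothesis on $p$ enters and why the binary case instead requires singly-even codes.
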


\begin{proposition}[\cite{Kawabata:2023nlt}]
For $p=2$, the Construction A lattice $\Lambda(C)$ is odd self-dual if and only if a linear code $C\subset\BF_2^n$ is singly-even self-dual.
\end{proposition}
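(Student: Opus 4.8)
The plan is to prove the equivalence by tracking the norm-parity of vectors in $\Lambda(C)$ directly in terms of codewords, using that for $p=2$ the self-duality of $C$ already controls the lattice structure up to the odd/even dichotomy. First I would recall that for $p=2$, a binary code $C\subset\BF_2^n$ is self-dual iff $\Lambda(C)$ is self-dual as a lattice; this is standard (the bilinear form on $\Lambda(C)/\sqrt{2}\BZ^n$ is the $\BF_2$-pairing, so $\Lambda(C)^* = \Lambda(C^\perp)$, and $\Lambda(C)=\Lambda(C)^*\iff C=C^\perp$). Since Proposition 2.6 assumes $C$ singly-even self-dual, $\Lambda(C)$ is already self-dual, and the only thing left to decide is whether it is odd or even. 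So the proposition reduces to: \emph{a self-dual binary code $C$ gives an even lattice $\Lambda(C)$ iff $C$ is doubly-even, hence $\Lambda(C)$ is odd iff $C$ is singly-even.}

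Next I would compute norms explicitly. A general element of $\Lambda(C)$ has the form $\lambda = (c+2m)/\sqrt 2$ with $c\in C$, $m\in\BZ^n$, so
\begin{align}
\lambda\cdot\lambda = \frac{1}{2}(c+2m)\cdot(c+2m) = \frac{1}{2}\,c\cdot c + 2\,c\cdot m + 2\,m\cdot m\,.
\end{align}
The last two terms are even integers, so $\lambda\cdot\lambda \equiv \tfrac12 c\cdot c \pmod 2$, where $c\cdot c$ is taken as an ordinary integer (the Hamming weight of $c$). Hence $\lambda\cdot\lambda$ is always an integer because $c\cdot c\in2\BZ$ for any $c$ in a self-dual code (each codeword is orthogonal to itself mod $2$). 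Moreover $\lambda\cdot\lambda$ is even for \emph{all} $\lambda$ iff $c\cdot c\in4\BZ$ for all $c\in C$, i.e. iff $C$ is doubly-even; and $\Lambda(C)$ is odd iff there exists $c\in C$ with $c\cdot c\in4\BZ+2$, i.e. iff $C$ is singly-even. That is exactly the claimed equivalence, once combined with the self-duality reduction from the first step.

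The only genuine subtlety — and the step I would be most careful about — is making the "$\Lambda(C)$ self-dual $\iff$ $C$ self-dual" reduction airtight, since Proposition 2.6 packages self-duality and the odd/even condition together. I would verify $\Lambda(C)^* = \Lambda(C^\perp)$ by a direct pairing computation: for $\lambda=(c+2m)/\sqrt2$ and $\lambda'=(c'+2m')/\sqrt2$ one gets $\lambda\cdot\lambda' = \tfrac12 c\cdot c' + (\text{integer})$, so $\lambda\cdot\lambda'\in\BZ$ for all $\lambda\in\Lambda(C)$ iff $c\cdot c'\equiv 0\pmod 2$ for all $c\in C$, iff $c'\in C^\perp$ (using $c'\in\BF_2^n$, $m'\in\BZ^n$ unconstrained). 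Together with $\dim C + \dim C^\perp = n$ forcing $\det\Lambda(C)=1$ in the self-dual case, this gives $\Lambda(C)=\Lambda(C)^*\iff C=C^\perp$. Combining this with the norm computation above completes both directions of the proposition; the forward direction ($\Lambda(C)$ odd self-dual $\Rightarrow$ $C$ singly-even self-dual) follows because odd self-dual forces $C$ self-dual but not doubly-even, and the converse is the content of the two displayed computations.
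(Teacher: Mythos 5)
Your proposal is correct and follows what is essentially the standard (and, as far as one can tell, the intended) route: the paper itself does not reprove this proposition but imports it from \cite{Kawabata:2023nlt}, and the argument there is the same combination of (i) the norm computation $\lambda\cdot\lambda=\tfrac12 c\cdot c+2c\cdot m+2m\cdot m\equiv\tfrac12 c\cdot c \pmod 2$, which converts the odd/even dichotomy of $\Lambda(C)$ into the singly-/doubly-even dichotomy of $C$, and (ii) integrality plus unimodularity (index $|C|=2^{n/2}$ over $\sqrt{2}\,\BZ^n$) to equate self-duality of the lattice with self-duality of the code. The only step you leave implicit is the inclusion $\Lambda(C)^*\subset\tfrac{1}{\sqrt 2}\BZ^n$ (immediate from $\sqrt 2\,\BZ^n\subset\Lambda(C)$), which is what lets your pairing computation identify $\Lambda(C)^*$ with $\Lambda(C^\perp)$; with that noted, both directions are complete.
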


These propositions guarantee that odd self-dual lattices can be constructed from self-dual codes for an odd prime $p$ and singly-even self-dual codes for the binary case ($p=2$).
In what follows, we focus on such classes of linear codes and the corresponding odd self-dual lattices to yield fermionic CFTs.
Specifically, we define the NS sector of a fermionic code CFT by using the Construction A lattice $\Lambda(C)$, and the R sector by the shadow of $\Lambda(C)$.

To introduce the shadow of a lattice, consider an element $\chi\in\Lambda(C)\subset\BR^n$ such that 
\begin{align}
\label{eq:chrst_def}
    \chi\cdot \lambda = \lambda\cdot \lambda\mod 2\,,
\end{align}
for any $\lambda\in\Lambda(C)$.
Such an element $\chi\in\Lambda(C)$ is called characteristic \cite{conway2013sphere,serre2012course,milnor1973symmetric}.
A characteristic vector always exists for an odd self-dual lattice and the choice of a characteristic vector is not unique to a lattice.
It is known that all the characteristic vectors have norm $\chi\cdot\chi = n$ mod $8$ \cite{serre2012course,elkies1999lattices}.
Since self-dual codes $C\subset\BF_p^n$ exist only for $n\in2\BZ$, characteristic vectors of the Construction A lattice $\Lambda(C)$ have even norms:
\begin{align}
\label{eq:norm_characteristic}
    \chi\cdot\chi\in2\BZ\,,\qquad \chi\in\Lambda(C)\,.
\end{align}

Using a characteristic vector $\chi\in\Lambda(C)$, we can define the shadow of $\Lambda(C)$ by
\begin{align}
    S(\Lambda(C)) = \Lambda(C)+\frac{\chi}{2} = \left\{ \lambda+\frac{\chi}{2} \;\middle|\; \lambda\in\Lambda(C) \right\}.
\end{align}
The shadow of a lattice is not a lattice since it is not closed under addition.
While there are several choices for a characteristic vector, the shadow itself does not depend on the choice of a characteristic vector $\chi\in\Lambda(C)$.

We construct the NS sector of a fermionic CFT from the Construction A lattice using the lattice construction of chiral CFTs (see e.g.~\cite{Lerche:1988np,kac1998vertex}).\footnote{Recently, the classification program of chiral fermionic CFTs has developed. Up to now, the classification has been completed for central charges less than or equal to 24. See~\cite{BoyleSmith:2023xkd,Rayhaun:2023pgc,Hohn:2023auw}.}
Let $X(z)$ be an $n$-dimensional chiral boson.
We fix the normalization by the OPE $X(z) X(w) \sim -\log (z-w)$. 
To define the NS sector, we associate an element $\lambda\in\Lambda(C)$ of the Construction A lattice to the vertex operator
\begin{align}
    V_{\lambda}(z) = \,:e^{\i\lambda\cdot X(z)}:\,, \qquad \lambda\in\Lambda(C)\,,
\end{align}
where we omit the cocycle factor because it does not matter for our purpose.
The fundamental operators in a fermionic code CFT can be given by
\begin{align}
    L^i(z) = \i\,\partial X^i(z)\,,\qquad T(z) = -\frac{1}{2}\,:\partial X(z)\cdot \partial X(z):\,,\qquad V_{\lambda}(z) = \,:e^{\i\lambda\cdot X(z)}:\,.
\end{align}
Via the state-operator isomorphism, vertex operators $V_{\lambda}(z)$ are mapped to momentum states $\ket{\lambda}$, which are eigenstates of the Virasoro generator $L_0$ with eigenvalue $h = \lambda^2/2$.
Combining with the excitation by the bosonic oscillators $\alpha_k^i$, the NS sector of a fermionic code CFT is
\begin{align}
    \CH_{\mathrm{NS}}(C) = \left\{\left.\alpha_{-k_1}^{i_1}\cdots\alpha_{-k_r}^{i_r}\ket{\lambda}\;\right| \; \lambda\in\Lambda(C)\,,\;r\in\BZ_{\geq0}\right\}\,,
\end{align}
where the oscillators are in negative modes ($k_1,\cdots,k_r> 0$).
The conformal weight of $\alpha_{-k_1}^{i_1}\cdots\alpha_{-k_r}^{i_r}\ket{\lambda}$ is $h = \lambda^2/2 + \sum_{j=1}^r k_j$, which is an integer if $\lambda^2\in 2\BZ$ and a half-integer if $\lambda^2\in 2\BZ+1$.
From the spin-statics theorem, such a state is bosonic when $\lambda^2\in 2\BZ$ and fermionic when $\lambda^2\in 2\BZ+1$. Thus, we can define the fermion parity in the NS sector by
\begin{align}
\label{eq:fermparity_NS}
    (-1)^F = (-1)^{\chi\cdot \lambda}\,,
\end{align}
where $\chi\in\Lambda(C)$ is a characteristic vector satisfying \eqref{eq:chrst_def}.
This motivates us to divide the Construction A lattice into two parts: $\Lambda(C) = \Lambda_0\cup\Lambda_2$ where
\begin{align}
\begin{aligned}
\label{eq:lattice_NS}
    \Lambda_0 &=  \left\{ \lambda\in\Lambda(C) \mid \chi\cdot\lambda = 0 \mod 2 \right\}\,,
    \\
    \Lambda_2 &= \left\{ \lambda\in\Lambda(C) \mid \chi\cdot\lambda = 1 \mod 2 \right\}\,.
\end{aligned}
\end{align}

On the other hand, the R sector can be defined by associating each element in the shadow $S(\Lambda(C))$ to the vertex operator
\begin{align}
    V_{\lambda+\frac{\chi}{2}}(z) = \,:e^{\i(\lambda+\frac{\chi}{2})\cdot X(z)}:\,,\qquad \lambda+\frac{\chi}{2}\in S(\Lambda(C))\,.
\end{align}
The R sector of a fermionic code CFT is
\begin{align}
    \CH_{\mathrm{R}}(C) = \left\{\left.\alpha_{-k_1}^{i_1}\cdots\alpha_{-k_r}^{i_r}\Ket{\lambda+\frac{\chi}{2}}\;\right| \; \lambda+\frac{\chi}{2}\in S(\Lambda(C))\,,\;r\in\BZ_{\geq0}\right\}\,,
\end{align}
where the bosonic oscillators are in negative modes ($k_1,\cdots,k_r> 0$).
In the R sector, the fermion parity is well-defined only up to an overall factor.
In analogy with the NS sector \eqref{eq:fermparity_NS}, we give the fermion parity of $\alpha_{-k_1}^{i_1}\cdots\alpha_{-k_r}^{i_r}\Ket{\lambda+\frac{\chi}{2}}\in \CH_{\mathrm{R}}(C)$ by
\begin{align}
\label{eq:fermparity_R}
    (-1)^F = (-1)^{\chi\cdot(\lambda+\frac{\chi}{2})}\,.
\end{align}
Note that $\chi\cdot(\lambda+\frac{\chi}{2})$ is an integer from \eqref{eq:norm_characteristic}.
Thus, we can naturally divide the shadow $S(\Lambda(C))$ into two parts: $S(\Lambda(C)) = \Lambda_1\cup\Lambda_3$ where
\begin{align}
\begin{aligned}
\label{eq:lattice_R}
    \Lambda_1 &=  \left\{ \,\lambda+\frac{\chi}{2}\in S(\Lambda(C)) \;\middle|\; \chi\cdot\left(\lambda+\frac{\chi}{2}\right) = 0 \mod 2\, \right\}\,,
    \\
    \Lambda_3 &= \left\{ \,\lambda+\frac{\chi}{2}\in S(\Lambda(C)) \;\middle|\; \chi\cdot\left(\lambda+\frac{\chi}{2}\right) = 1 \mod 2\, \right\}\,.
\end{aligned}
\end{align}

Let us place a fermionic code CFT on the torus with the modulus $\tau = \tau_1 + \i\tau_2$. Then, the torus has the spacial and timelike cycles
\begin{align}
    \text{spacial}: w\sim w+2\pi\,,\qquad
    \text{timelike}: w\sim w+2\pi\tau\,,
\end{align}
where $w$ is the cylindrical coordinate.
Depending on the periodicity (correspondingly, the choice of spin structures), we have the four partition functions
\begin{align}
\begin{aligned}
\label{eq:def_partition_function_ff}
    Z_{\mathrm{NS}}(\tau;\Lambda(C)) &= \Tr_{\CH_{\mathrm{NS}}}\left[{q^{L_0-\frac{n}{24}}}\right] =  \frac{1}{\eta(\tau)^n} \sum_{\lambda\,\in\,\Lambda(C)}  q^{\frac{1}{2}\lambda^2}\,, \\
    Z_{\widetilde{\mathrm{NS}}}(\tau;\Lambda(C)) &= \Tr_{\CH_{\mathrm{NS}}}\left[{(-1)^F\,q^{L_0-\frac{n}{24}}}\right] = \frac{1}{\eta(\tau)^n} \sum_{\lambda\,\in\,\Lambda(C)} (-1)^{\chi\cdot\lambda} q^{\frac{1}{2}\lambda^2}\,, \\
    Z_{\mathrm{R}}(\tau;\Lambda(C)) &= \Tr_{\CH_{\mathrm{R}}}\left[{q^{L_0-\frac{n}{24}}}\right] = \frac{1}{\eta(\tau)^n} \sum_{\lambda\,\in\,\Lambda(C)} q^{\frac{1}{2}(\lambda+\frac{\chi}{2})^2}\,, \\
    Z_{\widetilde{\mathrm{R}}}(\tau;\Lambda(C)) &= \Tr_{\CH_{\mathrm{R}}}\left[{(-1)^F\,q^{L_0-\frac{n}{24}}}\right] = \frac{1}{\eta(\tau)^n} \sum_{\lambda\,\in\,\Lambda(C)} (-1)^{\chi\cdot(\lambda+\frac{\chi}{2})} q^{\frac{1}{2}(\lambda+\frac{\chi}{2})^2}\,.
    \end{aligned}
\end{align}
Here, $\eta(\tau)$ is the Dedekind eta function $\eta(\tau) = q^{\frac{1}{24}}\prod_{m=1}^\infty \,(1-q^m)$.
These partition functions can be collectively denoted by
\begin{align}
    \label{eq:collective_part}
    Z^{\alpha,\beta}(\tau;\Lambda(C)) = \frac{1}{\eta(\tau)^n} \sum_{\lambda\,\in\,\Lambda(C)} (-1)^{\beta\chi\cdot(\lambda+\alpha\frac{\chi}{2})} q^{\frac{1}{2}(\lambda+\alpha\frac{\chi}{2})^2}
\end{align}
where $(\alpha,\beta)$ corresponds to each sector by
\begin{align}
\label{eq:sector_notation}
    \mathrm{NS}: (0, 0)\,,\quad \widetilde{\mathrm{NS}}: (0, 1)\,,\quad
    \mathrm{R}: (1, 0)\,,\quad \widetilde{\mathrm{R}}: (1, 1)\,.
\end{align}
Note that $Z^{\alpha,\beta}(\tau;\Lambda(C))=Z^{\alpha',\beta'}(\tau;\Lambda(C))$ for integers $\alpha\equiv\alpha'$, $\beta\equiv\beta' \mod 2$.

On the torus, any modular transformation is generated by the modular $S$ transformation: $\tau\to-1/\tau$ and the modular $T$ transformation: $\tau\to\tau+1$.
Under these modular transformations, the partition functions behave as
\begin{align}
\begin{aligned}
    Z^{\alpha,\beta}(\tau+1;\Lambda(C)) &= e^{\i\pi(3\alpha^2-1)\frac{n}{12}}\, Z^{\alpha,\alpha+\beta+1}(\tau;\Lambda(C))\,,\\
    Z^{\alpha,\beta}(-1/\tau;\Lambda(C)) &= (-1)^{\alpha\beta\frac{n}{2}}\, Z^{\beta,\alpha}(\tau;\Lambda(C))\,.
\end{aligned}
\end{align}
The phases caused by the modular transformations can be canceled by coupling with $2n$ right-moving Majorana-Weyl fermions.
Then, the partition functions of a fermionic code CFT show the expected modular transformation laws.

\subsection{Constraints on U(1) current}
\label{ss:constraint}

We have introduced chiral fermionic CFTs constructed from classical codes using Euclidean lattices.
It has been known that all the fermionic code CFTs have $\CN=2$ supersymmetry in the ternary case $(p=3)$ \cite{Gaiotto:2018ypj}. 
Also, there is strong evidence for the presence of supersymmetry in a class of fermionic code CFTs in other cases $(p\neq3)$ \cite{Kawabata:2023nlt}.
For an $\CN=2$ supersymmetric theory, the elliptic genus can be computed by the partition function with the U(1) charge inserted.
Here, the corresponding U(1) current must be consistent with the $\CN=2$ algebra and describes the R-symmetry.
We need to identify the U(1) current in fermionic code CFTs to compute their elliptic genera.
This section aims to find the U(1) current by solving constraints for a U(1) current to realize the R-symmetry.

By construction, a fermionic code CFT with the central charge $n$ always has the following $\mathrm{U}(1)$ currents:
\begin{align}
    L^i(z) = \i\, \partial X^i(z)\,,\qquad (i=1,2,\cdots,n)\,.
\end{align}
Let us consider a linear combination of these U(1) currents and denote it by
\begin{align}
    J(z) = \sum_{i=1}^n a_i L^i(z) \,,
\end{align}
where $a = (a_1,a_2,\cdots,a_n)\in \BR^n$.
Although $J(z)$ gives a U(1) current in a fermionic code CFT, such a choice of the coefficients does not always realize the R-symmetry.
We figure out some conditions for the $\mathrm{U}(1)$ current to describe the R-symmetry and explicitly construct the U(1) current $J(z)$ satisfying those constraints.

The first constraint comes from the consistency with the $\CN=2$ superconformal algebra~\eqref{eq:N=2SCA}.
Since we have the OPE between a pair of the current $J(z)$
\begin{align}
    J(z) J(w) \sim \frac{a\cdot a}{(z-w)^2}\,,
\end{align}
we require $J(z)$ to satisfy the norm condition
\begin{align}
\label{eq:norm_cond}
    a\cdot a = \frac{n}{3}\,,
\end{align}
where $n$ is the central charge.

Another condition comes from the relationship \eqref{eq:fermion_parity_current} between the U(1) charge $Q$ and the fermion parity $(-1)^F$.
As shown in the previous section, both the NS and R sectors of a fermionic code CFT are built out of the vertex operators
\begin{align}
    \begin{aligned}
        V_{\kappa}(z) = \,:e^{\i\kappa \cdot X(z)}:\,,\qquad \kappa \in
        \begin{dcases}
            \Lambda(C) & (\text{NS sector})\\
            S(\Lambda(C)) & (\text{R sector})\,.
        \end{dcases}
    \end{aligned}
\end{align}
Taking the OPE between the U(1) current $J(z)$ and a vertex operator $V_{\kappa}(w)$, we obtain
\begin{align}
\label{eq:Rcharge_kap}
    J(z) \, V_{\kappa}(w) \sim \frac{a\cdot \kappa}{z-w}\,.
\end{align}
Thus, each vertex operator $V_{\kappa}(z)$ has the charge $Q = a\cdot \kappa$ with respect to the U(1) current $J(z)$.
On the other hand, combining \eqref{eq:fermparity_NS} and \eqref{eq:fermparity_R}, the fermion number of $V_\kappa(z)$ is 
\begin{align}
\label{eq:ferm_parity_kappa}
    F = \chi \cdot \kappa \mod 2\,,
\end{align}
where $\chi\in\Lambda(C)$ is a characteristic vector of $\Lambda(C)$.

Let us assume that the U(1) charge $Q$ of each state is always integral (we check the consistency soon in this subsection).
From \eqref{eq:fermion_parity_current}, the fermion parity $(-1)^F$ is the $\BZ_2$ subgroup of the R-symmetry in $\CN=2$ theories.
Then, the fermion number $F$ and the U(1) charge $Q$ must be related by
\begin{align}
    e^{\pi\i\, Q} = (-1)^F\,,
\end{align}
which results in $Q = F$ mod $2$.
From \eqref{eq:Rcharge_kap} and \eqref{eq:ferm_parity_kappa}, the U(1) charge of $V_\kappa(z)$ is given by $Q = a\cdot \kappa$ and the fermion number is $F = \chi\cdot\kappa$ mod $2$. Therefore, we obtain an additional constraint on the coefficients $a$
\begin{align}
    a\cdot \kappa = \chi\cdot \kappa \mod 2\,,
\end{align}
where $\kappa\in \Lambda(C)$ for the NS sector and $\kappa\in S(\Lambda(C))$ for the R sector.

In the NS sector, the constraint requires the coefficients $a\in\BR^n$ to satisfy
\begin{align}
\label{eq:NS_constraint}
    a\cdot \lambda = \chi\cdot\lambda = \lambda\cdot \lambda \mod 2 \qquad \lambda\in\Lambda(C)\,,
\end{align}
where we used the definition of a characteristic vector $\chi\in\Lambda(C)$.
This requires $a\in\BR^n$ to be a characteristic vector of $\Lambda(C)$. 
Since characteristic vectors constitute a coset of $2\Lambda(C)$ in $\Lambda(C)$: $\chi+2\Lambda(C) = 2 S(\Lambda(C))$ \cite{elkies1999lattices}, the vector $a$ can be written as
\begin{align}
\label{eq:chara_a_gen}
    a = \chi + 2\lambda'\,,\qquad \lambda'\in\Lambda(C)\,.
\end{align}

On the other hand, a vertex operator in the R sector takes the form $V_{\lambda+\frac{\chi}{2}}(z)$ where $\lambda\in\Lambda(C)$.
In this case, the constraint becomes
\begin{align}
    a\cdot \left(\lambda+\frac{\chi}{2}\right) = \chi\cdot \left(\lambda + \frac{\chi}{2}\right)\mod 2 \,,\qquad \lambda\in\Lambda(C)\,.
\end{align}
Assuming the constraint \eqref{eq:chara_a_gen} in the NS sector, the constraint coming from the R sector returns $\chi\cdot\lambda'=0$ mod $2$, from which we conclude
\begin{align}
\label{eq:R_constraint}
    a = \chi + 2\lambda'\,,\qquad \lambda'\in \Lambda_0\,.
\end{align}
Note that $\chi$ is a characteristic vector that fixes the fermion parity.

From the above discussion, we essentially have two constraints \eqref{eq:norm_cond} and \eqref{eq:R_constraint} on the choice of the $\mathrm{U}(1)$ current.
Now we propose to define the U(1) current $J(z)$ by taking $\lambda'=0\in\Lambda_0$.
Accordingly, we fix the coefficients $a\in\BR^n$ by taking the same vector as $\chi$:
\begin{align}
\label{eq:solution}
    a = \chi\,.
\end{align}
Then, to meet the constraints, we only need to choose a characteristic vector $\chi\in\Lambda(C)$ satisfying $\chi\cdot \chi =n/3$.
The corresponding U(1) current is
\begin{align}
\label{eq:cand_R_current}
    J(z) = \sum_{i=1}^n \,\chi_i \,L^i(z)\,.
\end{align}
The Construction A lattice $\Lambda(C)\subset\BR^n$ does not necessarily have a characteristic vector $\chi\in \Lambda(C)$ with $\chi\cdot\chi = n/3$. 
Even if present, such a characteristic vector is not unique to $\Lambda(C)$ in general.
For later convenience, we define a set of characteristic vectors with norm $n/3$ by
\begin{align}
    \CX_{n/3} = \left\{\,\chi\in2S(\Lambda(C))\;\middle|\;\chi\cdot\chi = n/3\,\right\}\,.
\end{align}
Note that this set becomes empty unless $n\in12\BZ$ since $\chi\cdot \chi = n\mod 8$.
Hence, we only have to specify an element $\chi\in\CX_{n/3}$ to define the U(1) current $J(z)$.
We will see that, for the choice of the U(1) current \eqref{eq:cand_R_current}, the U(1)-graded partition function exhibits the expected modular invariance together with the spectral flow invariance.

Once we pick up a characteristic vector $\chi\in\CX_{n/3}$ for defining the fermion parity, the conditions on the $\mathrm{U}(1)$ current are satisfied by choosing the coefficients $a\in\BR^n$ according to \eqref{eq:solution}.
For the choice of $J(z)$, a vertex operator $V_\lambda(z)$ in the NS sector has the U(1) charge $Q = \chi\cdot\lambda$ and $V_{\lambda+\frac{\chi}{2}}(z)$ in the R sector has the U(1) charge $Q = \chi\cdot\left(\lambda+\frac{\chi}{2}\right)$.
These U(1) charges are integral because all the characteristic vectors of the Construction A lattice have even norms $\chi\cdot\chi\in2\BZ$ as in~\eqref{eq:norm_characteristic}.
This is consistent with the assumption of \eqref{eq:fermion_parity_current} that we have used to determine the U(1) current $J(z)$.

Using the fact that the bosonic oscillators are neutral with respect to the U(1) current $J(z)$, the state-operator isomorphism tells us that an NS state $\alpha_{-k_1}^{i_1}\cdots\alpha_{-k_r}^{i_r}\ket{\lambda}\in \CH_{\mathrm{NS}}(C)$ has the following conformal weight and U(1) charge:
\begin{align}
    h = \frac{\lambda^2}{2} + \sum_{j=1}^r k_j \,,\qquad Q = \chi\cdot\lambda\,,
\end{align}
On the other hand, a Ramond state $\alpha_{-k_1}^{i_1}\cdots\alpha_{-k_r}^{i_r}\ket{\lambda+\frac{\chi}{2}}\in \CH_\mathrm{R}(C)$ has 
\begin{align}
\label{eq:qn_r}
    h = \frac{1}{2}\left(\lambda+\frac{\chi}{2}\right)^2 + \sum_{j=1}^r k_j \,,\qquad Q=\chi\cdot \left(\lambda+\frac{\chi}{2}\right)\,.
\end{align}
Here, $k_j$ $(j=1,2,\cdots,r)$ are positive integers.
The next section reveals that the Ramond partition function with the corresponding U(1) charge inserted shows the same modular transformation law as weak Jacobi forms.

\subsection{Elliptic genera of fermionic code CFTs}
\label{ss:ellipticgenera_code}

In this section, based on the choice of U(1) current \eqref{eq:cand_R_current}, we compute the Ramond-Ramond partition function with the corresponding U(1) charge inserted, which can be expected to return elliptic genera for $\CN=2$ superconformal theories.
Furthermore, we show that, under the modular transformation and the spectral flow, the partition functions behave in the same manner as weak Jacobi forms.

Consider a fermionic CFT constructed from a self-dual code $C\subset\BF_p^n$ (singly-even self-dual code for $p=2$) and pick a characteristic vector $\chi\in\CX_{n/3}$.
Using the U(1) current given by \eqref{eq:cand_R_current}, we define the U(1)-graded partition function 
\begin{align}
\begin{aligned}
\label{eq:ell_tr}
    Z_{\mathrm{EG}}(\tau,z\,;\Lambda(C),\chi) = \Tr_{\CH_\mathrm{R}} \left[ e^{2\pi\i \,(z+\frac{1}{2})\, J_0}\,q^{L_0-\frac{n}{24}} \right]\,,
\end{aligned}
\end{align}
where $J_n$ ($n\in\BZ$) is the mode expansion of the U(1) current $J(z)$ and $q = e^{2\pi\i\tau}$.
The conformal weight and U(1) charge of each state in the Ramond sector are given by \eqref{eq:qn_r}. Then, the partition function can be written as 
\begin{align} \label{eq:elliptic_from_lattice}
\begin{aligned}
    Z_{\mathrm{EG}}(\tau,z\,;\Lambda(C),\chi) =\frac{1}{\eta(\tau)^n}\,\sum_{\lambda\,\in\,\Lambda(C)} e^{2\pi\i\,(z+\frac{1}{2})\,\chi\cdot (\lambda+\frac{\chi}{2})}\,q^{\frac{1}{2}(\lambda+\frac{\chi}{2})^2}\,,
\end{aligned}
\end{align}
where $\eta(\tau)$ is the Dedekind eta function.
For an $\CN=2$ superconformal theory, one can expect that the graded partition function gives the elliptic genus and becomes a weak Jacobi form of weight 0 and index $n/6$.
The following proposition guarantees that the partition function exhibits the expected modular transformation for $n\in12\BZ$.

\begin{proposition}
Let $\Lambda(C)$ be an odd self-dual lattice of rank $n\in12\BZ$ and $\chi\in\CX_{n/3}$. 
Then, the partition function $Z_{\mathrm{EG}}(\tau,z\,;\Lambda(C),\chi)$ 
satisfies
\begin{align}
\label{eq:elliptic_md}
    Z_{\mathrm{EG}}\left(\frac{a\tau+b}{c\tau+d},\frac{z}{c\tau+d}\,;\Lambda(C),\chi\right) &= e^{2\pi\i m \frac{cz^2}{c\tau+d}}\,Z_{\mathrm{EG}}(\tau,z\,;\Lambda(C),\chi)\,,\\
    Z_{\mathrm{EG}}(\tau,z+l\tau+l'\,;\Lambda(C),\chi) &= e^{-2\pi\i m(l^2 \tau+ 2lz)}\,Z_{\mathrm{EG}}(\tau,z\,;\Lambda(C),\chi)\,,
\label{eq:elliptic_sf}
\end{align}
where $\begin{psmallmatrix}
    a & b \\ c& d
\end{psmallmatrix}\in \mathrm{SL}(2,\BZ)$ and $l,l'\in\BZ$, and $m=n/6$.
\label{prop:ell_trans}
\end{proposition}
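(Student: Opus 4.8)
The plan is to reduce both transformation laws to known facts about lattice theta functions, treating the U(1)-grading as a complexified shift of the theta-function argument. First I would rewrite the graded partition function $Z_{\mathrm{EG}}(\tau,z;\Lambda(C),\chi)$ as a (shifted, indefinite-argument) Siegel--Narain-type theta function over the coset $\Lambda(C)+\chi/2$ divided by $\eta(\tau)^n$. Concretely, set $\kappa = \lambda + \chi/2$ and note $\chi\cdot\kappa = \kappa\cdot\kappa - \tfrac14\chi\cdot\chi \pmod{1}$ (using $\kappa\cdot\kappa = \lambda\cdot\lambda + \lambda\cdot\chi + \tfrac14\chi\cdot\chi$ and the characteristic property $\lambda\cdot\chi \equiv \lambda\cdot\lambda$); since $\chi\cdot\chi = n/3$, this lets me absorb the phase $e^{2\pi\i z\,\chi\cdot\kappa}$ together with $q^{\kappa^2/2}$ into a sum of the shape $\sum_{\kappa} q^{\kappa^2/2} e^{2\pi\i z (a\cdot\kappa)}$ with $a = \chi$, which is exactly a theta series evaluated at a lattice-vector-valued elliptic variable $z\chi \in \mathbb{C}^n$. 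The extra $e^{\pi\i\chi\cdot\kappa}$ factor from the $z\to z+\tfrac12$ shift will contribute controlled sign phases that I track separately.

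Next I would invoke the standard transformation behavior of such theta functions. For the $S$ transformation, Poisson resummation on the shifted lattice sum, combined with self-duality of $\Lambda(C)$ (Proposition in section~\ref{ss:fermionic_code}, via the hypothesis that $\Lambda(C)$ is odd self-dual), produces a factor $(-\i\tau)^{n/2}$ and a Gaussian factor $e^{2\pi\i\, (z\chi)^2/(2\tau)} = e^{\pi\i (\chi\cdot\chi) z^2/\tau} = e^{2\pi\i m z^2/\tau}$ since $\chi\cdot\chi = n/3 = 2m$; the $(-\i\tau)^{n/2}$ precisely cancels against the $\eta(-1/\tau)^n = (-\i\tau)^{n/2}\eta(\tau)^n$ in the denominator, yielding weight $0$. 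The $T$ transformation and the integer shift $z\to z+l'$ act by overall roots of unity coming from $q^{\kappa^2/2}$ and from $\eta(\tau+1)^n$; I must check these phases cancel, which is where $n\in 12\mathbb{Z}$ enters — the eta phase is $e^{2\pi\i n/24}$ and the lattice phase, together with the shift-by-$\tfrac12$ sign structure, combines to a $24$th root of unity whose total vanishing forces $24 \mid 2n$, automatically satisfied when $12\mid n$. For the $\tau$-shift $z\to z+l\tau$ I complete the square in $\kappa$: the substitution $\kappa \mapsto \kappa - l\chi$ (legitimate because $l\chi \in \Lambda(C)$, as $\chi\in\Lambda(C)$ and $\Lambda(C)$ is a lattice) reproduces $Z_{\mathrm{EG}}$ up to $e^{-2\pi\i m(l^2\tau + 2lz)}$, again using $\chi\cdot\chi = 2m$; the half-shift sign $e^{\pi\i\chi\cdot\kappa}$ is invariant under $\kappa\mapsto\kappa - l\chi$ modulo $2$ because $\chi\cdot(l\chi) = l\,n/3 \in 2\mathbb{Z}$ when $n\in 12\mathbb{Z}$ and $l$ even, with the odd-$l$ case handled by the same bookkeeping as the $z\to z+\tfrac12$ shift.

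I would then combine these generators: since $\mathrm{SL}(2,\mathbb{Z})$ is generated by $S$ and $T$, and \eqref{eq:elliptic_sf} for general $(l,l')$ follows from the $l'=1,l=0$ and $l=1,l'=0$ cases together with \eqref{eq:elliptic_md}, the two displayed identities follow from the $S$, $T$, $z\to z+1$, and $z\to z+\tau$ cases established above. The main obstacle I anticipate is the careful tracking of phases in the interplay of three sources of signs: the $e^{\pi\i\chi\cdot\kappa}$ from the $z+\tfrac12$ insertion, the metaplectic phase from Poisson summation on the shadow (as opposed to on $\Lambda(C)$ itself), and the $\eta^n$ multiplier system. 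Getting the $S$-transformation phase exactly trivial — rather than an eighth root of unity — relies crucially on $\chi$ being characteristic with $\chi\cdot\chi \equiv n \pmod 8$ and on $n\in 12\mathbb{Z}$, so the bulk of the work is verifying that these number-theoretic congruences conspire to kill every leftover phase; the Gaussian and weight factors, by contrast, are forced immediately by $\chi\cdot\chi = n/3$ and $\mathrm{rank} = n$.
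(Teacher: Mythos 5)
Your proposal is correct and follows essentially the same route as the paper's proof: writing $Z_{\mathrm{EG}}$ as a shifted lattice theta series over $\eta(\tau)^n$, cancelling the $T$-transformation phase via the characteristic property and $\chi\cdot\chi=n/3$ against the $\eta$ multiplier, obtaining the $S$-transformation with the index-$m$ Gaussian factor by Poisson resummation and self-duality of $\Lambda(C)$, and realizing $z\to z+l\tau+l'$ as the lattice shift $\lambda\to\lambda-l\chi$. Your only superfluous worry is the even/odd-$l$ split: since $n\in12\BZ$ gives $\chi\cdot\chi=n/3\in4\BZ$, the phase bookkeeping works uniformly for all $l$, exactly as in the paper.
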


\begin{proof}
    The modular transformation by $\mathrm{SL}(2,\BZ)$ is generated by the modular $T$ transformation $\tau\to\tau+1$ and $S$ transformation $\tau\to-1/\tau$. First, we consider the $T$ transformation
    \begin{align}
        \begin{aligned}
            Z_{\mathrm{EG}}(\tau+1,z\,;\Lambda(C),\chi) = \frac{1}{\eta(\tau+1)^n} \sum_{\lambda\,\in\,\Lambda(C)} e^{\pi\i(\lambda+\frac{\chi}{2})^2} e^{2\pi\i\,(z+\frac{1}{2})\,\chi\cdot (\lambda+\frac{\chi}{2})}\,q^{\frac{1}{2}(\lambda+\frac{\chi}{2})^2}\,,
        \end{aligned}
    \end{align}
    where we used $q\to e^{2\pi\i} q$ under the $T$ transformation. By definition of a characteristic vector $\chi$, the phase factor $e^{\pi\i(\lambda+\frac{\chi}{2})^2}$ becomes
    \begin{align}
        e^{\pi\i(\lambda+\frac{\chi}{2})^2} = e^{\pi\i(\lambda\cdot\lambda+\lambda\cdot\chi)}\cdot e^{\frac{\pi\i}{4}\chi\cdot\chi} = e^{\frac{\pi\i n}{12}}.
    \end{align}
    This exactly cancels an overall phase coming from the $T$ transformation of $\eta(\tau)$. Then, we conclude the invariance of $Z_{\mathrm{EG}}(\tau,z\,;\Lambda(C),\chi)$ under $\tau\to\tau+1$, which agrees with \eqref{eq:elliptic_md}.

    Second, we consider the modular $S$ transformation by fixing the parameters $a=d=0$, $b=-1$, $c=1$. Then it implements $\tau\to-1/\tau$ and $z\to z/\tau$:
    \begin{align}
        Z_{\mathrm{EG}}(-1/\tau,z/\tau\,;\Lambda(C),\chi) = \frac{1}{\eta(-1/\tau)^n} \sum_{\lambda\,\in\,\Lambda(C)}\,e^{2\pi\i\,(\frac{z}{\tau}+\frac{1}{2}) \,\chi\cdot(\lambda+\frac{\chi}{2})}\, e^{-\frac{\pi\i}{\tau}(\lambda+\frac{\chi}{2})^2}\,.
    \end{align}
    Since the Fourier transform is given by
    \begin{align}
    \begin{aligned}
        \hat{f}(\omega) &= \int_{\BR^n}\d^n\lambda \; e^{-2\pi\i\lambda\cdot\omega}\,e^{2\pi\i\,(\frac{z}{\tau}+\frac{1}{2}) \,\chi\cdot(\lambda+\frac{\chi}{2})}\, e^{-\frac{\pi\i}{\tau}(\lambda+\frac{\chi}{2})^2}\,,\\
        &= (-\i\tau)^{\frac{n}{2}}\,e^{2\pi\i m\frac{z^2}{\tau}}\,q^{\frac{1}{2}(\omega-\frac{\chi}{2})^2}\,e^{2\pi\i\,(z+\frac{1}{2})\,\chi\cdot(-\omega+\frac{\chi}{2})}\,,
    \end{aligned}
    \end{align}
    the Poisson summation formula tells us
    \begin{align}
    \begin{aligned}
        Z_{\mathrm{EG}}(-1/\tau,z/\tau\,;\Lambda(C),\chi) &= \frac{1}{\eta(-1/\tau)^n}\,\sum_{\omega\,\in\,\Lambda(C)^*} \,\hat{f}(\omega)\,,\\
        &= \frac{e^{2\pi\i m\frac{z^2}{\tau}}}{\eta(\tau)^n} \sum_{\omega\,\in\,\Lambda(C)}\,q^{\frac{1}{2}(\omega+\frac{\chi}{2})^2}\,e^{2\pi\i\,(z+\frac{1}{2})\,\chi\cdot(\omega+\frac{\chi}{2})}\,,
    \end{aligned}
    \end{align}
    where we used $\chi\cdot\chi=n/3\in4\BZ$, the modular $S$ transformation $\eta(-1/\tau) = \sqrt{-\i\tau}\,\eta(\tau)$ and self-duality $\Lambda(C)^* = \Lambda(C)$, and we changed the variable $\omega\to-\omega$. Hence, we conclude the modular $S$ transformation \eqref{eq:elliptic_md}.

    Finally, we consider the transformation \eqref{eq:elliptic_sf} under the spectral flow:
    \begin{align}
    \begin{aligned}
        e^{2\pi\i m(l^2\tau+2lz)}\,Z_{\mathrm{EG}}(\tau,z+l\tau+l') &= \frac{1}{\eta(\tau)^n} \sum_{\lambda\,\in\,\Lambda(C)}e^{2\pi\i \,(z+\frac{1}{2})\,\chi\cdot\{(\lambda+l\chi)+\frac{\chi}{2}\}} \,q^{\frac{1}{2}\{(\lambda+l\chi)+\frac{\chi}{2}\}^2}\,,\\
        &= \frac{1}{\eta(\tau)^n}\sum_{\lambda\,\in\,\Lambda(C)} e^{2\pi\i(z+\frac{1}{2})\chi\cdot(\lambda+\frac{\chi}{2})}\,q^{\frac{1}{2}(\lambda+\frac{\chi}{2})^2}\,,
    \end{aligned}
    \end{align}
    where we used $\chi\cdot\chi=n/3\in 2\BZ$ and changed the variable $\lambda\to\lambda-l\chi\in\Lambda(C)$ ($l\in\BZ$). (The above transformation suggests that the shift $z\to z+l\tau$ can be understood as the shift of $\Lambda(C)$ by $l\chi$ in terms of a lattice.)
\end{proof}

Since the above proposition only ensures the modular transformation of $Z_{\mathrm{EG}}(\tau,z\,;\Lambda(C),\chi)$, we need to require an additional condition to ensure that it is a weak Jacobi form.
In the presence of supersymmetry, any operator in the Ramond sector must satisfy $h\geq n/24$, which is a unitarity bound from the superconformal algebra.
Then, from \eqref{eq:ell_tr}, the U(1)-graded partition function can be expanded as
\begin{align}
\label{eq:exp_weak}
    Z_{\mathrm{EG}}(\tau,z\,;\Lambda(C),\chi) = \sum_{n\,\geq\,0\,,\,l\,\in\,\BZ} c(n,l)\,q^n y^l\,.
\end{align}
In this case, by combining the transformation law, we can conclude that $Z_{\mathrm{EG}}(\tau,z\,;\Lambda(C),\chi)$ is a weak Jacobi form.

For fermionic code CFTs, the unitarity bound $h\geq n/24$ in the Ramond sector can be rephrased as in the following proposition.

\begin{proposition}
    Let $\Lambda(C)$ be an odd self-dual lattice.
    Then, the Ramond sector satisfies $h\geq n/24$ if and only if any characteristic vector $\chi\in2S(\Lambda(C))$ satisfies $\chi\cdot\chi\geq n/3$.
\label{prop:unitarity_bound}
\end{proposition}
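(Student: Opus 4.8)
The plan is to reduce the unitarity bound to a statement about the minimal norm in the shadow $S(\Lambda(C))$, and then to translate that into the minimal norm among characteristic vectors using the identity $\chi+2\Lambda(C)=2S(\Lambda(C))$ recalled before \eqref{eq:chara_a_gen}.

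First I would note that, by \eqref{eq:qn_r}, a Ramond state $\alpha_{-k_1}^{i_1}\cdots\alpha_{-k_r}^{i_r}\Ket{\lambda+\frac{\chi}{2}}$ has conformal weight $h=\frac12(\lambda+\frac{\chi}{2})^2+\sum_{j=1}^r k_j$ with $k_j\in\BZ_{>0}$, so that $h\geq\frac12(\lambda+\frac{\chi}{2})^2$, with equality for the pure momentum state $r=0$. Hence
\begin{align}
    \min_{\CH_\mathrm{R}(C)} h \;=\; \frac12\min_{\lambda\,\in\,\Lambda(C)}\Bigl(\lambda+\frac{\chi}{2}\Bigr)^2 \;=\; \frac12\min_{v\,\in\,S(\Lambda(C))} v\cdot v \,,
\end{align}
the minimum being attained because $S(\Lambda(C))$ is a translate of a lattice, hence a uniformly discrete set whose squared norms do not accumulate from below; moreover this minimum is independent of the chosen characteristic vector since the shadow is.

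Next I would use the bijection $v\mapsto 2v$ from $S(\Lambda(C))$ onto the set of all characteristic vectors $2S(\Lambda(C))=\chi+2\Lambda(C)$, under which squared norms scale by a factor $4$. Therefore $\min_{v\in S(\Lambda(C))} v\cdot v=\tfrac14\min_{\chi'}\chi'\cdot\chi'$ with the minimum running over characteristic vectors, and combining the two displays yields
\begin{align}
    \min_{\CH_\mathrm{R}(C)} h \;=\; \frac18\,\min_{\chi'\,\in\,2S(\Lambda(C))}\chi'\cdot\chi'\,.
\end{align}
Thus the condition that $h\geq n/24$ for every Ramond state, i.e.\ $\min_{\CH_\mathrm{R}(C)} h\geq n/24$, is equivalent to $\min_{\chi'}\chi'\cdot\chi'\geq n/3$, i.e.\ to every characteristic vector $\chi'\in 2S(\Lambda(C))$ satisfying $\chi'\cdot\chi'\geq n/3$. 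This establishes both implications at once.

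I do not expect a genuine obstacle here; the proof is essentially a rephrasing. The only points requiring care are: (i) the oscillator contribution $\sum_j k_j$ is a nonnegative integer, so the Ramond-sector minimum is realized on a pure momentum state and cannot be lowered by descendants; (ii) the shadow $S(\Lambda(C))$, hence its set of norms, does not depend on which characteristic vector is chosen to define it; and (iii) the rescaling of norms by $4$ under the identification of $S(\Lambda(C))$ with the set of characteristic vectors. If elaboration were wanted, one might also remark that although $\CX_{n/3}\neq\emptyset$ would force $n\in 12\BZ$ via $\chi'\cdot\chi'\equiv n \bmod 8$, no such divisibility input is needed for the stated equivalence itself.
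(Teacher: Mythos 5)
Your proof is correct and follows essentially the same route as the paper: both identify the Ramond momenta $\lambda+\frac{\chi}{2}$ with halves of characteristic vectors via $\chi'=\chi+2\lambda$ (equivalently your bijection $v\mapsto 2v$ onto $2S(\Lambda(C))$) and translate $h\geq n/24$ into $\chi'\cdot\chi'\geq n/3$. Your additional remarks on the oscillator contributions, attainment of the minimum, and independence of the choice of $\chi$ are just a more explicit spelling-out of steps the paper leaves implicit.
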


\begin{proof}
    Let us rewrite the bound $h\geq n/24$ in the Ramond sector. 
    We choose an element $\chi\in 2S(\Lambda(C))$.
    Then, from \eqref{eq:qn_r}, the unitarity bound becomes
    \begin{align}
        \frac{1}{2}\left(\lambda+\frac{\chi}{2}\right)^2\geq \frac{n}{24}\,.
    \end{align}
    Using a characteristic vector $\chi\in2S(\Lambda(C))$, any characteristic vector can be represented by $\chi' = \chi + 2\lambda$ ($\lambda\in\Lambda(C)$).
    Therefore, the above inequality can be written as
    \begin{align}
        \frac{1}{2}\left(\frac{\chi'}{2}\right)^2\geq \frac{n}{24} \quad\Leftrightarrow\quad \chi'\,^2 \geq \frac{n}{3}\,.
    \end{align}
    Therefore, the unitarity bound in the R sector is equivalent to the bound $\chi\cdot\chi\geq n/3$ for any characteristic vector.
\end{proof}

Therefore, if one assumes that the minimum norm of characteristic vectors is $n/3$, then the U(1)-graded partition function $Z_{\mathrm{EG}}(\tau,z\,;\Lambda(C),\chi)$ has a Fourier expansion~\eqref{eq:exp_weak}.
Combining Proposition~\ref{prop:ell_trans}, we arrive at the following theorem.

\begin{theorem}
    Let $\Lambda(C)$ be an odd self-dual lattice of rank $n\in12\BZ$ and the minimum norm of characteristic vectors in $\Lambda(C)$ is $n/3$.
    For an element $\chi\in\CX_{n/3}$, $Z_{\mathrm{EG}}(\tau,z\,;\Lambda(C),\chi)$ is a weak Jacobi form of weight 0 and index $m = n/6$.
    \label{theorem:weak}
\end{theorem}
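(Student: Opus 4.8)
The plan is to verify directly the three defining clauses of a weak Jacobi form of weight $w=0$ and index $m=n/6$ in the Definition above: the modular transformation law \eqref{eq:weak_Jacobi_md}, the elliptic transformation law \eqref{eq:weak_Jacobi_sf}, and the Fourier expansion \eqref{eq:exp_weakjacob} with integer coefficients obeying $c(n,l)=c(n,-l)$. Note first that $n\in12\BZ$ makes $m=n/6\in2\BZ\subset\BZ$, so the index is admissible, and that $\chi\in\CX_{n/3}$ is in particular a characteristic vector of $\Lambda(C)$ since the characteristic vectors are exactly the coset $2S(\Lambda(C))$. The whole argument then amounts to assembling Proposition~\ref{prop:ell_trans} and Proposition~\ref{prop:unitarity_bound} together with the bookkeeping remark (already noted in the text) that a function with the right transformation behaviour whose $q$-expansion has only non-negative integral powers and integer coefficients is automatically a weak Jacobi form.

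For the two transformation laws I would simply invoke Proposition~\ref{prop:ell_trans}, which applies because $\Lambda(C)$ is odd self-dual of rank $n\in12\BZ$ and $\chi\in\CX_{n/3}$, and which yields \eqref{eq:elliptic_md}--\eqref{eq:elliptic_sf}; these are exactly \eqref{eq:weak_Jacobi_md}--\eqref{eq:weak_Jacobi_sf} with $w=0$ and $m=n/6$.

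The substance lies in the Fourier expansion. For the powers of $y$: in \eqref{eq:elliptic_from_lattice} each lattice point contributes $e^{2\pi\i(z+\frac12)\chi\cdot(\lambda+\frac{\chi}{2})}=(-1)^{\chi\cdot(\lambda+\frac{\chi}{2})}y^{\chi\cdot(\lambda+\frac{\chi}{2})}$, and $\chi\cdot(\lambda+\frac{\chi}{2})=\chi\cdot\lambda+\tfrac12\chi\cdot\chi\in\BZ$ since $\chi\cdot\lambda\in\BZ$ and $\chi\cdot\chi\in2\BZ$ by \eqref{eq:norm_characteristic} (equivalently, this is the $z\to z+1$ periodicity). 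For the powers of $q$: modular $T$-invariance, part of Proposition~\ref{prop:ell_trans}, forces the exponents of $q$ in $Z_{\mathrm{EG}}$ to be integers, while the hypothesis that the minimum characteristic norm equals $n/3$ lets Proposition~\ref{prop:unitarity_bound} give $h\geq n/24$, i.e.\ $L_0-\frac{n}{24}\geq0$, for every state of $\CH_{\mathrm{R}}(C)$; together these place all $q$-exponents in $\BZ_{\geq0}$. The coefficients $c(n,l)$ are integers because $\eta(\tau)^{-n}=q^{-n/24}\prod_{m\geq1}(1-q^m)^{-n}$ has integer $q$-coefficients and the lattice sum contributes, for each fixed power of $q$ and $y$, a finite signed count $\sum(-1)^{\chi\cdot(\lambda+\frac{\chi}{2})}$ of lattice points; equivalently $c(n,l)$ is a $(-1)^F$-weighted graded dimension of $\CH_{\mathrm{R}}(C)$.

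Finally, $c(n,l)=c(n,-l)$ (matching $(-1)^wc(n,-l)$ at $w=0$) follows from $S^2$ applied to the transformation law established above, which sends $(\tau,z)\to(\tau,-z)$ with trivial automorphy factor; one can also see it directly from \eqref{eq:elliptic_from_lattice} via the involution $\lambda\mapsto-\lambda-\chi$ of $\Lambda(C)$, well defined because $\chi\in\Lambda(C)$, under which $\lambda+\frac{\chi}{2}\mapsto-(\lambda+\frac{\chi}{2})$, so $(\lambda+\frac{\chi}{2})^2$ is unchanged and $\chi\cdot(\lambda+\frac{\chi}{2})$ changes sign, making the sum even in $z$. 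Collecting these, $Z_{\mathrm{EG}}(\tau,z\,;\Lambda(C),\chi)$ satisfies every clause of the Definition and is a weak Jacobi form of weight $0$ and index $n/6$. The one place where the full hypothesis is genuinely needed beyond Proposition~\ref{prop:ell_trans} is the non-negativity of the $q$-powers, which relies on Proposition~\ref{prop:unitarity_bound} and hence on the minimum characteristic norm being exactly $n/3$; I expect that to be the only real point, the rest being Proposition~\ref{prop:ell_trans} plus elementary bookkeeping.
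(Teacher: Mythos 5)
Your proposal is correct and follows essentially the same route as the paper: invoke Proposition~\ref{prop:ell_trans} for the transformation laws and Proposition~\ref{prop:unitarity_bound} (with the minimum-characteristic-norm hypothesis) for the non-negativity of the $q$-powers, then conclude via the earlier remark that the transformation laws plus the Fourier expansion \eqref{eq:exp_weakjacob} suffice. The extra bookkeeping you supply (integrality of the $y$- and $q$-exponents and of the coefficients, and the symmetry $c(n,l)=c(n,-l)$ via the involution $\lambda\mapsto-\lambda-\chi$) is accurate and simply makes explicit what the paper leaves implicit.
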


Under the assumption in Theorem~\ref{theorem:weak}, the partition function $Z_{\mathrm{EG}}(\tau,z\,;\Lambda(C),\chi)$ becomes a weak Jacobi form of weight 0 and index $n/6$.
This is the characterization of the elliptic genus, so it is expected to return the elliptic genus.
In section \ref{sec:extremal}, we exploit the method described above to reproduce extremal elliptic genera, which gives further evidence of the validity of our computation of elliptic genera.

Note that the U(1)-graded partition function depends on the choice of a characteristic vector $\chi\in\CX_{n/3}$.
Generally, there can be several inequivalent choices for $\chi\in\CX_{n/3}$, which give different weak Jacobi forms $Z_{\mathrm{EG}}(\tau,z\,;\Lambda(C),\chi)$.
This can reflect that there can exist several ways of realizing the $\CN=2$ superconformal symmetry depending on the choice of the U(1) current.
In what follows, we separately discuss how to choose characteristic vectors in fermionic code CFTs for an odd prime $p$ and the binary case $(p=2)$.

\paragraph{Characteristic vector for odd prime $p$}

For an odd prime $p$, the Construction A lattice is obtained from a self-dual code $C\subset\BF_p^n$.
Then, a characteristic vector of $\Lambda(C)$ is given by the following proposition, which is an improved statement of Proposition~3.7 in~\cite{Kawabata:2023nlt}.

\begin{proposition} 
Let $p$ be an odd prime and $C\subset\BF_p^n$ a self-dual code. Then $\chi\in\BR^n$ is a characteristic vector of the Construction A lattice $\Lambda(C)$ if and only if $\chi=m/\sqrt{p}\in\Lambda(C)$ where each component $m_i$ is an odd integer for $i\in\{1,2,\cdots,n\}$.
\label{prop:characteristic_odd}
\end{proposition}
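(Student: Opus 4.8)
The plan is to characterize characteristic vectors of $\Lambda(C)$ directly from the defining condition $\chi\cdot\lambda \equiv \lambda\cdot\lambda \pmod 2$ for all $\lambda\in\Lambda(C)$, using the explicit description \eqref{eq:def_constA_lattice} of the Construction A lattice. First I would note that $\Lambda(C)$ is generated as a $\BZ$-module by the vectors $c/\sqrt{p}$ for $c$ running over a generating set of $C$ together with the vectors $\sqrt{p}\,e_i$ for the standard basis vectors $e_i$ (since $p\BZ^n/\sqrt p \subset \Lambda(C)$). Because the condition $\chi\cdot\lambda \equiv \lambda\cdot\lambda \pmod 2$ is additive-compatible in the right way (one checks $\chi\cdot(\lambda+\mu) - (\lambda+\mu)^2 \equiv (\chi\cdot\lambda-\lambda^2)+(\chi\cdot\mu-\mu^2) + 2\lambda\cdot\mu$, and the cross term is even once we know $\Lambda(C)$ is integral, which it is since $\Lambda(C)$ is self-dual), it suffices to test the characteristic condition on this generating set.

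Next I would write $\chi = m/\sqrt p$ with $m\in\BZ^n$ — this is legitimate because $\chi\in\Lambda(C)$ forces $\chi$ to have this shape by \eqref{eq:def_constA_lattice}. Testing against $\lambda = \sqrt p\, e_i$ gives $\chi\cdot\lambda = m_i$ and $\lambda\cdot\lambda = p$, so the condition reads $m_i \equiv p \equiv 1 \pmod 2$ (using $p$ odd), i.e. every $m_i$ is odd. This establishes necessity of the parity condition on the components of $m$. For sufficiency, suppose $\chi = m/\sqrt p\in\Lambda(C)$ with all $m_i$ odd; I must check $\chi\cdot\lambda\equiv\lambda^2\pmod 2$ for a general $\lambda = (c+pk)/\sqrt p$ with $c\in C$, $k\in\BZ^n$. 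Compute $\chi\cdot\lambda = \tfrac1p\, m\cdot(c+pk) = \tfrac1p m\cdot c + m\cdot k$ and $\lambda^2 = \tfrac1p(c+pk)^2 = \tfrac1p c\cdot c + 2 c\cdot k + p k\cdot k$. Modulo $2$ the terms $m\cdot k$ and $pk\cdot k = k\cdot k$ versus... here I would reduce everything mod $2$ and use that $m_i$ odd means $m \equiv (1,\dots,1)\pmod 2$, so $m\cdot k \equiv \sum k_i \equiv k\cdot k \pmod 2$; the terms $m\cdot k$ and $pk\cdot k$ then match. What remains is to show $\tfrac1p(m\cdot c - c\cdot c)$ is an even integer. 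Since $\chi\in\Lambda(C)$ we also know $m/\sqrt p \in \Lambda(C)$ so $m = c' + p k'$ for some $c'\in C$, $k'\in\BZ^n$; then $m\cdot c \equiv c'\cdot c \equiv 0 \pmod p$ because $C$ is self-dual, and similarly $c\cdot c\equiv 0\pmod p$, so $\tfrac1p(m\cdot c - c\cdot c)$ is at least an integer — call it $N$. I then need $N$ even, which I would extract by comparing with the known fact \eqref{eq:norm_characteristic}, \eqref{eq:norm_cond}-region that a characteristic vector has norm $\equiv n \pmod 8$, or more directly by noting the self-orthogonality $c'\cdot c \equiv 0$ actually holds mod $p$ only, so the parity of $N$ needs a separate argument.

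I expect the main obstacle to be exactly this last integrality-and-parity bookkeeping: showing that $\tfrac1p(m\cdot c - c\cdot c) \in 2\BZ$ rather than merely $\BZ$, where the prime $p$ being \emph{odd} (so that $\gcd(p,2)=1$ and reductions mod $2$ and mod $p$ decouple cleanly via CRT) is essential and is presumably where the hypothesis is used. A clean way around it may be to argue abstractly instead: characteristic vectors of an odd self-dual lattice form a single coset of $2\Lambda(C)$ (as already recalled in the text, $\chi + 2\Lambda(C) = 2S(\Lambda(C))$), so it is enough to exhibit \emph{one} vector of the claimed all-odd-components form that is characteristic, and then show the all-odd-components property is invariant under adding $2\lambda$ for $\lambda\in\Lambda(C)$; the latter amounts to checking that $2\lambda \in \frac{2}{\sqrt p}\BZ^n$ has components in $\frac{1}{\sqrt p}\cdot 2\BZ$, i.e. even numerators, which is immediate from $2\lambda = 2(c+pk)/\sqrt p$ and $2c + 2pk$ having all-even entries. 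The natural candidate for the base characteristic vector is $\chi_0 = (1,1,\dots,1)\cdot\sqrt p / p = (1,\dots,1)/\sqrt p$ when $(1,\dots,1)\in C$ — or more robustly $\chi_0 = \sqrt p\,(1,\dots,1)$ reduced into $\Lambda(C)$ — and verifying $\chi_0$ is characteristic reduces to the single-vector computation above, which is short. I would present the proof along these coset lines, as it isolates the one genuine computation and makes the role of $p$ odd transparent.
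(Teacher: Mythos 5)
Your direct argument is set up correctly (the reduction to a generating set, the necessity step via $\lambda=\sqrt p\,e_i$, and the mod-2 matching of the $m\cdot k$ and $p\,k\cdot k$ terms are all fine), but it stops exactly at the crux: showing that $N=\tfrac1p(m\cdot c-c\cdot c)$ is \emph{even}, which you explicitly leave as ``needs a separate argument.'' That step is the entire content of the sufficiency direction, and it has a one-line resolution which is precisely the device the paper's proof is built on: for an integer $N$ and odd $p$, $N\equiv pN \pmod 2$. Applied to the integers $\chi\cdot\lambda$ and $\lambda\cdot\lambda$ (integral because $\Lambda(C)$ is self-dual, hence integral), this turns the characteristic condition into a statement about integer vectors, $m\cdot s\equiv s\cdot s\equiv\sum_i s_i\pmod 2$ for every $s\in\BZ^n$ with $s/\sqrt p\in\Lambda(C)$, from which both directions follow at once; in your notation it gives directly $N\equiv pN=m\cdot c-c\cdot c\equiv\sum_i c_i-\sum_i c_i^2\equiv 0\pmod 2$, since all $m_i$ are odd and $c_i^2\equiv c_i$. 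Without this observation (or an equivalent) your main route is incomplete at its decisive point.

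The coset fallback does not repair this as written. Exhibiting one characteristic vector with all-odd numerators (e.g.\ $\sqrt p\,\mathbf{1}=(p,\dots,p)/\sqrt p$) and noting that adding $2\lambda$ preserves the all-odd property only proves the ``only if'' direction: every characteristic vector has the claimed form. For the ``if'' direction you would also need that every all-odd vector of $\Lambda(C)$ lies in the characteristic coset, i.e.\ that the difference of two all-odd vectors of $\Lambda(C)$ lies in $2\Lambda(C)$; this is true, but it needs an argument you do not give (the difference is $(m-m')/\sqrt p$ with $m-m'$ even, and $(m-m')/2$ reduces mod $p$ to $2^{-1}(c'-c'')\in C$ because $2$ is invertible modulo the odd prime $p$ and $C$ is linear). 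Moreover, your verification that the base vector is characteristic ``reduces to the single-vector computation above''---that is, to the very computation you left open. By contrast, the paper's proof multiplies the integer quantities by the odd prime $p$, reduces everything mod $2$, notes that $\chi\cdot\lambda\in\BZ$ forces $\chi\in\Lambda(C)^*=\Lambda(C)$ so $\chi=m/\sqrt p$, and reads off both directions from $\sum_i(m_i-1)s_i\equiv0\pmod 2$ (using vectors like $(p,0,\dots,0)/\sqrt p$ for necessity), with no case analysis or coset machinery.
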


\begin{proof}
We explicitly prove the necessary condition (``only if"). From the construction of $\Lambda(C)$, $\lambda\in\Lambda(C)$ can be written as $\lambda=\frac{1}{\sqrt{p}}s,\, s\in\mathbb{Z}^n$. Since $\Lambda(C)$ is self-dual and $p$ is odd, $\lambda\cdot\lambda$ is an integer and
\begin{equation}
    \lambda\cdot\lambda \equiv p\,\lambda\cdot\lambda = s\cdot s \equiv \sum_{i=1}^n s_i \mod 2\,.
\end{equation}
On the other hand, if $\chi$ is a characteristic vector, $\chi\cdot\lambda$ must be an integer, which means that $\chi\in\Lambda(C)$. Then $\chi$ can be written as $\chi=m/\sqrt{p}\,,\, m\in\mathbb{Z}^n$ and
\begin{equation}
    \chi\cdot\lambda \equiv p\,\chi\cdot\lambda = \sum_{i=1}^n m_i\,s_i \mod 2\,.
\end{equation}
Since any $\Lambda(C)$ contains vectors with one non-zero element such as $\lambda=\frac{1}{\sqrt{p}}(p,0,\cdots,0)$, the definition of characteristic vector:
\begin{equation}
    \forall \lambda\in\Lambda(C)\,,\ \lambda\cdot\lambda = \chi\cdot\lambda \mod 2
\end{equation}
is satisfied if and only if all $m_i$ are odd.
\end{proof}

Table \ref{tab:chara_vecs_odd} shows characteristic vectors $\chi=\frac{1}{\sqrt{p}}(c+pm)$ with $\chi\cdot\chi = n/3$. For $p>3$ (and $p=2$ as we will see shortly), the number of types increases as $n$ becomes larger.
For $p=3$, there is only one type of vector at any $n$.
In particular, if $C$ contains the codeword $\mathbf{1}_n\in C$, we can take the characterstic vector $\chi = \mathbf{1}_n/\sqrt{3}$. Then, the U(1)-graded partition function can be written in terms of the complete weight enumerator of $C$ as
\begin{align} \label{eq:eg_by_cwe_p3}
\begin{aligned}
    Z_{\mathrm{EG}}(\tau,z;\Lambda(C),\chi) &= \frac{1}{\eta(\tau)^n}\, \sum_{\lambda\,\in\,\Lambda(C)}\, e^{2\pi\i (z+\frac{1}{2}) \chi\cdot(\lambda+\frac{\chi}{2})} \,q^{\frac{1}{2}(\lambda+\frac{\chi}{2})^2}\,,\\
    &= \frac{1}{\eta(\tau)^n} W_C(\{f_a(\tau,z)\})\,,
\end{aligned}
\end{align}
Here, the complete weight enumerator of $C\subset\BF_3^n$ is defined by
\begin{align}
    W_C(\{x_a\}) = \sum_{c\,\in \,C}\, \prod_{a\,\in\,\BF_3} \,x_a^{\mathrm{wt}_a(c)}\,,
\end{align}
where $\mathrm{wt}_a(c) = |\{i\mid c_i = a\}|$, and
\begin{align}
    f_a(\tau,z) = \sum_{m\,\in\,\BZ} e^{2\pi\i (z+\frac{1}{2})(m+\frac{a}{3}+\frac{1}{6})} q^{\frac{3}{2}(m+\frac{a}{3}+\frac{1}{6})^2}\,,\qquad a\in\BF_3\,.
\end{align}

\clearpage
\renewcommand{\arraystretch}{1.3}

\begin{table}[tbp]
\caption{The characteristic vectors $\chi\in 2S(\Lambda(C)) \subset \BR^n$ with $\chi\cdot\chi = n/3$ and the corresponding vectors in $\BF_p^n$ for odd prime $p=3,5$. The meaning of ``corresponding" is that there exists $m\in\BZ^n$ such that $\chi=\frac{1}{\sqrt{p}}(c+pm)$. For example, $([\pm 1 \times 2,\ 0 \times 10])$ refers to a vector such that two components are $1$ or $-1$ and ten components are $0$.}
\label{tab:chara_vecs_odd}
\centering
\begin{minipage}{\textwidth}
\subcaption{$p=3$}
\begin{equation*}
\begin{array}{c@{\hspace{2em}}c@{\hspace{2em}}c}
    n & \chi\in \CX_{n/3} & c\in C \\ \hline
    12 & \frac{1}{\sqrt{3}}([\pm 1 \times 12]) & ([(1 \text{ or } 2) \times 12]) \\
    24 & \frac{1}{\sqrt{3}}([\pm 1 \times 24]) & ([(1 \text{ or } 2) \times 24]) \\
    36 & \frac{1}{\sqrt{3}}([\pm 1 \times 36]) & ([(1 \text{ or } 2) \times 36]) \\
\end{array}
\end{equation*} \\
\end{minipage}
\begin{minipage}{\textwidth}
\subcaption{$p=5$}
\begin{equation*}
\begin{array}{c@{\hspace{2em}}c@{\hspace{2em}}c}
    n & \chi\in \CX_{n/3} & c\in C \\ \hline
    12 & \frac{1}{\sqrt{5}}([\pm 1 \times 11,\ \pm 3 \times 1]) & ([(1 \text{ or } 4) \times 11,\ (2 \text{ or } 3) \times 1]) \\
    24 & \frac{1}{\sqrt{5}}([\pm 1 \times 22,\ \pm 3 \times 2]) & ([(1 \text{ or } 4) \times 22,\ (2 \text{ or } 3) \times 2]) \\
    36 & \frac{1}{\sqrt{5}}([\pm 1 \times 33,\ \pm 3 \times 3]) & ([(1 \text{ or } 4) \times 33,\ (2 \text{ or } 3) \times 3]) \\
    & \frac{1}{\sqrt{5}}([\pm 1 \times 35,\ \pm 5 \times 1]) & ([0 \times 1,\ (1 \text{ or } 4) \times 35]) 
\end{array}
\end{equation*}
\end{minipage}
\end{table} 

\bigskip

\begin{table}[tbp]
\caption{The characteristic vectors $\chi\in 2S(\Lambda(C)) \subset \BR^n$ with $\chi\cdot\chi = n/3$ and the corresponding vectors in $\BF_2^n$. The meaning of ``corresponding" is that there exists $m\in\BZ^n$ such that $ \chi=\sqrt{2}(s+2m)$.}
\label{tab:chara_vecs_p2}
\begin{minipage}{\textwidth}
\begin{equation*}
\begin{array}{c@{\hspace{2em}}c@{\hspace{2em}}c}
    n & \chi\in \CX_{n/3} & s\in S(C) \\ \hline
    12 & \sqrt{2}([\pm 1 \times 2,\ 0 \times 10]) & ([1 \times 2,\ 0 \times 10]) \\
    24 & \sqrt{2}([\pm 1 \times 4,\ 0 \times 20]) & ([1 \times 4,\ 0 \times 20]) \\
    36 & \sqrt{2}([\pm 2 \times 1, \pm 1 \times 2,\ 0 \times 33]) & ([1 \times 2,\ 0 \times 34]) \\
    & \sqrt{2}([\pm 1 \times 6,\ 0 \times 30]) & ([1 \times 6,\ 0 \times 30]) \\
\end{array}
\end{equation*}
\end{minipage}
\end{table}
\renewcommand{\arraystretch}{0.9}

\clearpage

\paragraph{Characteristic vector for $p=2$}
For a binary singly-even self-dual code $C\subset\BF_2^n$, one can define the shadow of the code, from which we can obtain the shadow of the lattice $\Lambda(C)$.

We divide $C$ into two subsets $C_0$ and $C_2$ consisting of doubly-even and singly-even codewords of $C$, respectively:
\begin{align}
\begin{aligned}
    C_0 &= \{ c\in C \mid c\cdot c\in4\BZ \}\,, \\
    C_2 &= \{ c\in C \mid c\cdot c\in4\BZ+2 \}\,,
\end{aligned}
\end{align}
where $\cdot$ is the Euclidean inner product: $c\cdot c' = \sum_{i} c_i \,c_i'\in\BR$.
Note that $C_0$ is a doubly-even self-orthogonal code and $C_0\subset C\subset C_0^\perp$. The shadow $S(C)$ of the code $C$ is defined by (\cite{conway1990new_code})
\begin{align}
    S(C) = C_0^\perp \setminus C\,.
\end{align}
We can relate the shadow of the code to characteristic vectors of the lattice by the following proposition.

\begin{proposition}[{\cite{elkies1999lattices}}]
    Let $C\subset \BF_2^n$ be a singly-even self-dual code. Then $\chi \in\BR^n$ is a characteristic vector of $\Lambda(C)$ if and only if $\chi= \sqrt{2}\,(s+2m)$ where $s\in S(C)$ and $m\in \BZ^n$. 
\end{proposition}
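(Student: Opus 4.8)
The plan is to pass to integer coordinates via the Construction A parametrization and then turn the characteristic condition into a purely code-theoretic statement. Since $C$ is singly-even self-dual, $\Lambda(C)$ is odd self-dual; in particular it is integral, so the congruence $\chi\cdot\lambda\equiv\lambda\cdot\lambda\pmod 2$ in the definition of a characteristic vector forces $\chi\cdot\lambda\in\BZ$ for all $\lambda\in\Lambda(C)$, hence $\chi\in\Lambda(C)^{*}=\Lambda(C)$. Thus I may write $\chi=v/\sqrt2$ with $v\in\BZ^n$ and $v\bmod 2\in C$, and every $\lambda\in\Lambda(C)$ as $v'/\sqrt2$ with $v'\in L_C:=\{u\in\BZ^n\mid u\bmod 2\in C\}$, so that $\chi\cdot\lambda=\tfrac12 v\cdot v'$ and $\lambda\cdot\lambda=\tfrac12 v'\cdot v'$ (both integers). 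The characteristic condition then reads $v\cdot v'\equiv v'\cdot v'\pmod 4$ for every $v'\in L_C$.

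First I would specialize $v'$ to $2\BZ^n\subset L_C$: taking $v'=2e_i$ forces $2v_i\equiv 0\pmod 4$, hence $v\in 2\BZ^n$, and I set $\chi=\sqrt2\,t$ with $t\in\BZ^n$. For a general $v'\in L_C$ write $v'=c+2u$ where $c\in\{0,1\}^n$ represents $v'\bmod 2\in C$ and $u\in\BZ^n$; using $c\cdot c=\mathrm{wt}(c)$ and that $\mathrm{wt}(c)$ is even (self-duality), the condition collapses to $2(t\cdot c)\equiv\mathrm{wt}(c)\pmod 4$, i.e.\ $t\cdot c\equiv\mathrm{wt}(c)/2\pmod 2$, and it suffices to impose this for every $c\in C$ (take $v'=c$). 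Since $t\cdot c\bmod 2$ depends only on $\bar t:=t\bmod 2\in\BF_2^n$, and $\mathrm{wt}(c)/2$ is $0\bmod 2$ for $c\in C_0$ and $1\bmod 2$ for $c\in C_2$, this becomes exactly: $\bar t\cdot c=0$ for all $c\in C_0$ and $\bar t\cdot c=1$ for all $c\in C_2$.

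The crux—and what I expect to be the main obstacle—is to recognize the solution set of these two conditions as $S(C)=C_0^\perp\setminus C$. I would first note that $C_0$ has index $2$ in $C$: the map $c\mapsto c\cdot c\bmod 4\in\{0,2\}$ is a homomorphism, because $c\cdot c'\equiv 0\pmod 2$ for $c,c'\in C$ by self-orthogonality, its kernel is $C_0$, and it is surjective since $C$ is singly-even. Hence $C_2=c_2+C_0$ for a fixed $c_2\in C_2$, and for $\bar t\in C_0^\perp$ the second condition reduces to $\bar t\cdot c_2=1$. Now $\dim C_0^\perp=n/2+1$, it contains $C$ (dimension $n/2$), and every element of $C$ pairs trivially with $c_2\in C$; the functional $\bar t\mapsto\bar t\cdot c_2$ on $C_0^\perp$ is nonzero, since otherwise $c_2\in(C_0^\perp)^\perp=C_0$, so its kernel has dimension $n/2$ and therefore equals $C$. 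Thus $\bar t\in C_0^\perp$ satisfies $\bar t\cdot c_2=1$ precisely when $\bar t\in C_0^\perp\setminus C=S(C)$. Unwinding the substitutions, $\chi$ is characteristic iff $t\bmod 2\in S(C)$, i.e.\ $\chi=\sqrt2(s+2m)$ with $s\in S(C)$ and $m\in\BZ^n$; the converse needs no separate argument since every step above is an equivalence (alternatively, substitute $\chi=\sqrt2(s+2m)$ back and verify the congruence directly). The only inputs requiring care are the standard facts $C_0\subset C\subset C_0^\perp$, $(C_0^\perp)^\perp=C_0$, and the weight-mod-$4$ homomorphism, all immediate from $C$ being singly-even self-dual.
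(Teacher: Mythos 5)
Your argument is correct, but note that the paper does not actually prove this statement: it is quoted from Elkies' work, so there is no in-paper proof to match. Your reduction is nonetheless fully in the spirit of the arguments the paper does give. The opening move (the mod-2 congruence forces $\chi\cdot\lambda\in\BZ$, hence $\chi\in\Lambda(C)^{*}=\Lambda(C)$, after which one passes to integer coordinates and specializes to vectors with a single nonzero entry) is exactly the strategy of the paper's proof of the odd-$p$ characteristic-vector proposition; and the code-theoretic characterization you derive, namely that $\bar t\in\BF_2^n$ satisfies $\bar t\cdot c=0$ for $c\in C_0$ and $\bar t\cdot c=1$ for $c\in C_2$ precisely when $\bar t\in C_0^{\perp}\setminus C=S(C)$, is the same equivalence the paper invokes (without proof) at the end of its generator-matrix proposition for finding a shadow element. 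Your index-two/functional-kernel argument supplies the justification for that equivalence: the only mildly compressed step is the homomorphism property of $c\mapsto c\cdot c\bmod 4$, which rests on $\mathrm{wt}(c\oplus c')=\mathrm{wt}(c)+\mathrm{wt}(c')-2\,\iota(c)\cdot\iota(c')$ together with $\iota(c)\cdot\iota(c')\in2\BZ$ from self-orthogonality; spelling that identity out would make the proof airtight. The bookkeeping claim that every reduction is an equivalence also deserves one explicit sentence (given $v=2t$ and the condition for all $c\in C$, the congruence for a general $v'=c+2u$ follows because the terms involving $u$ drop out mod $4$), but as written the logic is sound and the converse direction indeed needs no separate argument.
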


Table~\ref{tab:chara_vecs_p2} shows characteristic vectors $\chi=\sqrt{2}(s+2m)$ with $\chi\cdot\chi = n/3$. Note that $([1 \times 2,\ 0 \times 34])\in C$ at $n=36$ also corresponds to the characteristic vector $\chi'=\sqrt{2}([\pm1 \times 2,\ 0 \times 34])$, which does not satisfy the unitarity bound from Proposition~\ref{prop:unitarity_bound}.

In the rest of this subsection, we show a practical way to find shadows for specific codes.
Here, we use the inclusion map $\iota: \BF_2\to \{0,1\}\subset\BZ$ to avoid some confusion.

\begin{proposition}
Let $C\subset\BF_2^n$ be a singly-even self-dual code that has an $\frac{n}{2} \times n$ generator matrix $G$ such that the left side is the identity matrix $I_{n/2}$. Then,
\begin{equation}
    s = \left( P(G_1),\dots,P(G_{n/2}),0,\dots,0 \right) \in \BF_2^n
\end{equation}
is an element of the shadow $S(C)$ where
\begin{align}
    G_i \in C &: \text{ the $i$-th row of } G \,, \\
    P(c) &= \begin{dcases} 0 & (c\cdot c\in4\BZ)\,, \\ 
    1 & (c\cdot c\in4\BZ+2)\,. \end{dcases} 
\end{align}
\end{proposition}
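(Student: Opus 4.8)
The plan is to verify directly that the proposed vector $s$ satisfies the defining property of the shadow, namely that $s \in C_0^\perp \setminus C$. Recall that $S(C) = C_0^\perp \setminus C$, where $C_0$ is the doubly-even subcode. So there are two things to check: (i) $s$ is orthogonal mod $2$ to every doubly-even codeword of $C$, and (ii) $s \notin C$.

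For step (i), I would use the structure of the generator matrix $G = (I_{n/2} \mid B)$. Any codeword $c \in C$ is a sum $c = \sum_{i \in I} G_i$ over some index set $I \subseteq \{1,\dots,n/2\}$, and the standard identity for the weight of a sum of binary vectors gives $c \cdot c \equiv \sum_{i \in I} G_i \cdot G_i + 2\sum_{i<j,\, i,j\in I} G_i \cdot G_j \pmod 4$. Thus $c$ is doubly-even precisely when $\sum_{i \in I} P(G_i) + \sum_{i<j} G_i \cdot G_j \equiv 0 \pmod 2$. On the other hand, because the first $n/2$ coordinates of $s$ are exactly $(P(G_1),\dots,P(G_{n/2}))$ and the rest are zero, while the first $n/2$ coordinates of $c$ form the indicator vector of $I$, the inner product $s \cdot c \bmod 2$ picks out exactly $\sum_{i \in I} P(G_i)$. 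So I need $\sum_{i\in I} P(G_i) \equiv \sum_{i<j,\,i,j\in I} G_i\cdot G_j \pmod 2$ whenever $c = \sum_{i\in I} G_i$ is doubly-even — but that is exactly the doubly-even condition rearranged. Hence $s \cdot c \equiv 0 \pmod 2$ for all $c \in C_0$, giving $s \in C_0^\perp$.

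For step (ii), I would argue that $s \notin C$ because $s$ has $s \cdot c \not\equiv 0 \pmod 2$ for some singly-even codeword $c$ (so in particular $s$ cannot lie in the self-dual code $C$, which is its own orthogonal complement). Concretely, take $c = G_i$ for an index $i$ with $P(G_i) = 1$ (such an $i$ exists since $C$ is singly-even, hence contains a singly-even codeword, which as a sum $\sum_{j\in I}G_j$ forces at least one $P(G_j)=1$ by the weight identity above). Then $s \cdot G_i = P(G_i) = 1$, so $s \notin C^\perp = C$. One must separately note $s \cdot c$ can be computed purely from the first $n/2$ coordinates since $s$ vanishes elsewhere, and that $G_i$ restricted to the first $n/2$ coordinates is the $i$-th standard basis vector.

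The main obstacle I anticipate is bookkeeping the mod-$2$ versus mod-$4$ reductions carefully in the weight identity $\mathrm{wt}(c+c') \equiv \mathrm{wt}(c) + \mathrm{wt}(c') - 2\,c\cdot c' \pmod 4$ and its iterated form, and making sure the cross-term $\sum_{i<j} G_i \cdot G_j$ is correctly matched against the $P(G_i)$ sum — this is where a sign or parity slip would break the argument. A clean way to handle it is to induct on $|I|$: assume $\sum_{j\in I'}G_j$ has weight congruent mod $4$ to the predicted expression for $I' \subsetneq I$, then add one more generator and apply the two-term identity. Everything else is a short direct computation using $G = (I_{n/2}\mid B)$ and self-duality $C = C^\perp$.
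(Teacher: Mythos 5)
Your overall strategy is the same as the paper's: expand a codeword in the rows of the systematic generator matrix, control its Euclidean norm mod $4$, and use the identity block of $G$ to read off $s\cdot c$ as $\sum_{i\in I}P(G_i)$; splitting the claim into (i) $s\in C_0^\perp$ and (ii) $s\notin C$ is just the definition $S(C)=C_0^\perp\setminus C$ made explicit. However, step (i) as written contains a genuine logical slip. From the weight identity you correctly obtain that $c=\sum_{i\in I}G_i$ is doubly-even iff $\sum_{i\in I}P(G_i)\equiv\sum_{i<j,\;i,j\in I}G_i\cdot G_j \pmod 2$, and that $s\cdot c\equiv\sum_{i\in I}P(G_i)\pmod 2$. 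What you must prove is $\sum_{i\in I}P(G_i)\equiv 0$, but the doubly-even condition only gives $s\cdot c\equiv\sum_{i<j}G_i\cdot G_j\pmod 2$; your sentence ``So I need $\sum_{i\in I}P(G_i)\equiv\sum_{i<j}G_i\cdot G_j$ \dots\ but that is exactly the doubly-even condition rearranged. Hence $s\cdot c\equiv 0$'' therefore concludes nothing unless the cross-term sum vanishes mod $2$. The missing, and essential, ingredient is self-orthogonality: since $C=C^\perp$, distinct rows satisfy $G_i\cdot G_j\in 2\BZ$ as integer inner products of the $0/1$ lifts, so every cross term drops out of the mod-$4$ computation. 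This is exactly the fact the paper invokes ($c\cdot c'\in2\BZ$ for $c,c'\in C$), and it is also silently used in your step (ii), where the claim that a singly-even codeword forces some $P(G_j)=1$ again needs the cross terms to be even; the same fact is needed inside the induction on $|I|$ you sketch, since the two-term identity is applied to a partial sum (a codeword) paired with a generator.

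Once that one observation is inserted, both steps go through, and your argument essentially coincides with the paper's proof, which phrases the computation as $c\cdot c\equiv 2\sum_i\iota(x_i P(G_i))\pmod 4$ and $c\cdot s\equiv\sum_i\iota(x_iP(G_i))\pmod 2$, i.e., $c\cdot s = P(c)$ mod $2$ for every $c\in C$. Your use of a single singly-even generator row in (ii), rather than an arbitrary singly-even codeword, is a harmless variant.
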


\begin{proof}
For any codeword $c=x\,G\in C$, $x\in\BF_2^{n/2}$, there exists $m\in\BZ^n$ such that $\iota(c)=\iota(x)\iota(G)+2m$. Then, via mod 4 computation, we obtain
\begin{equation}
\begin{aligned}
    c\cdot c &= (\iota(x)\iota(G)+2m)^2 \\
    &\equiv (\iota(x)\iota(G))^2 \\
    &= \sum_{i} \iota(x_i)^2 \,\iota(G_i)\cdot \iota(G_i) + 2 \sum_{i\neq j} \iota(x_i)\iota(x_j) \,\iota(G_i)\cdot \iota(G_j) \\
    &\equiv 2 \sum_{i} \iota(x_i\, P(G_i)) \,.
\end{aligned}
\end{equation}
In the last line, we used $c\cdot c\equiv 2P(c)$ and $c\cdot c'\in2\BZ$ for $c,c'\in C$ from the self-duality of $C$. On the other hand, since the left part of $G$ is the identity matrix $I_{n/2}$,
\begin{equation}
    c\cdot s = x\,G \cdot \left( P(G_1),\dots,P(G_{n/2}),0,\dots,0 \right) = \sum_i \iota(x_i \,P(G_i))\,,\mod 2 \,.
\end{equation}
Thus, the equivalences
\begin{align*}
    c\cdot s = 0 \;\;\mathrm{mod}\;2 \;&\Leftrightarrow\; c\in C: \text{doubly-even}\\
    c\cdot s = 1 \;\;\mathrm{mod}\;2 \;&\Leftrightarrow\; c\in C: \text{singly-even}
\end{align*}
hold, which means that $s\in S(C)$.
\end{proof}

From the proposition, we can read off an element of the shadow $S(C)$ from the generator matrix $G$. Since the shadow of $C$ is a coset of $C$ in $\BF_2^n$: $s+C=S(C)$, the whole set of $S(C)$ can be easily obtained once we have an element $s\in S(C)$.

\subsection{Spectral flow and chiral ring}
\label{ss:spectral_flow}

We have shown that the U(1)-graded partition function becomes a weak Jacobi form under some assumptions.
This strongly supports the existence of $\CN=2$ supersymmetry in the corresponding fermionic code CFT.
In this subsection, we consider the $\CN=2$ structure and, more concretely, aim to identify the spectral flow and chiral rings in fermionic code CFTs.

Let us consider a fermionic CFT constructed from an odd self-dual lattice $\Lambda(C)$ where the minimum norm of characteristic vectors is $n/3$ and specify the U(1) current by \eqref{eq:cand_R_current} where $\chi\in\CX_{n/3}$.
For the fermionic code CFT, a state with charge $Q = \chi\cdot\kappa$ can be represented by
\begin{align}
    \begin{aligned}
        V_{\kappa}(z) = \,:e^{\i\kappa \cdot X(z)}:\,,\qquad \kappa \in
        \begin{dcases}
            \Lambda(C) & (\text{NS sector})\,,\\
            S(\Lambda(C)) & (\text{R sector})\,,
        \end{dcases}
    \end{aligned}
\end{align}
up to neutral operators.
The spectral flow by $\theta$ can be obtained by shifting the momentum $\kappa$ by $\theta\chi$, i.e. $\kappa\to \kappa-\theta\chi$.
In fact, under the flow, a state $V_\kappa(z)$ with $h = \kappa^2/2$ and $Q = \kappa\cdot\chi$ transforms into $V_{\kappa-\theta\chi}(z) = \,:\exp[\i\,(\kappa-\theta\chi)\cdot X(z)]:$ with
\begin{align}
    \begin{aligned}
        h' &= \frac{1}{2}(\kappa-\theta\chi)^2 = h - \theta Q + \frac{n}{6}\theta^2\,,\\
        Q' &= \chi\cdot(\kappa-\theta\chi) = Q -\frac{n}{3}\theta\,,
    \end{aligned}
\end{align}
where we used $\chi\cdot\chi = n/3$.
This reproduces the transformation rule \eqref{eq:spct_qn} of quantum numbers under the spectral flow.
Therefore, we identify the spectral flow operator as
\begin{align}
    \mathscr{U}_\theta = \,:\exp\left(-\i\,\theta\,\chi\cdot X(z)\right):\,.
\end{align}
This is also consistent with the observation in the proof of Proposition \ref{prop:ell_trans}: the spectral flow can be understood as the shift of $\Lambda(C)$ by $\chi$ in terms of a lattice.

From Proposition \ref{prop:unitarity_bound}, the assumption that the minimum norm of characteristic vectors is $n/3$ implies that any operator in the R sector satisfies $h\geq n/24$. 
In this case, we may identify the Ramond ground states as
\begin{align} \label{eq:R_ground}
    V_{\chi'/2}(z) =\,:\exp(\i\,\chi'\cdot X(z)/2):\,,
\end{align}
where $\chi'\in \CX_{n/3}$. The vertex operators have the expected conformal weight $h = n/24$.

Furthermore, there is another unitarity bound \eqref{eq:uni_bound_NS} in the NS sector.
Interestingly, the fermionic code CFT also satisfies the unitarity bound $h\geq Q/2$ in the NS sector.
To see this, suppose a characteristic vector $\chi\in 2S(\Lambda(C))$ with the minimum norm.
Then, any characteristic vector $\chi'\in2S(\Lambda(C))$ is written in the form $\chi' = \chi-2\lambda$ $(\lambda\in\Lambda(C))$.
Since $\chi\in 2S(\Lambda(C))$ is minimum: $\chi'\cdot\chi'\geq\chi\cdot\chi$, we obtain
\begin{align}
    \lambda\cdot\lambda \geq \lambda\cdot \chi\,,
\end{align}
which implies $h\geq Q/2$ for a vertex operator $V_\lambda(z)$. The bosonic oscillators increase the $L_0$ eigenvalue and do not change the U(1) charge, so any operator in the NS sector satisfies the bound.

As explained in section~\ref{ss:N=2}, the spectral flow by $\theta=-1/2$ flows the Ramond ground states into chiral primary fields. Now we obtain
\begin{align}
    V_{(\chi+\chi')/2}(z) =\,:\exp[\i\,(\chi +\chi')\cdot X(z)/2]:\,.
\end{align}
The conformal weight $h$ and U(1) charge $Q$ of the states are
\begin{align}
    \begin{aligned}
        h &=\frac{1}{2}\left(\frac{\chi+\chi'}{2}\right)^2 = \chi\cdot(\chi +\chi')/4\,,\\
        Q &= \chi\cdot(\chi+\chi')/2\,,
    \end{aligned}
\end{align}
which implies $h = Q/2$, reproducing the known relation between quantum numbers for chiral primary fields. Also, the conformal weight $h$ satisfies
\begin{align}
    h = \chi\cdot(\chi+\chi')/4 \leq \chi\cdot(\chi+\chi)/4 = n/6\,.
\end{align}
This also agrees with the property of chiral primary fields as shown in~\eqref{eq:bound_chiralpr}.

Let us consider the operator product expansion of two chiral primary fields
\begin{align}
    V_{(\chi+\chi')/2}(z)\,V_{(\chi+\chi'')/2}(w) \sim (z-w)^{\frac{\chi+\chi'}{2}\cdot\frac{\chi+\chi''}{2}} \,V_{\chi +(\chi'+\chi'')/2}(w)\,.
\end{align}
Using the fact that any operator satisfies $h\geq Q/2$ and chiral primary fields saturate it, in this case, we obtain
\begin{align}
    \frac{\chi+\chi'}{2}\cdot \frac{\chi+\chi''}{2} \geq0\,.
\end{align}
Therefore, if one takes the limit $z\to w$, the operator product does not vanish if and only if $(\chi+\chi')\cdot(\chi+\chi'') = 0$.
Hence the chiral ring can be written as
\begin{align}
    \mathscr{R} = \left\{ r_a \;\middle|\; a = \frac{\chi+\chi'}{2}\,,\; \chi'\in \CX_{n/3}\right\}\,.
\end{align}
Here, the product is $r_a\cdot r_b = r_{a+b}$ subject to $r_a\cdot r_b = 0$ if $a\cdot b > 0$ where $a = (\chi+\chi')/2$ and $b = (\chi+\chi'')/2$.

\paragraph{In the ternary case}

For a ternary code $(p=3)$, the above discussion can be further simplified.
In the rest of this section, we move on to the convenient frame $\BF_3 = \{-1,0,1\}$ using the isomorphism $\varphi:\{0,1,2\}\to\{-1,0,1\}$ where
\begin{align}
    \varphi: 0\mapsto 0\,,\quad 1\mapsto 1\,,\quad 2\mapsto-1\,.
\end{align}
Then, a characteristic vector $\chi\in\CX_{n/3}$ takes the form $\chi = \frac{1}{\sqrt{3}}([(\pm1)\times n])$, and the corresponding codeword is $c = ([(\pm1)\times n])\in C$.
This type of codeword is called maximal and we denote the set of maximal codewords by $C_{\max}$. Note that a self-dual ternary code contains maximal codewords only when $n\in12\BZ$.
Then, a characteristic vector $\chi\in\CX_{n/3}$ can be written as 
\begin{align}
    \chi = c/\sqrt{3}\,,\qquad c\in C_{\max}\,.
\end{align}
Choosing a characteristic vector by $\chi = c/\sqrt{3}$ $(c\in C_{\max})$, the spectral flow operator is
\begin{align}
    \EU_\theta = \,:\exp(-\i\,\theta\, c\cdot X(z)/\sqrt{3}):\,,
\end{align}
where $\theta$ is a twist parameter.
The Ramond ground states can be identified as
\begin{align}
    V_{\chi'/2}(z) = \,:e^{\frac{\i}{2\sqrt{3}}\,c'\cdot X(z)}:\,,\qquad c'\in C_{\max}\,,
\end{align}
where we used $\chi' = c'/\sqrt{3}$.
Correspondingly, the chiral primary fields in the NS sector are
\begin{align}
    V_{(\chi+\chi')/2}(z) = \,:e^{\frac{\i}{2\sqrt{3}}\,(c+c')\cdot X(z)}:\,,
\end{align}
which gives the chiral ring
\begin{align}
    \mathscr{R} = \left\{r_{(c+c')/2}\;\middle|\; c'\in C_{\max}\right\}\,.
\end{align}
The U(1) character valued degeneracy of the Ramond ground states is
\begin{align}
    \left.\Tr_\mathrm{R}\left[t^{J_0}\right]\right|_{G_0^\pm = 0} = \sum_{\chi'\in\CX_{n/3}} t^{\chi\cdot\chi'/2} = \sum_{c'\in C_{\max}} t^{c\cdot c'/6}\,,
\end{align}
where we used $\chi = c/\sqrt{3}$. 
By the spectral flow, we obtain the counterpart in the NS sector called the Poincar\'{e} polynomial \cite{Lerche:1989uy}
\begin{align}
    P(t) := \Tr_{\mathrm{NS}}\left[t^{J_0}\right]|_{\mathscr{R}} = \sum_{c'\in C_{\max}} t^{c\cdot(c+ c')/6} = t^{n/6} \sum_{c'\in C_{\max}}  t^{c\cdot c'/6}\,,
\end{align}
where we used $c\cdot c  = n$.
These polynomials are completely determined by the set of maximal codewords.
The Poincar\'{e} polynomial is known to be closely related to the cohomology of the target manifold~\cite{Lerche:1989uy}.

\section{Extremal $\CN=2$ elliptic genera}
\label{sec:extremal}

In this section, we construct extremal $\CN=2$ superconformal theories with the central charge $n=12$ and $24$.
Originally, bosonic and $\CN=1$ extremal CFTs are introduced to understand the holographic duality of the pure AdS$_3$ gravity~\cite{Witten:2007kt}.\footnote{There have been numerous works on the relation between extremal CFTs and pure gravity. See, e.g.,~\cite{Gaiotto:2007xh,Gaberdiel:2007ve,Yin:2007gv,Yin:2007at,Maloney:2007ud,Gaiotto:2008jt,Li:2008dq,Gaberdiel:2008pr,Gaberdiel:2010jf,Benjamin:2016aww,Bae:2016yna,Ferrari:2017kbp} for the developments.}
For $\CN=2$ theories, extremal CFTs are defined using extremal elliptic genera~\cite{Gaberdiel:2008xb}.
The extremal $\CN=2$ SCFTs with the central charge $n=12$ and $24$ have been explicitly constructed in \cite{Cheng:2014owa,Benjamin:2015ria}.
We will give an alternative systematic construction of the extremal elliptic genera from linear codes.
In this and the following sections, we use the graded character defined in Appendix \ref{app:N2_character}. Since we only consider the partition function in the R sector, we abbreviate $\mathrm{Ch}_{l;h,Q}^{(\mathrm{R})}$ as $\mathrm{Ch}_{l;h,Q}$.

\subsection{central charge $12$}
\label{ss:ext_m=2}

In this subsection, we employ our method to obtain the extremal elliptic genus with the central charge $n=12$ (correspondingly, $m=2$)~\cite{Cheng:2014owa}.
As in \eqref{eq:extremal_m2}, the $q^0$ term in the extremal elliptic genus at $m=2$ is 
\begin{align}
    y^2 + 22 + y^{-2}\,.
\end{align}
From the discussion around \eqref{eq:R_ground}, the corresponding lattice $\Lambda(C)$ must have 24 orthogonal characteristic vectors in $\CX_{n/3}$, i.e., $|\CX_{n/3}|=24$ and $\chi\cdot\chi'=0$ for $\chi,\chi'\in\CX_{n/3}, \chi\neq\pm\chi'$.

In what follows, we search for classical codes that give the extremal elliptic genus by imposing constraints from the orthogonality of characteristic vectors.
Here, we consider classical codes over $\BF_p$ where $p=2,5$.

When $p=2$, from Table \ref{tab:chara_vecs_p2}, the shadow $S(C)$ of the code $C\subset\BF_2^{12}$ must have 6 vectors with two $1$ and ten $0$. In addition, from the orthogonality, all `1's must be in different positions such as
\begin{equation}
    (1,1,0^{10})\,,\ (0^2,1,1,0^8)\,,\ (0^4,1,1,0^6)\,,\ (0^6,1,1,0^4)\,,\ (0^8,1,1,0^2)\,,\ (0^{10},1,1) \,.
\end{equation}
In this case, the code $C$ must include the codewords as $(1,1,1,1,0^8)$ since $s-s'\in C$ for any $s,s'\in S(C)$ and $t\in C_2\subset C$ must be in the form like $(1,0,1,0,1,0,1,0,1,0,1,0)$ since $t\cdot s=1$ for any $s\in S(C)$. In summary, the code must be generated by
\begin{equation}
\begin{bmatrix}
    1 & 1 & 1 & 1 & 0 & 0 & 0 & 0 & 0 & 0 & 0 & 0 \\
    1 & 1 & 0 & 0 & 1 & 1 & 0 & 0 & 0 & 0 & 0 & 0 \\
    1 & 1 & 0 & 0 & 0 & 0 & 1 & 1 & 0 & 0 & 0 & 0 \\
    1 & 1 & 0 & 0 & 0 & 0 & 0 & 0 & 1 & 1 & 0 & 0 \\
    1 & 1 & 0 & 0 & 0 & 0 & 0 & 0 & 0 & 0 & 1 & 1 \\
    1 & 0 & 1 & 0 & 1 & 0 & 1 & 0 & 1 & 0 & 1 & 0 \\
\end{bmatrix} \,,
\end{equation}
up to permutations of the columns.

When $p=5$, from Table \ref{tab:chara_vecs_odd}, the code $C\subset\BF_5^{12}$ must have 24 codewords with eleven $1$ or $4$ and one $2$ or $3$. All self-dual codes over $\BF_5$ with length $12$ are known \cite{database} and there are 3 such codes up to permutations, all satisfying the orthogonality. One of them is generated by
\begin{equation}
\begin{bmatrix}
    3 & 1 & 1 & 1 & 1 & 1 & 1 & 1 & 1 & 1 & 1 & 1 \\
    1 & 3 & 4 & 4 & 4 & 4 & 4 & 4 & 4 & 4 & 1 & 1 \\
    1 & 4 & 3 & 4 & 4 & 4 & 1 & 4 & 4 & 4 & 1 & 4 \\
    1 & 4 & 4 & 3 & 4 & 4 & 4 & 1 & 4 & 4 & 1 & 4 \\
    1 & 4 & 4 & 4 & 3 & 4 & 4 & 4 & 1 & 4 & 1 & 4 \\
    1 & 4 & 4 & 4 & 4 & 3 & 4 & 4 & 4 & 1 & 1 & 4 \\
\end{bmatrix} \,.
\end{equation}

The Construction A lattice from either code is $\mathrm{D}_{12}^+$ (see \cite{conway2013sphere,lattice_catalogue} for the names of lattices). We can explicitly check that for any characteristic vector $\chi\in\Lambda(C)$ with norm $n/3=4$, the elliptic genus is
\begin{equation}
\begin{aligned}
    Z_{\mathrm{EG}}(\tau,z; \Lambda(C),\chi) &= y^2 + 22 + y^{-2} + \CO(q) \\
    &= \frac{1}{6} \phi_{0,1}^2 + \frac{5}{6} \phi_{-2,1}^2 E_4 \\
    &= 23 \, \mathrm{Ch}_{\frac{3}{2};\frac{1}{2},0} + \mathrm{Ch}_{\frac{3}{2};\frac{1}{2},2} \\
    &\quad + \sum_{t=1}^\infty \left( A_{t,1}  (\mathrm{Ch}_{\frac{3}{2};\frac{1}{2}+t,1} + \mathrm{Ch}_{\frac{3}{2};\frac{1}{2}+t,-1}) + A_{t,2} \, \mathrm{Ch}_{\frac{3}{2};\frac{1}{2}+t,2} \right)
\end{aligned}
\end{equation}
where $A_{t,1}=\{770,13915,\dots\}$ and $A_{t,2}=\{231,5796,\dots\}$ for $t=\{1,2,\dots\}$ \cite{Harrison:2016hbq}.
This means that both codes with $p=2,5$ correspond to the CFT with the extremal elliptic genus.

In the ternary case, we can perform further analysis using the complete weight enumerator. We consider the ternary Golay code $C\subset\BF_3^{12}$ generated by
\begin{equation}
\begin{bmatrix}
    1 & 0 & 0 & 0 & 0 & 0 & 0 & 1 & 1 & 1 & 1 & 1 \\
    0 & 1 & 0 & 0 & 0 & 0 & 2 & 0 & 1 & 2 & 2 & 1 \\
    0 & 0 & 1 & 0 & 0 & 0 & 2 & 1 & 0 & 1 & 2 & 2 \\
    0 & 0 & 0 & 1 & 0 & 0 & 2 & 2 & 1 & 0 & 1 & 2 \\
    0 & 0 & 0 & 0 & 1 & 0 & 2 & 2 & 2 & 1 & 0 & 1 \\
    0 & 0 & 0 & 0 & 0 & 1 & 2 & 1 & 2 & 2 & 1 & 0 \\
\end{bmatrix} \,,
\end{equation}
whose corresponding lattice $\Lambda(C)$ is $\mathrm{D}_{12}^+$. The complete weight enumerator is
\begin{equation}
    W_C(x_0,x_1,x_2) = x_0^{12} + x_1^{12} + x_2^{12} + 22 (x_0^6x_1^6 + x_0^6x_2^6 + x_1^6x_2^6) + 220 (x_0^3x_1^3x_2^6 + x_0^3x_1^6x_2^3 + x_0^6x_1^3x_2^3) \,.
\end{equation}
The code contains the codeword $\mathbf{1}=(1,\dots,1)\in C$ and thus we can take the characteristic vector $\chi=\mathbf{1}/\sqrt{3}$. From \eqref{eq:eg_by_cwe_p3}, it can be verified that the $q^0$ term of the elliptic genus $ Z_{\mathrm{EG}}(\tau,z;\Lambda(C),\chi)$ is $y^2+22+y^{-2}$, which comes from $x_0^{12}+22x_0^6x_2^6+x_2^{12}$.

\subsection{central charge $24$}

In this subsection, we obtain the extremal elliptic genus with the central charge $n=24$ (correspondingly, $m=4$)~\cite{Benjamin:2015ria}.
As in \eqref{eq:extremal_m4}, the $q^0$ term in the extremal elliptic genus at $m=4$ is
\begin{align}
    y^4 + 46 + y^{-4}\,.
\end{align}
From the discussion around \eqref{eq:R_ground}, the corresponding lattice $\Lambda(C)$ must have 48 orthogonal characteristic vectors in $\CX_{n/3}$.

As in the previous case, we would like to find classical codes that return the extremal elliptic genus.
Below we consider classical codes over $\BF_p$ where $p=2,3$.

When $p=2$, from Table \ref{tab:chara_vecs_p2}, we have to choose the $\mathrm{U}(1)$ current such as $J(z)=\chi\cdot L(z),\, \chi=\sqrt{2}\,(1,1,1,1,0^{20})$. However, in this case, $s=\frac{1}{\sqrt{2}}(-1,1,1,1,0^{20})$ is an element of $S(\Lambda(C))$ and $\chi\cdot s=2$, thus $V_{s}(z)$ corresponds to $q^0y^2$ in the elliptic genus. The existence of this term is inconsistent with $y^4 + 46 + y^{-4}$, which means that the extremal elliptic genus cannot be constructed from classical codes over $\BF_2$.

When $p=3$, from Table \ref{tab:chara_vecs_odd}, the code $C\subset\BF_3^{24}$ must have 48 codewords consisting of only 1 and 2. In addition, from the orthogonality, if we choose $\chi\equiv\frac{1}{\sqrt{3}}c, c\in C$, then 46 codewords except for $\pm c$ must have 12 components with the same value as $c$ (both 1 or both 2) and 12 components with different values (one is 1 and the other is 2).

In this case, it is also useful to consider the NS sector. The corresponding partition function to the extremal elliptic genus is
\begin{equation}
    \mathrm{SF}_{1/2} \, Z_{\mathrm{ext}}^{(m=4)}(\tau,z-\tfrac{1}{2}) = q^{-1} + 24 + q^{\frac{1}{2}}(2048y + 2048y^{-1}) + \CO(q) \,.
\end{equation}
The terms $q^{-1}$ and $24$ come from the vacuum $\ket{0}$ and its descendants $\alpha_{-1}^i\ket{0}$, thus the minimum norm of the lattice $\Lambda(C)$ must be $3$, which corresponds to $q^{\frac{1}{2}}$. There is only one such odd self-dual lattice: the odd Leech lattice~\cite{o1944construction}.
In the ternary case, the odd Leech lattice can be constructed from the code with the minimum Hamming weight $9$. There are two such self-dual codes and one is generated by
\begin{equation}
\begin{bmatrix}
    1&0&0&0&0&0&0&0&0&0&0&0&0&2&0&1&1&1&2&0&2&1&0&1 \\
    0&1&0&0&0&0&0&0&0&0&0&0&0&1&1&1&1&2&2&0&1&0&2&0 \\
    0&0&1&0&0&0&0&0&0&0&0&0&1&0&2&0&2&1&1&1&1&0&0&2 \\
    0&0&0&1&0&0&0&0&0&0&0&0&1&1&0&0&2&2&0&1&1&1&2&0 \\
    0&0&0&0&1&0&0&0&0&0&0&0&2&2&2&0&1&0&0&1&1&1&0&1 \\
    0&0&0&0&0&1&0&0&0&0&0&0&0&0&1&2&2&2&0&1&2&0&2&2 \\
    0&0&0&0&0&0&1&0&0&0&0&0&1&0&0&1&1&2&2&2&0&1&1&0 \\
    0&0&0&0&0&0&0&1&0&0&0&0&2&2&2&1&0&2&1&2&2&1&1&1 \\
    0&0&0&0&0&0&0&0&1&0&0&0&2&0&2&2&1&1&2&1&2&1&1&2 \\
    0&0&0&0&0&0&0&0&0&1&0&0&0&2&1&1&2&1&0&1&1&2&0&0 \\
    0&0&0&0&0&0&0&0&0&0&1&0&2&1&0&0&2&1&2&0&2&0&2&2 \\
    0&0&0&0&0&0&0&0&0&0&0&1&2&2&2&1&1&1&1&0&1&2&2&2
\end{bmatrix} \,,
\end{equation}
which is modified from the top matrix for $\BF_3,\ n=24,\ d=9$ in \cite{database}.

For any characteristic vector $\chi\in\Lambda(C)$ with norm $n/3=8$ such as $\chi=\mathbf{1}/\sqrt{3}$, the elliptic genus is
\begin{equation}
\begin{aligned}
    Z_{\mathrm{EG}}(\tau,z; \Lambda(C),\chi) &= y^4 + 46 + y^{-4} + \CO(q) \\
    &= \frac{1}{432} \phi_{0,1}^4 + \frac{1}{8} \phi_{0,1}^2 \phi_{-2,1}^2 E_4 + \frac{11}{27} \phi_{0,1} \phi_{-2,1}^3 E_6 + \frac{67}{144} \phi_{-2,1}^4 E_4^2 \\
    &= 47 \, \mathrm{Ch}_{\frac{7}{2};1,0} + \mathrm{Ch}_{\frac{7}{2};1,4}  \\
    &\quad + \sum_{t=1}^\infty \left( \sum_{k=1}^3 A_{t,k} (\mathrm{Ch}_{\frac{7}{2};1+t,k} + \mathrm{Ch}_{\frac{7}{2};1+t,-k}) + A_{t,4} \, \mathrm{Ch}_{\frac{7}{2};1+t,4} \right)
\end{aligned}
\end{equation}
where $A_{t,1}=\{32890,2969208,\dots\}$, $A_{t,2}=\{14168,1659174,\dots\}$, $A_{t,3}=\{2024,485001,\dots\}$, and $A_{t,4}=\{23,61984,\dots\}$ for $t=\{1,2,\dots\}$.
Therefore, the CFT constructed from this code certainly has the extremal elliptic genus.

For $\chi=\mathbf{1}/\sqrt{3}$, from \eqref{eq:eg_by_cwe_p3}, the terms $y^4 + 46 + y^{-4}$ in the elliptic genus come from $x_0^{24}+46\,x_0^{12}x_2^{12}+x_2^{24}$ in the complete weight enumerator.

\section{Other examples}
\label{sec:near-extremal}

In the previous section, we constructed the extremal elliptic genera.
In this section, we give other examples of the elliptic genera, especially near-extremal elliptic genera, which are slightly modified in the polar regions from extremal ones as in \eqref{eq:near_extremal}.

By Theorem \ref{theorem:weak}, for the partition function $Z_{\mathrm{EG}}(\tau,z\,;\Lambda(C),\chi)$ to be a weak Jacobi form, the minimum norm of characteristic vectors in $\Lambda(C)\subset\BR^n$ must be $n/3$. From the fact that $\chi\cdot\chi=n \mod 8$ for any characteristic vector $\chi$, if $n=12,24$, it is sufficient to verify the existence of a characteristic vector with norm $n/3$, i.e., $\CX_{n/3}\neq\emptyset$. In terms of codes, this is equivalent to the existence of the codewords listed in Table \ref{tab:chara_vecs_odd} and \ref{tab:chara_vecs_p2}. If $n=36$ or greater, there may be characteristic vectors with a norm less than $n/3$. Therefore, we should choose a code $C$ whose corresponding lattice $\Lambda(C)$ does not contain such vectors. In the ternary case, the minimum norm of the characteristic vectors is always greater than or equal to $n/3$ from Proposition \ref{prop:characteristic_odd}, thus at any $n\in12\BZ$ the conditions are summarized as the existence of the maximal codewords (consisting of only $1$ and $2$), i.e., $C_{\mathrm{max}}\neq\emptyset$.

\subsection{$n=12\,,\,$ $p=2$}

In dimension $n=12$, there are only 3 odd self-dual lattices: $\mathrm{D}_{12}^+,\ \mathrm{E}_8\oplus\BZ^4$, and $\BZ^{12}$. As we have seen in section~\ref{ss:ext_m=2}, $\mathrm{D}_{12}^+$ corresponds to the CFT with the extremal elliptic genus. Since $\BZ^{12}$ is trivial, we consider the case $\mathrm{E}_8\oplus\BZ^4$.

For a code $C\subset\BF_2^{12}$ generated by
\begin{equation}
\begin{bmatrix}
    1 & 1 & 0 & 0 & 0 & 0 & 0 & 0 & 0 & 0 & 0 & 0 \\
    0 & 0 & 1 & 1 & 0 & 0 & 0 & 0 & 0 & 0 & 0 & 0 \\
    0 & 0 & 0 & 0 & 1 & 0 & 0 & 0 & 0 & 1 & 1 & 1 \\
    0 & 0 & 0 & 0 & 0 & 1 & 0 & 0 & 1 & 0 & 1 & 1 \\
    0 & 0 & 0 & 0 & 0 & 0 & 1 & 0 & 1 & 1 & 0 & 1 \\
    0 & 0 & 0 & 0 & 0 & 0 & 0 & 1 & 1 & 1 & 1 & 0 \\
\end{bmatrix} \,,
\end{equation}
the Construction A lattice $\Lambda(C)$ is $\mathrm{E}_8\oplus\BZ^4$. We can easily verify that the first four columns correspond to $\BZ^4$ and the rest to $\mathrm{E}_8$.

For any characteristic vector $\chi\in\Lambda(C)$ with norm $n/3=4$ such as
\begin{equation}
    \chi = \sqrt{2} \, (1,0,1,0,0,0,0,0,0,0,0,0) \,,
\end{equation}
the elliptic genus is
\begin{equation}
\begin{aligned}
    Z_{\mathrm{EG}}(\tau,z; \Lambda(C),\chi) &= y^2 -4y + 6 - 4y^{-1} + y^{-2} + \CO(q) \\
    &= \phi_{-2,1}^2 E_4 \\
    &= 7 \, \mathrm{Ch}_{\frac{3}{2};\frac{1}{2},0} + 4 (\mathrm{Ch}_{\frac{3}{2};\frac{1}{2},1} + \mathrm{Ch}_{\frac{3}{2};\frac{1}{2},-1}) + \mathrm{Ch}_{\frac{3}{2};\frac{1}{2},2} \\
    &\quad + \sum_{t=1}^\infty \left( A_{t,1} (\mathrm{Ch}_{\frac{3}{2};\frac{1}{2}+t,1} + \mathrm{Ch}_{\frac{3}{2};\frac{1}{2}+t,-1}) + A_{t,2} \, \mathrm{Ch}_{\frac{3}{2};\frac{1}{2}+t,2} \right)
\end{aligned}
\end{equation}
where $A_{t,1}=\{750,13875,\dots\}$ and $A_{t,2}=\{255,5868,\dots\}$ for $t=\{1,2,\dots\}$.
This is not extremal but a $\beta$-extremal elliptic genus with $\beta>1$ since it includes the term $q^0 y$. 

\subsection{$n=24\,,\,$ $p=3$}
We consider a code $C\subset\BF_3^{24}$ generated by
\begin{equation}
\begin{bmatrix}
    1 & 0 & 0 & 0 & 0 & 0 & 0 & 0 & 0 & 0 & 0 & 0 & 0 & 1 & 1 & 1 & 2 & 0 & 2 & 0 & 0 & 1 & 2 & 2 \\
    0 & 1 & 0 & 0 & 0 & 0 & 0 & 0 & 0 & 0 & 0 & 0 & 0 & 1 & 1 & 2 & 1 & 0 & 0 & 2 & 1 & 0 & 2 & 1 \\
    0 & 0 & 1 & 0 & 0 & 0 & 0 & 0 & 0 & 0 & 0 & 0 & 0 & 1 & 0 & 0 & 1 & 2 & 2 & 2 & 0 & 1 & 1 & 1 \\
    0 & 0 & 0 & 1 & 0 & 0 & 0 & 0 & 0 & 0 & 0 & 0 & 2 & 1 & 0 & 0 & 2 & 0 & 0 & 1 & 2 & 1 & 2 & 1 \\
    0 & 0 & 0 & 0 & 1 & 0 & 0 & 0 & 0 & 0 & 0 & 0 & 2 & 2 & 1 & 0 & 2 & 2 & 1 & 1 & 2 & 0 & 0 & 0 \\
    0 & 0 & 0 & 0 & 0 & 1 & 0 & 0 & 0 & 0 & 0 & 0 & 0 & 2 & 1 & 0 & 2 & 0 & 1 & 2 & 0 & 1 & 2 & 2 \\
    0 & 0 & 0 & 0 & 0 & 0 & 1 & 0 & 0 & 0 & 0 & 0 & 2 & 0 & 1 & 2 & 2 & 0 & 1 & 0 & 1 & 1 & 0 & 1 \\
    0 & 0 & 0 & 0 & 0 & 0 & 0 & 1 & 0 & 0 & 0 & 0 & 1 & 2 & 0 & 2 & 1 & 2 & 2 & 1 & 0 & 2 & 0 & 0 \\
    0 & 0 & 0 & 0 & 0 & 0 & 0 & 0 & 1 & 0 & 0 & 0 & 1 & 2 & 2 & 2 & 2 & 0 & 0 & 0 & 0 & 2 & 2 & 1 \\
    0 & 0 & 0 & 0 & 0 & 0 & 0 & 0 & 0 & 1 & 0 & 0 & 1 & 2 & 2 & 1 & 0 & 0 & 2 & 2 & 0 & 1 & 1 & 0 \\
    0 & 0 & 0 & 0 & 0 & 0 & 0 & 0 & 0 & 0 & 1 & 0 & 2 & 1 & 0 & 1 & 2 & 1 & 2 & 1 & 0 & 0 & 1 & 0 \\
    0 & 0 & 0 & 0 & 0 & 0 & 0 & 0 & 0 & 0 & 0 & 1 & 2 & 2 & 2 & 2 & 1 & 2 & 0 & 0 & 1 & 0 & 0 & 2 \\
\end{bmatrix} \,,
\end{equation}
which is the matrix for $\BF_3$, $n=24$, $d=6$ in \cite{database}.

In this case, the elliptic genus $Z_{\mathrm{EG}}$ depends on the choice of $\mathrm{U}(1)$ current, i.e., characteristic vector $\chi\in\CX_{n/3}$. There are 72 characteristic vectors in $\CX_{n/3}$, which are divided into two types.

For 24 characteristic vectors, e.g.,
\begin{equation}
    \chi_1 = \frac{1}{\sqrt{3}} (1,1,1,1,1,-1,1,1,-1,1,1,-1,1,-1,1,-1,-1,-1,-1,-1,-1,1,-1,1)\,,
\end{equation}
the elliptic genus is
\begin{equation}
\begin{aligned}
    Z_{\mathrm{EG}}(\tau,z; \Lambda(C),\chi_1) &= y^4 - 24y + 22 - 24y^{-1} + y^{-4} + \CO(q) \\[0.1cm]
    &= -\frac{1}{864}\phi_{0,1}^4 + \frac{7}{48}\phi_{0,1}^2 \phi_{-2,1}^2 E_4 + \frac{41}{108}\phi_{0,1} \phi_{-2,1}^3 E_6 + \frac{137}{288} \phi_{-2,1}^4 E_4^2 \\[0.2cm]
    &= 23 \, \mathrm{Ch}_{\frac{7}{2};1,0} + 24 (\mathrm{Ch}_{\frac{7}{2};1,1} + \mathrm{Ch}_{\frac{7}{2};1,-1}) + \mathrm{Ch}_{\frac{7}{2};1,4} \\[0.1cm]
    &\quad + \sum_{t=1}^\infty \left( \sum_{k=1}^3 A_{t,k} (\mathrm{Ch}_{\frac{7}{2};1+t,k} + \mathrm{Ch}_{\frac{7}{2};1+t,-k}) + A_{t,4} \, \mathrm{Ch}_{\frac{7}{2};1+t,4} \right)
\end{aligned}
\end{equation}
where $A_{t,1}=\{32866,2969208,\dots\}$, $A_{t,2}=\{14192,1659246,\dots\}$, $A_{t,3}=\{1976,484929,\dots\}$, and $A_{t,4}=\{47,62056,\dots\}$ for $t=\{1,2,\dots\}$.
The difference with the extremal one in the polar region is the term $q^0y$, thus this is a $\beta$-extremal elliptic genus with $\beta>1$.

For 48 characteristic vectors, e.g.,
\begin{equation}
    \chi_2 = \frac{1}{\sqrt{3}} (1,1,1,1,-1,1,1,1,1,1,-1,-1,1,1,-1,1,-1,-1,1,-1,1,1,-1,1)\,,
\end{equation}
the elliptic genus is
\begin{equation}
\begin{aligned}
    Z_{\mathrm{EG}}(\tau,z; \Lambda(C),\chi_2) &= y^4 + y^2 - 12y + 44 - 12y^{-1} + y^{-2} + y^{-4} + \CO(q) \\[0.1cm]
    &= \frac{1}{864}\phi_{0,1}^4 + \frac{5}{36}\phi_{0,1}^2 \phi_{-2,1}^2 E_4 + \frac{83}{216}\phi_{0,1} \phi_{-2,1}^3 E_6 + \frac{137}{288} \phi_{-2,1}^4 E_4^2 \\[0.2cm]
    &= 45 \, \mathrm{Ch}_{\frac{7}{2};1,0} + 12 (\mathrm{Ch}_{\frac{7}{2};1,1} + \mathrm{Ch}_{\frac{7}{2};1,-1}) + (\mathrm{Ch}_{\frac{7}{2};1,2} + \mathrm{Ch}_{\frac{7}{2};1,-2}) + \mathrm{Ch}_{\frac{7}{2};1,4} \\[0.1cm]
    &\quad + \sum_{t=1}^\infty \left( \sum_{k=1}^3 A_{t,k} (\mathrm{Ch}_{\frac{7}{2};1+t,k} + \mathrm{Ch}_{\frac{7}{2};1+t,-k}) + A_{t,4} \, \mathrm{Ch}_{\frac{7}{2};1+t,4} \right)
\end{aligned}
\end{equation}
where $A_{t,1}=\{32922,2969505,\dots\}$, $A_{t,2}=\{14125,1658880,\dots\}$, $A_{t,3}=\{1989,484920,\dots\}$, and $A_{t,4}=\{45,62120,\dots\}$ for $t=\{1,2,\dots\}$.
The difference with the extremal one in the polar region is the terms $q^0y^2$ and $q^0y$, thus this is a $\beta$-extremal elliptic genus with $\beta>4$. This means that the earlier case is nearer to the extremal elliptic genus.

From the generator expansion, it can be seen that the absolute values of the coefficients of $\phi_{0,1}^4$ are equal in both cases. It can be understood as follows.

Let $\phi(\tau,z)$ be a weak Jacobi form of weight $w=0$ and index $m\in\BZ$. Since $\phi_{0,1}(\tau,0)=12$ and $\phi_{-2,1}(\tau,0)=0$, $\phi(\tau,0)$ is a constant determined by the coefficient of $\phi_{0,1}^m$. On the other hand, $Z_{\mathrm{EG}}(\tau,0\,;\Lambda(C),\chi) = Z_{\widetilde{\mathrm{R}}}(\tau\,;\Lambda(C),\chi)$ and the dependence of $Z_{\widetilde{\mathrm{R}}}$ on $\chi$ is only the overall sign. Thus, the absolute value of the coefficient of $\phi_{0,1}^m$ in $Z_{\mathrm{EG}}(\tau,z\,;\Lambda(C),\chi)$ does not depend on the choice of characteristic vector $\chi$.

We can also observe that the coefficients of $\phi_{-2,1}^4 E_4^2$ are equal in both cases. This is not a coincidence.
In what follows, we show that the coefficient of $\phi_{-2,1}^{n/6} E_4^{n/12} $ in the U(1)-graded partition function~\eqref{eq:elliptic_from_lattice} does not depend on the choice of characteristic vectors $\chi\in\CX_{n/3}$.

Let $\phi(\tau,z)$ be a weak Jacobi form of weight $w$ and index $m\in\BZ$. Substituting $a=d=0,\ b=-1,\ c=1$ and $\tau=\i$ in \eqref{eq:weak_Jacobi_md}, we get
\begin{equation}
    \phi(\i, -\i z) = \i^w e^{2\pi m z^2} \phi(\i,z) \,.
\end{equation}
When $z=\frac{1}{2}(1-\i)$, from $-\i z=z-1$ and \eqref{eq:weak_Jacobi_sf}, 
\begin{equation}
    \phi(\i,\tfrac{1}{2}(1-\i)) = \i^w e^{-\i\pi m} \phi(\i,\tfrac{1}{2}(1-\i)) \,.
\end{equation}
Thus, $\phi(\i,\frac{1}{2}(1-\i))$ can be nonzero only if $w+2m\in4\BZ$. For the four generators, it follows that $E_6(\i)=0$ and $\phi_{0,1}(\i,\frac{1}{2}(1-\i))=0$. Then, when a weak Jacobi form $\phi(\tau,z)$ such that $w+2m\in4\BZ$ is decomposed by generators, only the term consisting of $E_4(\tau)$ and $\phi_{-2,1}(\tau,z)$, i.e. $(E_4)^{(w+2m)/4} (\phi_{-2,1})^m$, can have nonzero contribution at $\tau=\i,\ z=\frac{1}{2}(1-\i)$. In other words, the coefficient of such a term is determined solely by the value $\phi(\i,\frac{1}{2}(1-\i))$.
On the other hand, from \eqref{eq:elliptic_from_lattice}, the elliptic genus at that values is
\begin{equation}
\begin{aligned}
    Z_{\mathrm{EG}}(\i,\tfrac{1}{2}(1-\i)\,;\Lambda(C),\chi) &= \frac{1}{\eta(\i)^n}\,\sum_{\lambda\,\in\,\Lambda(C)} e^{\pi\chi\cdot (\lambda+\frac{\chi}{2})} e^{-\pi(\lambda+\frac{\chi}{2})^2} \\
    &= \frac{1}{\eta(\i)^n}\,\sum_{\lambda\,\in\,\Lambda(C)} e^{-\pi(\lambda^2-\frac{\chi^2}{4})} \,.
\end{aligned}
\end{equation}
If we fix $\chi^2=n/3$, this does not depend on the choice of $\chi$.
Combining them, we can conclude that the coefficient of $E_4^{n/12} \phi_{-2,1}^{n/6}$ in $Z_{\mathrm{EG}}(\tau,z\,;\Lambda(C),\chi)$ does not depend on $\chi$.

\subsection{$n=36\,,\,$ $p=3$}

For the central charge 36, the polar terms of the extremal elliptic genus are
\begin{align}
    y^6 + 0 (y^5 + y^4 + y^3 + y^2 + y) + q (y^6 - y^5) \,.
\end{align}
However, as shown in section~\ref{ss:extremal_intro}, the extremal elliptic genus cannot be realized as a weak Jacobi form. Moreover, it is not possible to construct the elliptic genus such that the coefficient of $q y^6$ is $1$ from codes over any $\BF_p$ since if we choose the $\mathrm{U}(1)$ current as $J(z)=\chi\cdot L(z)$, $\chi\in\CX_{n/3}$, then the state $\ket{\chi/2}$ in the R sector corresponds to $q^0y^6$ and its descendants $\alpha_{-1}^i \ket{\chi/2},\,i=1,\dots,36$ to $36qy^6$ in the elliptic genus.

From the discussion at $n=24$, it is natural to expect that a code with a large minimum Hamming weight gives a near-extremal elliptic genus. In the ternary case, the largest minimum Hamming weight that a self-dual code with length $n=36$ can have is known to be $12$. Thus, we consider a code $C\subset\BF_3^{36}$ generated by ($[n,k,d]=[36,18,12]$ code in \cite{codetables})
\begin{equation}
\begin{bmatrix}
    1&0&0&0&0&0&0&0&0&0&0&0&0&0&0&0&0&0&0&2&2&1&2&1&1&1&2&2&1&1&1&2&1&2&2&1 \\
    0&1&0&0&0&0&0&0&0&0&0&0&0&0&0&0&0&0&2&0&2&2&1&2&1&1&1&2&2&1&1&1&2&1&2&1 \\
    0&0&1&0&0&0&0&0&0&0&0&0&0&0&0&0&0&0&2&2&0&2&2&1&2&1&1&1&2&2&1&1&1&2&1&1 \\
    0&0&0&1&0&0&0&0&0&0&0&0&0&0&0&0&0&0&1&2&2&0&2&2&1&2&1&1&1&2&2&1&1&1&2&1 \\
    0&0&0&0&1&0&0&0&0&0&0&0&0&0&0&0&0&0&2&1&2&2&0&2&2&1&2&1&1&1&2&2&1&1&1&1 \\
    0&0&0&0&0&1&0&0&0&0&0&0&0&0&0&0&0&0&1&2&1&2&2&0&2&2&1&2&1&1&1&2&2&1&1&1 \\
    0&0&0&0&0&0&1&0&0&0&0&0&0&0&0&0&0&0&1&1&2&1&2&2&0&2&2&1&2&1&1&1&2&2&1&1 \\
    0&0&0&0&0&0&0&1&0&0&0&0&0&0&0&0&0&0&1&1&1&2&1&2&2&0&2&2&1&2&1&1&1&2&2&1 \\
    0&0&0&0&0&0&0&0&1&0&0&0&0&0&0&0&0&0&2&1&1&1&2&1&2&2&0&2&2&1&2&1&1&1&2&1 \\
    0&0&0&0&0&0&0&0&0&1&0&0&0&0&0&0&0&0&2&2&1&1&1&2&1&2&2&0&2&2&1&2&1&1&1&1 \\
    0&0&0&0&0&0&0&0&0&0&1&0&0&0&0&0&0&0&1&2&2&1&1&1&2&1&2&2&0&2&2&1&2&1&1&1 \\
    0&0&0&0&0&0&0&0&0&0&0&1&0&0&0&0&0&0&1&1&2&2&1&1&1&2&1&2&2&0&2&2&1&2&1&1 \\
    0&0&0&0&0&0&0&0&0&0&0&0&1&0&0&0&0&0&1&1&1&2&2&1&1&1&2&1&2&2&0&2&2&1&2&1 \\
    0&0&0&0&0&0&0&0&0&0&0&0&0&1&0&0&0&0&2&1&1&1&2&2&1&1&1&2&1&2&2&0&2&2&1&1 \\
    0&0&0&0&0&0&0&0&0&0&0&0&0&0&1&0&0&0&1&2&1&1&1&2&2&1&1&1&2&1&2&2&0&2&2&1 \\
    0&0&0&0&0&0&0&0&0&0&0&0&0&0&0&1&0&0&2&1&2&1&1&1&2&2&1&1&1&2&1&2&2&0&2&1 \\
    0&0&0&0&0&0&0&0&0&0&0&0&0&0&0&0&1&0&2&2&1&2&1&1&1&2&2&1&1&1&2&1&2&2&0&1 \\
    0&0&0&0&0&0&0&0&0&0&0&0&0&0&0&0&0&1&1&1&1&1&1&1&1&1&1&1&1&1&1&1&1&1&1&0
\end{bmatrix} \,.
\end{equation}

As in the previous example, the elliptic genus $Z_{\mathrm{EG}}$ depends on the choice of $\mathrm{U}(1)$ current, i.e., characteristic vector $\chi\in\CX_{n/3}$. There are 888 characteristic vectors in $\CX_{n/3}$, which are divided into two types.

For 72 characteristic vectors, the elliptic genus is
\begin{equation}
\begin{aligned}
    &Z_{\mathrm{EG}}(\tau,z; \Lambda(C),\chi) \\[0.1cm]
    &= y^6 - 408y + 70 - 408y^{-1} + y^{-6} + (36y^6 - 36y^5 + \dots ) q + \CO(q^2) \\[0.1cm]
    &= -\frac{31}{124416}\phi_{0,1}^6 + \frac{365}{41472}\phi_{0,1}^4 \phi_{-2,1}^2 E_4 + \frac{269}{7776}\phi_{0,1}^3 \phi_{-2,1}^3 E_6 + \frac{3071}{13824} \phi_{0,1}^2 \phi_{-2,1}^4 E_4^2 \\[0.2cm]
    &\quad + \frac{271}{648} \phi_{0,1} \phi_{-2,1}^5 E_4 E_6 + \frac{3281}{13824} \phi_{-2,1}^6 E_4^3 + \frac{1231}{15552} \phi_{-2,1}^6 E_6^2 \\[0.2cm]
    &= 71 \, \mathrm{Ch}_{\frac{11}{2};\frac{3}{2},0} + 408 (\mathrm{Ch}_{\frac{11}{2};\frac{3}{2},1} + \mathrm{Ch}_{\frac{11}{2};\frac{3}{2},-1}) + \mathrm{Ch}_{\frac{11}{2};\frac{3}{2},6} \\
    &\quad + \sum_{t=1}^\infty \left( \sum_{k=1}^5 A_{t,k} (\mathrm{Ch}_{\frac{11}{2};\frac{3}{2}+t,k} + \mathrm{Ch}_{\frac{11}{2};\frac{3}{2}+t,-k}) + A_{t,6} \, \mathrm{Ch}_{\frac{11}{2};\frac{3}{2}+t,6} \right)
\end{aligned}
\end{equation}
where $A_{t,1}=\{\,580550\,,\,167398422\,,\,\dots\,\}$\,, \,$A_{t,2}=\{\,298452\,,\,104775760\,,\,\dots\,\}$\,,\, $A_{t,3}=\{\,71400\,,\\ 39702684\,,\,\dots\}$, $A_{t,4}=\{7140,8410920,\dots\}$, $A_{t,5}=\{0,843115,\dots\}$, and $A_{t,6}=\{35,29154,\dots\}$ for $t=\{1,2,\dots\}$.

For 816 characteristic vectors, the elliptic genus is
\begin{equation}
\begin{aligned}
    &Z_{\mathrm{EG}}(\tau,z; \Lambda(C),\chi) \\[0.1cm]
    &= y^6 + 93y^2 - 36y + 628 - 36y^{-1} + 93y^{-2} + y^{-6} + (36y^6 - 36y^5 + \dots ) q + \CO(q^2) \\[0.1cm]
    &= \frac{31}{124416}\phi_{0,1}^6 + \frac{241}{41472}\phi_{0,1}^4 \phi_{-2,1}^2 E_4 + \frac{569}{15552}\phi_{0,1}^3 \phi_{-2,1}^3 E_6 + \frac{3133}{13824} \phi_{0,1}^2 \phi_{-2,1}^4 E_4^2 \\[0.2cm]
    &\quad + \frac{2137}{5184} \phi_{0,1} \phi_{-2,1}^5 E_4 E_6 + \frac{3281}{13824} \phi_{-2,1}^6 E_4^3 + \frac{631}{7776} \phi_{-2,1}^6 E_6^2 \\[0.2cm]
    &= 629 \, \mathrm{Ch}_{\frac{11}{2};\frac{3}{2},0} + 36 (\mathrm{Ch}_{\frac{11}{2};\frac{3}{2},1} + \mathrm{Ch}_{\frac{11}{2};\frac{3}{2},-1}) + 93 (\mathrm{Ch}_{\frac{11}{2};\frac{3}{2},2} + \mathrm{Ch}_{\frac{11}{2};\frac{3}{2},-2}) + \mathrm{Ch}_{\frac{11}{2};\frac{3}{2},6} \\[0.2cm]
    &\quad + \sum_{t=1}^\infty \left( \sum_{k=1}^5 A_{t,k} (\mathrm{Ch}_{\frac{11}{2};\frac{3}{2}+t,k} + \mathrm{Ch}_{\frac{11}{2};\frac{3}{2}+t,-k}) + A_{t,6} \, \mathrm{Ch}_{\frac{11}{2};\frac{3}{2}+t,6} \right)
\end{aligned}
\end{equation}
where $A_{t,1}=\{\,581480\,,\,167402049\,,\,\dots\,\}$\,,\, $A_{t,2}=\{\,296499\,,\,104768320\,,\,\dots\,\}$\,,\, $A_{t,3}=\{\,72795\,,\\39708264,\dots\}$, $A_{t,4}=\{6768,8409525,\dots\}$, $A_{t,5}=\{0,842185,\dots\}$ and $A_{t,6}=\{35,29712,\dots\}$, for $t=\{1,2,\dots\}$.

They are both $\beta$-extremal elliptic genera with $\beta>12$ since they include the term $36qy^6$. 
In addition, the coefficients of $\phi_{-2,1}^6 E_4^3$ and the absolute values of the coefficients of $\phi_{0,1}^6$ are equal respectively, as shown in the previous example.

\section{Discussion}
\label{sec:discussion}

In this paper, we have considered fermionic CFTs constructed from classical code over $\BF_p$~\cite{Kawabata:2023nlt} and given a systematic way of computing the U(1)-graded partition function that returns elliptic genera for $\CN=2$ theories.
Taking the U(1) current to be consistent with the $\CN=2$ algebra, we have shown that the U(1)-graded partition function exhibits modular invariance and spectral flow invariance.
Using some examples of classical codes, we demonstrated the unified construction of extremal $\CN=2$ elliptic genera for central charges $n=12,24$. Also, we discussed other examples that give near-extremal elliptic genera.

Except for the ternary case, the existence of $\CN=2$ supersymmetry is not manifest in fermionic code CFTs. However, Theorem~\ref{theorem:weak} dictates that, under some assumptions, the U(1)-graded partition function becomes a weak Jacobi form, which is invariant under the modular transformation and spectral flow. Additionally, we can consistently identify the spectral flow operator and chiral primary fields.
These observations naturally lead us to the following conjecture.
\begin{conjecture}
    If $\Lambda$ is an odd self-dual lattice of rank $n\in12\BZ$ and the minimum of characteristic vectors in $\Lambda$ is $n/3$, then the fermionic CFT whose set of vertex operators is specified by $\Lambda$ admits $\CN=2$ supersymmetry.
    \label{conj:super}
\end{conjecture}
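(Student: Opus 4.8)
The plan is to exhibit the complete $\CN=2$ superconformal algebra inside the chiral vertex algebra of the fermionic lattice CFT built on $\Lambda$. Two of its four generators are already in hand: the stress tensor $T(z) = -\tfrac12\,{:}\partial X(z)\cdot\partial X(z){:}$ and the $\mathrm{U}(1)$ current $J(z) = \sum_{i=1}^n \chi_i\,\i\,\partial X^i(z)$, where $\chi\in\Lambda$ is a characteristic vector with $\chi\cdot\chi = n/3$; such a $\chi$ exists precisely because the minimum characteristic norm is $n/3$ while $\chi\cdot\chi\equiv n\pmod 8$ rules out any smaller value for $n\in 12\BZ$. What remains is to construct the supercurrents $G^{\pm}(z)$ of conformal weight $3/2$ and $\mathrm{U}(1)$ charge $\pm1$ obeying the OPEs~\eqref{eq:N=2SCA}; once these are in place the commutation relations, the spectral flow isomorphism — already realized concretely by the vertex operator $V_\chi$ (cf.\ section~\ref{ss:spectral_flow}) — and hence $\CN=2$ supersymmetry all follow.

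First I would write down the most general weight-$3/2$, charge-$(+1)$ field in the NS chiral algebra. Since oscillators contribute only integer weight, it must take the form
\[
G^{+}(z)\;=\;\sum_{\substack{\kappa\in\Lambda,\ \kappa\cdot\kappa=3\\ \chi\cdot\kappa=1}} a_\kappa\, V_\kappa(z)\;+\;\sum_{\substack{\kappa\in\Lambda,\ \kappa\cdot\kappa=1\\ \chi\cdot\kappa=1}}\ \sum_{i=1}^n b_\kappa^{\,i}\,{:}\partial X^i(z)\,e^{\i\kappa\cdot X(z)}{:}\,,
\]
with $G^{-}=(G^{+})^{\dagger}$, which automatically carries weight $3/2$ and charge $-1$. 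The OPE $J(z)G^{\pm}(w)$ has charge $\pm1$ by construction; $T(z)G^{\pm}(w)$ (Virasoro primarity of weight $3/2$) forces $b_\kappa\perp\kappa$ for each norm-$1$ vector; and $G^{\pm}(z)G^{\pm}(w)\sim 0$ imposes additional linear relations on the $a_\kappa$ that kill the spurious poles coming from pairs with negative inner product. The decisive constraint is the mixed OPE
\[
G^{+}(z)\,G^{-}(w)\;\sim\;\frac{\tfrac23 n}{(z-w)^3}\;+\;\frac{2J(w)}{(z-w)^2}\;+\;\frac{2T(w)+\partial J(w)}{z-w}\,,
\]
whose triple pole fixes the overall normalization, whose double pole forces the appropriate charge-weighted sum over the contributing vectors to reproduce exactly $\chi$, and whose simple pole must assemble precisely into $2T+\partial J$ with no leftover operators of dimension $\le 2$. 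I would determine the $b_\kappa^{\,i}$ and $a_\kappa$ by solving this (linear, then quadratic) system.

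The hard part will be proving that this over-determined system is solvable for \emph{every} odd self-dual lattice meeting the hypothesis: there is no obvious reason an arbitrary such $\Lambda$ carries enough norm-$1$ and norm-$3$ vectors, distributed across the $J$-charge sectors with the right pairwise inner products, for the supercurrent to close. The content of the conjecture is exactly that the single condition ``minimum characteristic norm $=n/3$'' — equivalently, by Proposition~\ref{prop:unitarity_bound}, saturation of the Ramond bound $h\ge n/24$, so that the spectral-flow partner of the NS vacuum sits at the very bottom of the Ramond spectrum — suffices. One cannot reduce to the ternary case of~\cite{Gaiotto:2018ypj}, since $\Lambda$ need not be a Construction~A lattice at all. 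A more promising line is to leverage what is already proved: $Z_{\mathrm{EG}}(\tau,z;\Lambda,\chi)$ is a weak Jacobi form of weight $0$ and index $n/6$ (Theorem~\ref{theorem:weak}), the integral spectral flow is inner (implemented by $V_\chi$), and a chiral ring with the expected bounds exists (section~\ref{ss:spectral_flow}). Combining these with a reconstruction-type theorem — that a unitary chiral fermionic vertex algebra carrying a holomorphic $\mathrm{U}(1)$ current of level $n/3$ whose integral spectral flow is inner and whose Ramond character is a weak Jacobi form of index $n/6$ necessarily contains the $\CN=2$ algebra — would reduce the conjecture to vertex-algebra representation theory. Absent such a general theorem one can still verify the conjecture for small $n$ via the classification of chiral fermionic CFTs with $c\le 24$ and then look for a structural induction in $n$; I expect this final combinatorial and representation-theoretic step to be where the genuine difficulty lies.
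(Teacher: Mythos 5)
There is a genuine gap, and it is worth stating plainly: the statement you are asked to prove is Conjecture~\ref{conj:super}, which the paper itself does \emph{not} prove --- it is posed as an open problem in section~\ref{sec:discussion}, supported only by the circumstantial evidence you also cite (Theorem~\ref{theorem:weak}, the vertex-operator realization of the spectral flow, and the chiral-ring identification of section~\ref{ss:spectral_flow}). Your proposal is a reasonable program --- write the candidate supercurrents as combinations of vertex operators $V_\kappa$ with $\kappa\cdot\kappa=3$, $\chi\cdot\kappa=1$ and oscillator-dressed operators with $\kappa\cdot\kappa=1$, then impose the OPEs~\eqref{eq:N=2SCA} --- and it is essentially the strategy that works in the ternary Construction~A case of~\cite{Gaiotto:2018ypj}. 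But the decisive step, showing that the resulting over-determined linear/quadratic system for the coefficients $a_\kappa$, $b_\kappa^{\,i}$ has a solution for \emph{every} odd self-dual $\Lambda$ of rank $n\in12\BZ$ with minimal characteristic norm $n/3$, is exactly the content of the conjecture, and you leave it open (as you acknowledge). Nothing in the hypothesis obviously guarantees that $\Lambda$ even contains enough vectors of norm $1$ or $3$ in the charge-$(\pm1)$ sectors with the inner-product pattern needed for $G^\pm G^\pm\sim 0$ and for the single pole of $G^+G^-$ to close onto $2T+\partial J$ without extra weight-$\le 2$ operators.

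The fallback you invoke --- a ``reconstruction-type theorem'' asserting that a unitary chiral fermionic vertex algebra with a level-$n/3$ U(1) current, inner integral spectral flow, and a weak-Jacobi-form Ramond character must contain the $\CN=2$ algebra --- is not an available result; as stated it would be essentially equivalent to (indeed stronger than) the conjecture itself, since modular and elliptic transformation properties of a graded character are far weaker than the existence of weight-$3/2$ fields with prescribed OPEs. Likewise, verification for small $n$ via the $c\le 24$ classification would give examples, not a proof, and no induction mechanism in $n$ is offered. So the proposal should be read as a plausible attack plan consistent with the paper's evidence, not as a proof; the paper deliberately stops short of claiming this result, and so must you.
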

If an odd self-dual lattice can be constructed from a ternary self-dual code up to rotation, then the resulting CFT admits $\CN=2$ supersymmetry~\cite{Gaiotto:2018ypj}.
However, as of now, we are not certain whether any lattice satisfying the assumption in Conjecture~\ref{conj:super} accepts the construction from ternary codes.
It would be interesting to pursue the validity of the conjecture.

Whereas this paper focuses on $\CN=2$ supersymmetry in fermionic code CFTs,
there is a possibility that $\CN=4$ supersymmetry emerges in fermionic code CFTs. In this case, the R-symmetry becomes SU(2) symmetry rather than U(1).
Note that, in the binary case, any fermionic code CFT contains affine SU(2) symmetry as discussed in the bosonic case~\cite{Dolan:1994st}. 
Hence, the binary construction might be helpful to provide $\CN=4$ superconformal theories.

Although this paper deals with the construction of extremal $\CN=2$ CFTs from classical codes, it would be interesting if one could construct extremal $\CN=4$ CFTs, some of which have been realized in actual CFTs at small central charges \cite{Gaberdiel:2008xb,Cheng:2014owa,Harrison:2016hbq}.
For the central charge $n=12$, it is known that extremal $\CN=2$ and $\CN=4$ elliptic genera are the same functions, so the fermionic code CFT constructed in section~\ref{ss:ext_m=2} realizes both the extremal $\CN=2$ and $\CN=4$ theories.
For the central charge $n=24$, the extremal $\CN=2$ elliptic genus is different from the $\CN=4$ one.
To construct $\CN=4$ extremal CFT, we attempted to find binary codes $(p=2)$ that return the Ramond-Ramond partition function consistent with the $\CN=4$ extremal elliptic genus~\cite{Harrison:2016hbq}.
However, we could not find such binary codes from the database~\cite{database}.
Also, in the ternary case $(p=3)$, it is straightforward to deduce from Theorem~1.3 in~\cite{Gaiotto:2018ypj} that any ternary code cannot yield the $\CN=4$ extremal elliptic genus.
If one considers classical codes over $p\geq5$, there would be some chance to obtain the $\CN=4$ extremal CFT.
Additionally, there would be some possibilities that $\CN=4$ extremal CFTs arise if one can generalize the construction of fermionic code CFTs to the finite rings $\BZ_k$ for an integer $k$.
Recently, the construction of Narain CFTs from quantum codes over the rings has been given~\cite{Alam:2023qac}.
Analogously, we may construct fermionic CFTs from classical codes over rings. 
We can expect that such an extension widely broadens the scope of fermionic code CFTs and yields theories inaccessible in the present paper.

In section~\ref{sec:extremal}, we constructed extremal $\CN=2$ elliptic genera for central charges $n=12$ and $n=24$. To construct extremal CFTs with larger central charges, we have to care about the descendants of the bosonic oscillators. 
Those descendants always appear in CFTs based on lattices and are obstacles to constructing extremal CFTs.
One possible candidate to resolve the problem is to take the orbifold by an appropriate symmetry as in the construction of the Monster CFT \cite{frenkel1984natural}, which is an extremal bosonic CFT.
As discussed in the ternary case \cite{Gaiotto:2018ypj}, it would be interesting to study the fermionic orbifold of code CFTs.

In section~\ref{sec:extremal} and \ref{sec:near-extremal}, we computed the elliptic genera and decompose them into $\CN=2$ superconformal characters.
The Mathieu moonshine was discovered by noting the relationship between the coefficients of the characters in the elliptic genus and the dimensions of the irreducible representations~\cite{Eguchi:2010ej} (see, e.g.,~\cite{Anagiannis:2018jqf} and references therein).
Our construction of elliptic genera uses classical error-correcting codes, and their automorphisms (symmetries) have been studied extensively~\cite{conway2013sphere}.
There may be an intriguing relationship between automorphisms of classical codes and coefficients of characters in elliptic genera.
It would help acquire a better understanding of the moonshine phenomena.

In the present paper, we computed examples of the U(1)-graded partition function with the Fourier expansion~\eqref{eq:exp_weak}. In other words, the unitarity bound $(h\geq n/24)$ in the R sector holds.
However, one can consider the U(1)-graded partition function whose Fourier expansion has singular terms.
For example, if we take the last example in Appendix B of \cite{Kawabata:2023nlt}, the (supersymmetric) unitarity bound $(h\geq n/24)$ in the R sector does not hold. However, the Construction A lattice contains a characteristic vector with norm $n/3=12$. Therefore, we can construct the U(1)-graded partition function that shows modular invariance and spectral flow invariance while it does not admit the Fourier expansion~\eqref{eq:exp_weak}.
It would be interesting to figure out its property with the help of number theory~\cite{Eichler:1985,Dabholkar:2012nd,DHoker:2022dxx}.

\acknowledgments
We are grateful to Hee-Cheol Kim, Minsung Kim, and Yutaka Matsuo for their valuable discussions.
The work of K.\,K. and S.\,Y. was supported by FoPM, WINGS Program, the University of Tokyo.
The work of K.\,K. was supported by JSPS KAKENHI Grant-in-Aid for JSPS fellows Grant No. 23KJ0436.

\appendix

\section{$\CN=2$ superconformal characters}
\label{app:N2_character}

We list the graded characters of the representations of the $\CN=2$ superconformal algebra. The following notation is consistent with \cite{Cheng:2014owa} up to overall signs.
For more details, see e.g.~\cite{Dobrev:1986hq,Kiritsis:1986rv,Matsuo:1986cj,Eguchi:1988af}.

The elliptic genus of an $\CN=2$ SCFT is invariant under integer spectral flows. Therefore, it should be decomposed into graded characters $\mathrm{Ch^{(R)}}(\tau,z)$ in the R sector defined by
\begin{equation}
    \mathrm{Ch^{(R)}}(\tau,z) = \sum_{\theta\in\frac{1}{2}+\BZ} \mathrm{SF}_{\theta} \, \mathrm{ch^{(NS)}}(\tau,z+\tfrac{1}{2})
\end{equation}
where $\mathrm{SF}_{\theta}$ is defined in \eqref{eq:R_vacuum} and $\mathrm{ch^{(NS)}}(\tau,z)$ is an irreducible character in the NS sector.

The characters are classified into three types: massive, massless, and graviton. Let the central charge $n\in 6\BZ,\, l=\frac{n}{6}-\frac{1}{2},\, q=e^{2\pi\i \tau},\, y=e^{2\pi\i z}$, with $h$ and $Q$ representing the conformal weight and the $\mathrm{U}(1)$ charge, i.e., the eigenvalues of $L_0$ and $J_0$.

\paragraph{Massive}
For the NS sector, the irreducible character of the massive representation with the highest weight $h,\,Q$ is
\begin{equation}
    \mathrm{ch}_{l;h,Q}^{(\mathrm{NS})}(\tau,z) = q^{h-\frac{n}{24}} y^Q \prod_{k=1}^\infty \frac{(1+yq^{k-\frac{1}{2}}) (1+y^{-1}q^{k-\frac{1}{2}})}{(1-q^k)^2} \,.
\end{equation}
The contribution $(1-q^k)^{-1}=1+q^k+q^{2k}+\dots$ comes from $L_{-k}$ and $J_{-k}$, $1+yq^r$ from $G_{-r}^+$, and $1+y^{-1}q^r$ from $G_{-r}^-$.

For the R sector, the graded character of the massive representation with the highest weight $h',\,Q'$ is given by
\begin{equation}
    \mathrm{Ch}_{l;h',Q'}^{(\mathrm{R})}(\tau,z) = \sum_{\theta\in\frac{1}{2}+\BZ} \mathrm{SF}_{\theta} \, \mathrm{ch}_{l;h,Q}^{(\mathrm{NS})}(\tau,z+\tfrac{1}{2})
\end{equation}
where
\begin{equation}
    h' = h + \frac{Q}{2} + \frac{n}{24} \,,\quad
    Q' = \begin{cases}
        Q + \frac{n}{6} & (Q'>0) \\
        Q + \frac{n}{6} - 1 & (Q'<0)
    \end{cases} \,.
\end{equation}
From \eqref{eq:sf_LJ}, we obtain a more explicit form as
\begin{equation}
\begin{aligned}
    &\mathrm{Ch}_{l;h',Q'}^{(\mathrm{R})}(\tau,z) \\
    &= \sum_{\theta\in\frac{1}{2}+\BZ} q^{h-\frac{n}{24}+Q\theta+\frac{n}{6}\theta^2} (-y)^{Q+\frac{n}{3}\theta} \prod_{k=1}^\infty \frac{(1-yq^{k-\frac{1}{2}+\theta}) (1-y^{-1}q^{k-\frac{1}{2}-\theta})}{(1-q^k)^2} \\
    &= \sum_{m\in\BZ} q^{h-\frac{n}{24}+Q(m+\frac{1}{2})+\frac{n}{6}(m+\frac{1}{2})^2-\frac{1}{2}m(m+1)} (-y)^{Q+\frac{n}{3}(m+\frac{1}{2})-m} \prod_{k=1}^\infty \frac{(1-yq^{k}) (1-y^{-1}q^{k-1})}{(1-q^k)^2} \\
    &= (-1)^{Q'+\delta} q^{h'-\frac{n}{24}} y^{Q'+\delta} \prod_{k=1}^\infty \frac{(1-yq^{k}) (1-y^{-1}q^{k-1})}{(1-q^k)^2} \sum_{m\in\BZ} (-1)^m q^{lm^2 + (Q'+\delta-\frac{1}{2})m} y^{2lm}
\end{aligned}
\end{equation}
where $\delta$ is $1$ if $Q'<0$ and $0$ if $Q'>0$.

\paragraph{Massless}
For the NS sector, the irreducible character of the massless representation with the highest weight $h=-\frac{Q}{2},\,Q$ is
\begin{equation}
    \mathrm{ch}_{l;h=-\frac{Q}{2},Q}^{(\mathrm{NS})}(\tau,z) = q^{h-\frac{n}{24}} y^Q \frac{1}{1+y^{-1}q^{\frac{1}{2}}} \prod_{k=1}^\infty \frac{(1+yq^{k-\frac{1}{2}}) (1+y^{-1}q^{k-\frac{1}{2}})}{(1-q^k)^2}
\end{equation}
where $1/(1+y^{-1}q^{\frac{1}{2}})$ comes from $G_{-1/2}^-\ket{h=-\frac{Q}{2},Q}=0$.

For the R sector, the graded character of the massless representation with the highest weight $h'=\frac{n}{24},\,Q'$ is given by
\begin{equation}
    \mathrm{Ch}_{l;h'=\frac{n}{24},Q'}^{(\mathrm{R})}(\tau,z) = \sum_{\theta\in\frac{1}{2}+\BZ} \mathrm{SF}_{\theta} \, \mathrm{ch}_{l;h=-\frac{Q}{2},Q}^{(\mathrm{NS})}(\tau,z+\tfrac{1}{2})
\end{equation}
where $Q'=Q+\frac{n}{6}$.
From a similar discussion as the massive representation,
\begin{equation}
\begin{aligned}
    \mathrm{Ch}_{l;h'=\frac{n}{24},Q'}^{(\mathrm{R})}(\tau,z)
    &= (-1)^{Q'} y^{Q'} \prod_{k=1}^\infty \frac{(1-yq^{k}) (1-y^{-1}q^{k-1})}{(1-q^k)^2} \\
    &\qquad \times \sum_{m\in\BZ} \frac{1}{1-y^{-1}q^{-m}} (-1)^m q^{lm^2 + (Q'-\frac{1}{2})m} y^{2lm} \,.
\end{aligned}
\end{equation}

\paragraph{Graviton}
According to the equation (7.5) in \cite{Cheng:2014owa}, the graded character of the graviton representation with the highest weight $h'=\frac{n}{24},\,Q'=\frac{n}{6}$ in the R sector is given by
\begin{equation}
\begin{aligned}
    \mathrm{Ch}_{l;h'=\frac{n}{24},Q'=\frac{n}{6}}^{(\mathrm{R})}(\tau,z)
    &= q^{-s} \left( \mathrm{Ch}_{l;\frac{n}{24}+s,\frac{n}{6}}^{(\mathrm{R})} + \sum_{k=1}^{\frac{n}{6}-1} (-1)^k \left( \mathrm{Ch}_{l;\frac{n}{24}+s,\frac{n}{6}-k}^{(\mathrm{R})} + \mathrm{Ch}_{l;\frac{n}{24}+s,k-\frac{n}{6}}^{(\mathrm{R})} \right) \right) \\
    &\quad + (-1)^{\frac{n}{6}} \mathrm{Ch}_{l;\frac{n}{24},0}^{(\mathrm{R})}
\end{aligned}
\end{equation}
where $s$ is an arbitrary positive integer.

Using these graded characters, we can decompose the elliptic genus of an $\CN=2$ SCFT as
\begin{equation}
    Z_{\mathrm{EG}}(\tau,z) = A_{0,0}\,\mathrm{Ch}_{l;\frac{n}{24},0}^{(\mathrm{R})} + \sum_{t=0}^{\infty} \left( \sum_{k=1}^{\frac{n}{6}-1} A_{t,k} \left( \mathrm{Ch}_{l;\frac{n}{24}+t,k}^{(\mathrm{R})} + \mathrm{Ch}_{l;\frac{n}{24}+t,-k}^{(\mathrm{R})} \right) + A_{t,\frac{n}{6}}\,\mathrm{Ch}_{l;\frac{n}{24}+t,\frac{n}{6}}^{(\mathrm{R})} \right)
\end{equation}
where $A_{t,k}\in\BZ$.

\bibliographystyle{JHEP}
\bibliography{elliptic}

\providecommand{\href}[2]{#2}\begingroup\raggedright\begin{thebibliography}{10}

\bibitem{frenkel1984natural}
I.~B. Frenkel, J.~Lepowsky, and A.~Meurman, {\it A natural representation of
  the fischer-griess monster with the modular function j as character},  {\em
  Proceedings of the National Academy of Sciences} {\bf 81} (1984), no.~10
  3256--3260.

\bibitem{frenkel1989vertex}
I.~Frenkel, J.~Lepowsky, and A.~Meurman, {\em Vertex operator algebras and the
  Monster}.
\newblock Academic press, 1989.

\bibitem{Dolan:1994st}
L.~Dolan, P.~Goddard, and P.~Montague, {\it {Conformal field theories,
  representations and lattice constructions}},  {\em Commun. Math. Phys.} {\bf
  179} (1996) 61--120, [\href{http://arxiv.org/abs/hep-th/9410029}{{\tt
  hep-th/9410029}}].

\bibitem{conway1979monstrous}
J.~H. Conway and S.~P. Norton, {\it Monstrous moonshine},  {\em Bulletin of the
  London Mathematical Society} {\bf 11} (1979), no.~3 308--339.

\bibitem{Dixon:1988qd}
L.~J. Dixon, P.~H. Ginsparg, and J.~A. Harvey, {\it {Beauty and the Beast:
  Superconformal Symmetry in a Monster Module}},  {\em Commun. Math. Phys.}
  {\bf 119} (1988) 221--241.

\bibitem{Benjamin:2015ria}
N.~Benjamin, E.~Dyer, A.~L. Fitzpatrick, and S.~Kachru, {\it {An extremal
  ${\mathcal{N}}=2$ superconformal field theory}},  {\em J. Phys. A} {\bf 48}
  (2015), no.~49 495401, [\href{http://arxiv.org/abs/1507.00004}{{\tt
  arXiv:1507.00004}}].

\bibitem{Harrison:2016hbq}
S.~M. Harrison, {\it {Extremal chiral $\mathcal N=4$ SCFT with $c=24$}},  {\em
  JHEP} {\bf 11} (2016) 006, [\href{http://arxiv.org/abs/1602.06930}{{\tt
  arXiv:1602.06930}}].

\bibitem{Moore:2023zmv}
G.~W. Moore and R.~K. Singh, {\it {Beauty And The Beast Part 2: Apprehending
  The Missing Supercurrent}},  \href{http://arxiv.org/abs/2309.02382}{{\tt
  arXiv:2309.02382}}.

\bibitem{Gaiotto:2018ypj}
D.~Gaiotto and T.~Johnson-Freyd, {\it {Holomorphic SCFTs with small index}},
  {\em Can. J. Math.} {\bf 74} (2022), no.~2 573--601,
  [\href{http://arxiv.org/abs/1811.00589}{{\tt arXiv:1811.00589}}].

\bibitem{Kawabata:2023nlt}
K.~Kawabata and S.~Yahagi, {\it {Fermionic CFTs from classical codes over
  finite fields}},  {\em JHEP} {\bf 05} (2023) 096,
  [\href{http://arxiv.org/abs/2303.11613}{{\tt arXiv:2303.11613}}].

\bibitem{Dymarsky:2020qom}
A.~Dymarsky and A.~Shapere, {\it {Quantum stabilizer codes, lattices, and
  CFTs}},  {\em JHEP} {\bf 21} (2020) 160,
  [\href{http://arxiv.org/abs/2009.01244}{{\tt arXiv:2009.01244}}].

\bibitem{Kawabata:2022jxt}
K.~Kawabata, T.~Nishioka, and T.~Okuda, {\it {Narain CFTs from qudit stabilizer
  codes}},  {\em SciPost Phys. Core} {\bf 6} (2023) 035,
  [\href{http://arxiv.org/abs/2212.07089}{{\tt arXiv:2212.07089}}].

\bibitem{Alam:2023qac}
Y.~F. Alam, K.~Kawabata, T.~Nishioka, T.~Okuda, and S.~Yahagi, {\it {Narain
  CFTs from nonbinary stabilizer codes}},
  \href{http://arxiv.org/abs/2307.10581}{{\tt arXiv:2307.10581}}.

\bibitem{Dymarsky:2020bps}
A.~Dymarsky and A.~Shapere, {\it {Solutions of modular bootstrap constraints
  from quantum codes}},  {\em Phys. Rev. Lett.} {\bf 126} (2021), no.~16
  161602, [\href{http://arxiv.org/abs/2009.01236}{{\tt arXiv:2009.01236}}].

\bibitem{Henriksson:2022dnu}
J.~Henriksson, A.~Kakkar, and B.~McPeak, {\it {Narain CFTs and quantum codes at
  higher genus}},  {\em JHEP} {\bf 04} (2023) 011,
  [\href{http://arxiv.org/abs/2205.00025}{{\tt arXiv:2205.00025}}].

\bibitem{Dymarsky:2022kwb}
A.~Dymarsky and R.~R. Kalloor, {\it {Fake Z}},  {\em JHEP} {\bf 06} (2023) 043,
  [\href{http://arxiv.org/abs/2211.15699}{{\tt arXiv:2211.15699}}].

\bibitem{Dymarsky:2020pzc}
A.~Dymarsky and A.~Shapere, {\it {Comments on the holographic description of
  Narain theories}},  {\em JHEP} {\bf 10} (2021) 197,
  [\href{http://arxiv.org/abs/2012.15830}{{\tt arXiv:2012.15830}}].

\bibitem{Henriksson:2021qkt}
J.~Henriksson, A.~Kakkar, and B.~McPeak, {\it {Classical codes and chiral CFTs
  at higher genus}},  {\em JHEP} {\bf 05} (2022) 159,
  [\href{http://arxiv.org/abs/2112.05168}{{\tt arXiv:2112.05168}}].

\bibitem{Angelinos:2022umf}
N.~Angelinos, D.~Chakraborty, and A.~Dymarsky, {\it {Optimal Narain CFTs from
  Codes}},  \href{http://arxiv.org/abs/2206.14825}{{\tt arXiv:2206.14825}}.

\bibitem{Dymarsky:2021xfc}
A.~Dymarsky and A.~Sharon, {\it {Non-rational Narain CFTs from codes over
  F$_{4}$}},  {\em JHEP} {\bf 11} (2021) 016,
  [\href{http://arxiv.org/abs/2107.02816}{{\tt arXiv:2107.02816}}].

\bibitem{Buican:2021uyp}
M.~Buican, A.~Dymarsky, and R.~Radhakrishnan, {\it {Quantum codes, CFTs, and
  defects}},  {\em JHEP} {\bf 03} (2023) 017,
  [\href{http://arxiv.org/abs/2112.12162}{{\tt arXiv:2112.12162}}].

\bibitem{Furuta:2022ykh}
Y.~Furuta, {\it {Relation between spectra of Narain CFTs and properties of
  associated boolean functions}},  {\em JHEP} {\bf 09} (2022) 146,
  [\href{http://arxiv.org/abs/2203.11643}{{\tt arXiv:2203.11643}}].

\bibitem{Yahagi:2022idq}
S.~Yahagi, {\it {Narain CFTs and error-correcting codes on finite fields}},
  {\em JHEP} {\bf 08} (2022) 058, [\href{http://arxiv.org/abs/2203.10848}{{\tt
  arXiv:2203.10848}}].

\bibitem{Henriksson:2022dml}
J.~Henriksson and B.~McPeak, {\it {Averaging over codes and an $SU(2)$ modular
  bootstrap}},  \href{http://arxiv.org/abs/2208.14457}{{\tt arXiv:2208.14457}}.

\bibitem{Furuta:2023xwl}
Y.~Furuta, {\it {On the Rationality and the Code Structure of a Narain CFT, and
  the Simple Current Orbifold}},  \href{http://arxiv.org/abs/2307.04190}{{\tt
  arXiv:2307.04190}}.

\bibitem{Kawabata:2023usr}
K.~Kawabata, T.~Nishioka, and T.~Okuda, {\it {Supersymmetric conformal field
  theories from quantum stabilizer codes}},
  \href{http://arxiv.org/abs/2307.14602}{{\tt arXiv:2307.14602}}.

\bibitem{Kawabata:2023iss}
K.~Kawabata, T.~Nishioka, and T.~Okuda, {\it {Narain CFTs from quantum codes
  and their $\mathbb{Z}_2$ gauging}},
  \href{http://arxiv.org/abs/2308.01579}{{\tt arXiv:2308.01579}}.

\bibitem{leech1971sphere}
J.~Leech and N.~Sloane, {\it Sphere packings and error-correcting codes},  {\em
  Canadian Journal of Mathematics} {\bf 23} (1971), no.~4 718--745.

\bibitem{conway2013sphere}
J.~H. Conway and N.~J.~A. Sloane, {\em Sphere packings, lattices and groups},
  vol.~290.
\newblock Springer Science \& Business Media, 2013.

\bibitem{Witten:1982df}
E.~Witten, {\it {Constraints on Supersymmetry Breaking}},  {\em Nucl. Phys. B}
  {\bf 202} (1982) 253.

\bibitem{Witten:1982im}
E.~Witten, {\it {Supersymmetry and Morse theory}},  {\em J. Diff. Geom.} {\bf
  17} (1982), no.~4 661--692.

\bibitem{Witten:1986bf}
E.~Witten, {\it {Elliptic Genera and Quantum Field Theory}},  {\em Commun.
  Math. Phys.} {\bf 109} (1987) 525.

\bibitem{serre2012course}
J.-P. Serre, {\em A course in arithmetic}, vol.~7.
\newblock Springer Science \& Business Media, 2012.

\bibitem{milnor1973symmetric}
J.~W. Milnor and D.~Husemoller, {\em Symmetric bilinear forms}, vol.~5.
\newblock Springer, 1973.

\bibitem{Schwimmer:1986mf}
A.~Schwimmer and N.~Seiberg, {\it {Comments on the N=2, N=3, N=4 Superconformal
  Algebras in Two-Dimensions}},  {\em Phys. Lett. B} {\bf 184} (1987) 191--196.

\bibitem{Lerche:1989uy}
W.~Lerche, C.~Vafa, and N.~P. Warner, {\it {Chiral Rings in N=2 Superconformal
  Theories}},  {\em Nucl. Phys. B} {\bf 324} (1989) 427--474.

\bibitem{Witten:2007kt}
E.~Witten, {\it {Three-Dimensional Gravity Revisited}},
  \href{http://arxiv.org/abs/0706.3359}{{\tt arXiv:0706.3359}}.

\bibitem{hohn2008conformal}
G.~H{\"o}hn, {\it Conformal designs based on vertex operator algebras},  {\em
  Advances in Mathematics} {\bf 217} (2008), no.~5 2301--2335.

\bibitem{Jankiewicz:2006mv}
M.~Jankiewicz and T.~W. Kephart, {\it {Modular invariants and Fischer-Griess
  monster}},  in {\em {26th International Colloquium on Group Theoretical
  Methods in Physics}}, 8, 2006.
\newblock \href{http://arxiv.org/abs/math-ph/0608001}{{\tt math-ph/0608001}}.

\bibitem{Gaberdiel:2008xb}
M.~R. Gaberdiel, S.~Gukov, C.~A. Keller, G.~W. Moore, and H.~Ooguri, {\it
  {Extremal N=(2,2) 2D Conformal Field Theories and Constraints of
  Modularity}},  {\em Commun. Num. Theor. Phys.} {\bf 2} (2008) 743--801,
  [\href{http://arxiv.org/abs/0805.4216}{{\tt arXiv:0805.4216}}].

\bibitem{Cheng:2014owa}
M.~C.~N. Cheng, X.~Dong, J.~F.~R. Duncan, S.~Harrison, S.~Kachru, and T.~Wrase,
  {\it {Mock Modular Mathieu Moonshine Modules}},  {\em Res. Math. Sci.} {\bf
  2} (2015) 13, [\href{http://arxiv.org/abs/1406.5502}{{\tt arXiv:1406.5502}}].

\bibitem{Lerche:1987ca}
W.~Lerche and N.~P. Warner, {\it {Index Theorems in $N=2$ Superconformal
  Theories}},  {\em Phys. Lett. B} {\bf 205} (1988) 471--478.

\bibitem{Kawai:1993jk}
T.~Kawai, Y.~Yamada, and S.-K. Yang, {\it {Elliptic genera and N=2
  superconformal field theory}},  {\em Nucl. Phys. B} {\bf 414} (1994)
  191--212, [\href{http://arxiv.org/abs/hep-th/9306096}{{\tt hep-th/9306096}}].

\bibitem{Dijkgraaf:2000fq}
R.~Dijkgraaf, J.~M. Maldacena, G.~W. Moore, and E.~P. Verlinde, {\it {A Black
  hole Farey tail}},  \href{http://arxiv.org/abs/hep-th/0005003}{{\tt
  hep-th/0005003}}.

\bibitem{Moore:2004fg}
G.~W. Moore, {\it {Strings and Arithmetic}},  in {\em {Les Houches School of
  Physics: Frontiers in Number Theory, Physics and Geometry}}, pp.~303--359,
  2007.
\newblock \href{http://arxiv.org/abs/hep-th/0401049}{{\tt hep-th/0401049}}.

\bibitem{Manschot:2007ha}
J.~Manschot and G.~W. Moore, {\it {A Modern Farey Tail}},  {\em Commun. Num.
  Theor. Phys.} {\bf 4} (2010) 103--159,
  [\href{http://arxiv.org/abs/0712.0573}{{\tt arXiv:0712.0573}}].

\bibitem{Manschot:2008zb}
J.~Manschot, {\it {On the space of elliptic genera}},  {\em Commun. Num. Theor.
  Phys.} {\bf 2} (2008) 803--833, [\href{http://arxiv.org/abs/0805.4333}{{\tt
  arXiv:0805.4333}}].

\bibitem{Tachikawa:2023nne}
Y.~Tachikawa, M.~Yamashita, and K.~Yonekura, {\it {Remarks on mod-2 elliptic
  genus}},  \href{http://arxiv.org/abs/2302.07548}{{\tt arXiv:2302.07548}}.

\bibitem{Witten:1993jg}
E.~Witten, {\it {On the Landau-Ginzburg description of N=2 minimal models}},
  {\em Int. J. Mod. Phys. A} {\bf 9} (1994) 4783--4800,
  [\href{http://arxiv.org/abs/hep-th/9304026}{{\tt hep-th/9304026}}].

\bibitem{Hori:2003ic}
K.~Hori, S.~Katz, A.~Klemm, R.~Pandharipande, R.~Thomas, C.~Vafa, R.~Vakil, and
  E.~Zaslow, {\em {Mirror symmetry}}, vol.~1 of {\em Clay mathematics
  monographs}.
\newblock AMS, Providence, USA, 2003.

\bibitem{Eichler:1985}
M.~Eichler and D.~Zagier, {\em The Theory of Jacobi Forms}.
\newblock Birkhäuser Boston, 1985.

\bibitem{niebur1974construction}
D.~Niebur, {\it Construction of automorphic forms and integrals},  {\em
  Transactions of the American Mathematical Society} {\bf 191} (1974) 373--385.

\bibitem{Keller:2020rwi}
C.~A. Keller and J.~M. Quinones, {\it {On the Space of Slow Growing Weak Jacobi
  Forms}},  \href{http://arxiv.org/abs/2011.02611}{{\tt arXiv:2011.02611}}.

\bibitem{macwilliams1977theory}
F.~J. MacWilliams and N.~J.~A. Sloane, {\em The theory of error-correcting
  codes}, vol.~16.
\newblock Elsevier, 1977.

\bibitem{thompson1983error}
T.~M. Thompson, {\em From error-correcting codes through sphere packings to
  simple groups}.
\newblock No.~21. Mathematical Association of America, 1983.

\bibitem{welsh1988codes}
D.~Welsh, {\em Codes and cryptography}.
\newblock Oxford University Press, 1988.

\bibitem{elkies2000lattices1}
N.~D. Elkies, {\it Lattices, linear codes, and invariants, part i},  {\em
  Notices of the AMS} {\bf 47} (2000).

\bibitem{elkies2000lattices}
N.~D. Elkies, {\it Lattices, linear codes, and invariants, part ii},  {\em
  Notices of the AMS} {\bf 47} (2000).

\bibitem{justesen2004course}
J.~Justesen and T.~H{\o}holdt, {\em A course in error-correcting codes},
  vol.~1.
\newblock European Mathematical Society, 2004.

\bibitem{nebe2006self}
G.~Nebe, E.~M. Rains, and N.~J.~A. Sloane, {\em Self-dual codes and invariant
  theory}, vol.~17.
\newblock Springer, 2006.

\bibitem{elkies1999lattices}
N.~D. Elkies, {\it Lattices and codes with long shadows},
  \href{http://arxiv.org/abs/math/9906086}{{\tt math/9906086}}.

\bibitem{Lerche:1988np}
W.~Lerche, A.~N. Schellekens, and N.~P. Warner, {\it {Lattices and Strings}},
  {\em Phys. Rept.} {\bf 177} (1989) 1.

\bibitem{kac1998vertex}
V.~G. Kac, {\em Vertex algebras for beginners}.
\newblock No.~10. American Mathematical Soc., 1998.

\bibitem{BoyleSmith:2023xkd}
P.~Boyle~Smith, Y.-H. Lin, Y.~Tachikawa, and Y.~Zheng, {\it {Classification of
  chiral fermionic CFTs of central charge $\le 16$}},
  \href{http://arxiv.org/abs/2303.16917}{{\tt arXiv:2303.16917}}.

\bibitem{Rayhaun:2023pgc}
B.~C. Rayhaun, {\it {Bosonic Rational Conformal Field Theories in Small Genera,
  Chiral Fermionization, and Symmetry/Subalgebra Duality}},
  \href{http://arxiv.org/abs/2303.16921}{{\tt arXiv:2303.16921}}.

\bibitem{Hohn:2023auw}
G.~H\"ohn and S.~M\"oller, {\it {Classification of Self-Dual Vertex Operator
  Superalgebras of Central Charge at Most 24}},
  \href{http://arxiv.org/abs/2303.17190}{{\tt arXiv:2303.17190}}.

\bibitem{conway1990new_code}
J.~H. Conway and N.~J. Sloane, {\it A new upper bound on the minimal distance
  of self-dual codes},  {\em IEEE Transactions on Information Theory} {\bf 36}
  (1990), no.~6 1319--1333.

\bibitem{Gaiotto:2007xh}
D.~Gaiotto and X.~Yin, {\it {Genus two partition functions of extremal
  conformal field theories}},  {\em JHEP} {\bf 08} (2007) 029,
  [\href{http://arxiv.org/abs/0707.3437}{{\tt arXiv:0707.3437}}].

\bibitem{Gaberdiel:2007ve}
M.~R. Gaberdiel, {\it {Constraints on extremal self-dual CFTs}},  {\em JHEP}
  {\bf 11} (2007) 087, [\href{http://arxiv.org/abs/0707.4073}{{\tt
  arXiv:0707.4073}}].

\bibitem{Yin:2007gv}
X.~Yin, {\it {Partition Functions of Three-Dimensional Pure Gravity}},  {\em
  Commun. Num. Theor. Phys.} {\bf 2} (2008) 285--324,
  [\href{http://arxiv.org/abs/0710.2129}{{\tt arXiv:0710.2129}}].

\bibitem{Yin:2007at}
X.~Yin, {\it {On Non-handlebody Instantons in 3D Gravity}},  {\em JHEP} {\bf
  09} (2008) 120, [\href{http://arxiv.org/abs/0711.2803}{{\tt
  arXiv:0711.2803}}].

\bibitem{Maloney:2007ud}
A.~Maloney and E.~Witten, {\it {Quantum Gravity Partition Functions in Three
  Dimensions}},  {\em JHEP} {\bf 02} (2010) 029,
  [\href{http://arxiv.org/abs/0712.0155}{{\tt arXiv:0712.0155}}].

\bibitem{Gaiotto:2008jt}
D.~Gaiotto, {\it {Monster symmetry and Extremal CFTs}},  {\em JHEP} {\bf 11}
  (2012) 149, [\href{http://arxiv.org/abs/0801.0988}{{\tt arXiv:0801.0988}}].

\bibitem{Li:2008dq}
W.~Li, W.~Song, and A.~Strominger, {\it {Chiral Gravity in Three Dimensions}},
  {\em JHEP} {\bf 04} (2008) 082, [\href{http://arxiv.org/abs/0801.4566}{{\tt
  arXiv:0801.4566}}].

\bibitem{Gaberdiel:2008pr}
M.~R. Gaberdiel and C.~A. Keller, {\it {Modular differential equations and null
  vectors}},  {\em JHEP} {\bf 09} (2008) 079,
  [\href{http://arxiv.org/abs/0804.0489}{{\tt arXiv:0804.0489}}].

\bibitem{Gaberdiel:2010jf}
M.~R. Gaberdiel, C.~A. Keller, and R.~Volpato, {\it {Genus Two Partition
  Functions of Chiral Conformal Field Theories}},  {\em Commun. Num. Theor.
  Phys.} {\bf 4} (2010) 295--364, [\href{http://arxiv.org/abs/1002.3371}{{\tt
  arXiv:1002.3371}}].

\bibitem{Benjamin:2016aww}
N.~Benjamin, E.~Dyer, A.~L. Fitzpatrick, A.~Maloney, and E.~Perlmutter, {\it
  {Small Black Holes and Near-Extremal CFTs}},  {\em JHEP} {\bf 08} (2016) 023,
  [\href{http://arxiv.org/abs/1603.08524}{{\tt arXiv:1603.08524}}].

\bibitem{Bae:2016yna}
J.-B. Bae, K.~Lee, and S.~Lee, {\it {Bootstrapping Pure Quantum Gravity in
  AdS3}},  \href{http://arxiv.org/abs/1610.05814}{{\tt arXiv:1610.05814}}.

\bibitem{Ferrari:2017kbp}
F.~Ferrari and S.~M. Harrison, {\it {Properties of extremal CFTs with small
  central charge}},  \href{http://arxiv.org/abs/1710.10563}{{\tt
  arXiv:1710.10563}}.

\bibitem{database}
M.~Harada and A.~Munemasa, ``Database of self-dual codes.''
\newblock \url{https://www.math.is.tohoku.ac.jp/~munemasa/selfdualcodes.htm}.

\bibitem{lattice_catalogue}
G.~Nebe and N.~Sloane, ``A catalogue of lattices.''
\newblock \url{http://www.math.rwth-aachen.de/~Gabriele.Nebe/LATTICES/}.

\bibitem{o1944construction}
R.~O’Connor and G.~Pall, {\it The construction of integral quadratic forms of
  determinant 1},  {\em Duke Mathematical Journal} {\bf 11} (1944), no.~2
  319--331.

\bibitem{codetables}
M.~Grassl, ``Code tables.''
\newblock \url{http://www.codetables.de/}.

\bibitem{Eguchi:2010ej}
T.~Eguchi, H.~Ooguri, and Y.~Tachikawa, {\it {Notes on the K3 Surface and the
  Mathieu group $M_{24}$}},  {\em Exper. Math.} {\bf 20} (2011) 91--96,
  [\href{http://arxiv.org/abs/1004.0956}{{\tt arXiv:1004.0956}}].

\bibitem{Anagiannis:2018jqf}
V.~Anagiannis and M.~C.~N. Cheng, {\it {TASI Lectures on Moonshine}},  {\em
  PoS} {\bf TASI2017} (2018) 010, [\href{http://arxiv.org/abs/1807.00723}{{\tt
  arXiv:1807.00723}}].

\bibitem{Dabholkar:2012nd}
A.~Dabholkar, S.~Murthy, and D.~Zagier, {\it {Quantum Black Holes, Wall
  Crossing, and Mock Modular Forms}},
  \href{http://arxiv.org/abs/1208.4074}{{\tt arXiv:1208.4074}}.

\bibitem{DHoker:2022dxx}
E.~D'Hoker and J.~Kaidi, {\it {Lectures on modular forms and strings}},
  \href{http://arxiv.org/abs/2208.07242}{{\tt arXiv:2208.07242}}.

\bibitem{Dobrev:1986hq}
V.~K. Dobrev, {\it {Characters of the Unitarizable Highest Weight Modules Over
  the $N=2$ Superconformal Algebras}},  {\em Phys. Lett. B} {\bf 186} (1987)
  43.

\bibitem{Kiritsis:1986rv}
E.~Kiritsis, {\it {Character Formulae and the Structure of the Representations
  of the $N=1$, $N=2$ Superconformal Algebras}},  {\em Int. J. Mod. Phys. A}
  {\bf 3} (1988) 1871.

\bibitem{Matsuo:1986cj}
Y.~Matsuo, {\it {Character Formula of C \ensuremath{<} 1 Unitary Representation
  of $N=2$ Superconformal Algebra}},  {\em Prog. Theor. Phys.} {\bf 77} (1987)
  793.

\bibitem{Eguchi:1988af}
T.~Eguchi and A.~Taormina, {\it {On the Unitary Representations of $N=2$ and
  $N=4$ Superconformal Algebras}},  {\em Phys. Lett. B} {\bf 210} (1988)
  125--132.

\end{thebibliography}\endgroup
\end{document}